\def\figurecaption#1#2{\noindent\hangindent 40pt
                       \hbox to 36pt {\small\sl #1 \hfil}
                       \ignorespaces {\small #2}}
\long\def\@makecaption#1#2{
  \vskip 10pt 
  \settowidth{\@tempdima}{#2}
  \ifdim\@tempdima>0pt
       \setbox\@tempboxa\hbox{#1: #2}
     \else
       \setbox\@tempboxa\hbox{#1 #2}
   \fi
   \ifdim \wd\@tempboxa >\hsize               % IF longer than one line:
       \begin{list}{#1:}{
       \settowidth{\labelwidth}{#1:}
       \setlength{\leftmargin}{\labelwidth}
       \addtolength{\leftmargin}{\labelsep}
        }\item #2 \end{list}\par   % Output in quote mode
     \else                                    %   ELSE  center.
       \hbox to\hsize{\hfil\box\@tempboxa\hfil}  
   \fi}
\theoremstyle{plain}\newtheorem{claim}[thm]{Claim}   % I changed the claim environment, so it uses the thm counter, as this is lmcs standard. If you do not agree, let us know
\newenvironment{claimproof}{\begin{proof}}{\end{proof}}
\newcommand{\NN}{\ensuremath \mathbb{N}}
\newcommand{\RR}{\ensuremath \mathbb{R}}
\newcommand{\cG}{\ensuremath{\mathcal G}}
\newcommand{\cL}{\ensuremath{\mathcal L}}
\newcommand{\cT}{\ensuremath{\mathcal T}}
\newcommand{\fI}{\ensuremath{\mathfrak I}}
\DeclareMathOperator{\dep}{dp}
\DeclareMathOperator{\wid}{wd}
\DeclareMathOperator{\tw}{tw}
\DeclareMathOperator{\td}{td}
\DeclareMathOperator{\CR}{CR}
\DeclareMathOperator{\monCR}{mon-CR}
\newcommand{\CRkq}{\CR^k_q}
\newcommand{\monCRkq}{\monCR^k_q}
\DeclareMathOperator{\eCR}{ECR}
\DeclareMathOperator{\moneCR}{mon-ECR}
\newcommand{\eCRkq}{\eCR^k_q}
\newcommand{\moneCRkq}{\moneCR^k_q}
\DeclareMathOperator{\escape}{comp}
\DeclareMathOperator{\escapeE}{comp_E}
\DeclareMathOperator{\qr}{qr}
\DeclareMathOperator{\free}{free}
\DeclareMathOperator{\dom}{dom}
\DeclareMathOperator{\img}{img}
\DeclareMathOperator{\HOM}{Hom}
\DeclareMathOperator{\partitions}{Part}
\newcommand{\partitionsEmpty}{\ensuremath{\partitions_\emptyset}}
\newcommand{\ext}{\ensuremath{\rightarrow}}
\newcommand{\symdiff}{\ensuremath{\mathbin{\triangle}}}
\newcommand{\complementOf}[1]{\ensuremath{\overline{#1}}}
\newcommand{\complementOfB}[1]{\complementOf{#1}}
\let\boundary\relax
\newcommand{\boundary}{\Delta}
\newcommand	{\homInd}			{homomorphism indistinguishability}
\newcommand	{\homIndable}		{homomorphism indistinguishable}
\newcommand	{\homDistCl}		{homomorphism distinguishing closed}
\newcommand	{\homDistClure}		{homomorphism distinguishing closure}
\newcommand{\TW}{\ensuremath{\mathcal{TW}}}
\newcommand{\TD}{\ensuremath{\mathcal{TD}}}
\newcommand{\Ekq}{\ensuremath{\cT_{q}^{k}}}
\newcommand{\EParam}[2]{\ensuremath{\cT_{#2}^{#1}}}
\newcommand{\Lkq}{\ensuremath{\cL_{q}^{k}}}
\newcommand{\LParam}[2]{\ensuremath{\cL_{#2}^{#1}}}
\newcommand{\GEkq}{\ensuremath{\cG\Lkq}}
\newcommand{\GEParam}[2]{\ensuremath{\cG\LParam{#1}{#2}}}
\newcommand{\GEkqLL}{\ensuremath{\cG\Ekq}}
\newcommand{\lFO}{\ensuremath{\mathsf{FO}}}
\newcommand{\lC} {\ensuremath{\mathsf C}}
\newcommand{\lL} {\ensuremath{\mathsf L}}
\newcommand{\lGC}{\ensuremath{\mathsf{GC}}}
\newcommand{\impl}{\ensuremath{\rightarrow}}
\newcommand{\tup}[1]{\ensuremath{\boldsymbol{#1}}}
\renewcommand{\vec}[1]{\ensuremath{\boldsymbol{#1}}}
\newcommand{\fktmid}{\ensuremath{\mid}}
\newcommand{\qg}[1]{\ensuremath{\mathfrak{#1}}}
\newcommand{\qgp}[3]{\ensuremath{\mathfrak{#1}[#3; #2]}}
\DeclareMathOperator{\cl}{cl}
\newcommand{\restrict}[1]{\vert_{#1}}
\newcommand{\parto}{\ensuremath{\rightharpoonup}}
\newcommand{\Fdec}{\ensuremath{(T,\beta)}}
\newcommand{\elimOrd}[2]{$(#1,#2)$-constructible}
\newcommand{\elimDepth}{elimination depth}
\DeclareMathOperator{\labfkt}{lab}
\newcommand{\labels}[1]{\ensuremath{LB(#1)}}
\newcommand{\grid}[2]{\ensuremath{G_{#1 \times #2}}}
\newcommand{\preTreeDec}{pre-tree-decomposition}
\newcommand{\PreTreeDec}{Pre-tree-decomposition}
\newcommand{\submod}{submodular}
\newcommand{\considered}{considered}
\newcommand{\consider}{consider}
\setlist[enumerate, 1]{font=\upshape, noitemsep, nolistsep}
\setlist[enumerate, 2]{font=\upshape, noitemsep, nolistsep}
\setlist[itemize, 1]{noitemsep, nolistsep,font=\upshape}
\setlist[itemize, 2]{noitemsep, nolistsep,font=\upshape}
\renewcommand\phi\varphi
\renewcommand\epsilon\varepsilon
\begin{document}
	
\title[Going deep and going wide]{Going deep and going wide: Counting logic and homomorphism indistinguishability over graphs of bounded treedepth and treewidth}
\titlecomment{This work extends articles \cite{fluck_et_al:LIPIcs.CSL.2024.27,adler_et_al:LIPIcs.MFCS.2024.6} which were presented at CSL and MFCS 2024. It is also part of second author's PhD thesis \cite{Fluck:993329}.}
\thanks{Tim Seppelt: European Union (CountHom, 101077083). Views and opinions expressed are however those of the author(s) only and do not necessarily reflect those of the European Union or the European Research Council Executive Agency. Neither the European Union nor the granting authority can be held responsible for them.}
\author[I.~Adler]{Isolde Adler \lmcsorcid{0000-0002-9667-9841}}[a]
\author[E.~Fluck]{Eva Fluck \lmcsorcid{0000-0002-9643-6081}}[b]
\author[T.~Seppelt]{Tim Seppelt \lmcsorcid{0000-0002-6447-0568}}[c]
\author[G.~L.~Spitzer]{Gian Luca Spitzer \lmcsorcid{0009-0008-0270-506X}}[d]

\address{University of Bamberg, Germany}
\email{isolde.adler@uni-bamberg.de}
\address{RWTH Aachen University, Germany}
\email{fluck@cs.rwth-aachen.de}
\address{IT-Universitetet i København, Denmark}
\email{tise@itu.dk}
\address{Université de Bordeaux, France}
\email{gian-luca.spitzer@u-bordeaux.fr}

\begin{abstract}
	\noindent We study the expressive power of first-order logic with counting quantifiers, especially the $k$-variable and quantifier-rank-$q$ fragment, using homomorphism indistinguishability. 
	Recently, Dawar, Jakl, and Reggio~(2021) proved that two graphs satisfy the same $k$-variable and quantifier-rank-$q$ sentences if and only if they are homomorphism indistinguishable over the class of graphs admitting a $k$-pebble forest cover of depth $q$. 
	After reproving this result using elementary means, we provide a graph-theoretic analysis of this graph class. 
	This allows us to separate it from the intersection of the class of all graphs of treewidth at most $k-1$ and the class of all graphs of treedepth at most $q$, provided that $q$ is sufficiently larger than $k$.
	
	We are able to lift this separation to a (semantic) separation of the respective homomorphism indistinguishability relations. 
	We do this by showing that the graph classes of all graphs of treedepth at most $q$ and of graphs admitting a $k$-pebble forest cover of depth $q$ are homomorphism distinguishing closed, as conjectured by Roberson~(2022).
	
	In order to prove Roberson's conjecture for the class of graphs admitting a $k$-pebble forest cover of depth $q$ we characterise the class in terms of a monotone Cops-and-Robber game.
	The crux is to prove that if Cop has a winning strategy then Cop also has a winning strategy that is monotone.
	To that end, we show how to transform Cop's winning strategy into a pre-tree-decomposition, which is inspired by decompositions of matroids, and then applying an intricate breadth-first `cleaning up' procedure along the pre-tree-decomposition (which may temporarily lose the property of representing a strategy), in order to achieve monotonicity while controlling the number of rounds simultaneously  across all branches of the decomposition via a vertex exchange argument.
\end{abstract}

\maketitle

\section{Introduction}
\label{ch:intro}
Since the 1980s, first-order logic with counting quantifiers $\mathsf{C}$ plays a decisive role in finite model theory.
In this extension of first-order logic with quantifiers $\exists^{\geq t} x$ (\enquote{there exists at least $t$ many $x$}), properties which can be expressed in first-order logic only with formulae of length depending on $t$ can be expressed succinctly.
Of particular interest are the $k$\nobreakdash-variable and quantifier-depth-$q$ fragments $\mathsf{C}^k$ and $\mathsf{C}_q$ of $\mathsf{C}$, which
enjoy rich connections to graph algorithms \cite{Dvorak_recognizing_2010}, algebraic graph theory \cite{dell_lovasz_2018,Grohe_homomorphism-tensors_2022}, optimisation \cite{Grohe_homomorphism-tensors_2022,roberson_lasserre_2023}, graph neural networks~\cite{morris_weisfeiler_2019,Xu_gnn_2019,grohe_logic_2021}, and category theory \cite{dawar_lovasz-type_2021,Abramsky_pebbling_2017}.

The intersection of these fragments, the fragment $\mathsf{C}^k_q \coloneqq \mathsf{C}^k \cap \mathsf{C}_q$ of all $\mathsf{C}$-formulae with $k$-variables and quantifier-depth $q$, has received much less attention \cite{rattan_weisfeiler-leman_2023}. In this work, we study the expressivity of $\mathsf{C}^k_q$ using homomorphism indistinguishability.

Homomorphism indistinguishability is an emerging framework for measuring the expressivity of equivalence relations comparing graphs, cf.\ the monograph \cite{seppelt_homomorphism_2024}. Two graphs $G$ and $H$ are \emph{homomorphism indistinguishable} over a graph class $\mathcal{F}$, in symbols $G \equiv_{\mathcal{F}} H$, if for all $F \in \mathcal{F}$ the number of homomorphisms from $F$ to $G$ is equal to the number of homomorphisms from $F$ to $H$. Many natural equivalence relations between graphs including isomorphism \cite{Lovasz_isomorphism_1967}, quantum isomorphism \cite{mancinska_quantum_2020}, cospectrality \cite{dell_lovasz_2018}, and feasibility of integer programming relaxations for graph isomorphism \cite{Grohe_homomorphism-tensors_2022,roberson_lasserre_2023} can be characterised as homomorphism indistinguishability relations over certain graph classes. Establishing such characterisations is intriguing since it allows to use tools from structural graph theory to study equivalence relations between graphs \cite{roberson_oddomorphisms_2022,seppelt_logical_2023}.
Furthermore, the expressivity of homomorphism counts themselves is of practical interest \cite{nguyen_graph_2020,grohe_word2vec_2020}.

\paragraph{Fragments of counting logic and graph decompositions.}
Equivalence with respect to $\mathsf{C}^{k}$ and $\mathsf{C}_q$ has been characterised by Dvo\v{r}ák~\cite{Dvorak_recognizing_2010} and Grohe~\cite{Grohe_counting_2020} as homomorphism indistinguishability over the classes $\mathcal{TW}_{k-1}$ of graphs of treewidth $\leq k-1$ and $\mathcal{TD}_q$ of graphs of treedepth $\leq q$, respectively.
Recently, Dawar, Jakl, and Reggio \cite{dawar_lovasz-type_2021} proved that two graphs satisfy the same $\mathsf{C}^k_q$-sentences if and only if they are homomorphism indistinguishable over the class $\Ekq \subseteq \mathcal{TW}_{k-1} \cap \mathcal{TD}_q$ of graphs admitting a $k$-pebble forest cover of depth $q$. Their proof builds on the categorical framework of game comonads developed in~\cite{Abramsky_pebbling_2017}.

As a first step, we reprove their result using elementary techniques inspired by Dvo\v{r}\'ak \cite{Dvorak_recognizing_2010}. The general idea is to translate between sentences in $\mathsf{C}^k_q$ and graphs from which homomorphisms are counted in an inductive fashion. By carefully imposing structural constraints, we are able to extend the original correspondence from \cite{Dvorak_recognizing_2010} between $\mathsf{C}^{k}$ and graphs of treewidth at most $k-1$ to $\mathsf{C}_q$ and graphs of treedepth at most $q$, reproducing a result of \cite{Grohe_counting_2020}, and finally to $\mathsf{C}^k_q$ and $\Ekq$. This simple and uniform proof strategy also yields the following result on guarded counting logic $\mathsf{GC}^k_q$.
Guarded counting logic plays a crucial role in the theory of properties of higher arity expressible by graph neural networks \cite{grohe_logic_2021}.
Towards this goal we introduce a new graph class called \GEkqLL, which is closely related to \Ekq.

\begin{restatable}{thm}{thmGuardedEkqGuardedLogicInformal}
	\label{thm:guardedEkq_vs_guarded-logic-informal}
	Two graphs are equivalent over the $k$-variable and quantifier-depth-$q$ fragment $\mathsf{GC}^k_q$ of guarded counting logic if and only if they are \homIndable\ over \GEkqLL.
\end{restatable}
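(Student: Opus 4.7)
The plan is to mirror the elementary inductive strategy that the paper has already used (following Dvořák) to reprove the $\mathsf{C}^k_q$ characterisation, adapting both the class $\GEkqLL$ and the translation to the guarded setting. First I would pin down the graph class $\GEkqLL$: I expect it to consist of graphs admitting a $k$-pebble forest cover of depth $q$ with an additional \emph{guardedness condition}, namely that whenever the forest cover bifurcates or extends by placing a new pebble at a vertex $v$ below an existing bag, the new vertex $v$ must be adjacent (in the underlying graph) to the witnessing pebbled vertex sitting immediately above. This edge requirement on the forest cover is exactly what mirrors the syntactic restriction of guarded counting logic, where every new quantifier $\exists^{\geq t} y\, (E(x,y) \wedge \varphi)$ is forced to live inside the neighbourhood of a currently active variable.

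With $\GEkqLL$ defined, I would prove the equivalence by a two-directional induction. For the direction from logic to graphs, I would assign to each $\mathsf{GC}^k_q$ formula $\varphi(\bar{x})$ a finite family of labelled pattern graphs $F^{\varphi}_i$ in $\GEkqLL$, with distinguished vertices corresponding to the free variables of $\varphi$, such that for every graph $G$ the tuple $(\hom(F^{\varphi}_i,G))_i$ determines whether $G \models \varphi$. The atomic base cases are trivial; Boolean combinations amount to taking disjoint unions and identifications of labelled patterns (using the standard inclusion-exclusion trick to handle negation in the counting setting); for the guarded counting quantifier I would add a new vertex $y$, glue it along an edge to the current $x$-pebbled vertex, and mark $y$ as the freshly pebbled leaf. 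Since a new vertex is always introduced \emph{through an edge} to an already pebbled vertex, the resulting pattern remains in $\GEkqLL$, and the quantifier depth and pebble usage of the formula match the depth and width of the resulting forest cover.

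For the converse, I would induct on the guarded $k$-pebble forest cover of a given $F \in \GEkqLL$: starting from the root, each pebbling step yields a guarded quantifier $\exists^{\geq t} y\, (E(x,y) \wedge \cdot)$, where $t$ is chosen to count homomorphic extensions in accordance with a standard Dvořák-style counting argument, and branching in the forest is translated to conjunctions. The guardedness clause in the definition of $\GEkqLL$ is precisely what guarantees that the edge $E(x,y)$ exists and can serve as the guard of the introduced quantifier. The main technical obstacle I anticipate is twofold: first, getting the right bookkeeping so that homomorphism counts (which are not mere truth values) can be reconstructed from $\mathsf{GC}^k_q$-types — this will require a careful use of inclusion-exclusion over tuples of pebbled vertices, as in the existing $\mathsf{C}^k_q$ proof; and second, formulating the guardedness condition on forest covers so that it is strong enough to force an edge-guard at every quantification step, yet loose enough to allow every $\mathsf{GC}^k_q$ formula to be realised. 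Once these are settled, the inductive steps for Boolean connectives and guarded counting quantifiers should go through in direct parallel with the unguarded case, making the theorem a structural refinement of the $\Ekq$-characterisation already obtained in the paper.
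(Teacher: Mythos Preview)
Your inductive translation idea---building labelled patterns from guarded formulae and, conversely, guarded formulae from a guarded decomposition---is exactly the paper's approach, so on that level you are on track. Two substantive points, however, deserve attention.

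First, your proposed definition of $\GEkqLL$ via a guarded pebble forest cover is not the one the paper uses. The paper defines $\GEkq$ through \emph{construction trees}: a $k$-construction tree of elimination depth $q$ in which a label may only be removed when the vertex carrying it has a labelled neighbour. The class $\GEkqLL$ is then obtained by stripping all labels from graphs in $\GEkq$. Your forest-cover formulation is morally dual (introducing a pebble below an edge vs.\ removing a label that still sees a labelled neighbour), and for the purposes of the inductive translation it may well be equivalent, but you would need to establish that equivalence rather than assume it.

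Second, and more seriously, your proposal does not address a structural feature of guarded counting logic that the paper has to work around explicitly: every $\mathsf{GC}$-formula has at least one free variable, because a guarded quantifier always refers to a previously named vertex. Consequently, there are no $\mathsf{GC}$-\emph{sentences}, and one cannot simply say ``$G$ and $H$ satisfy the same $\mathsf{GC}^k_q$-sentences''. The paper therefore defines $\mathsf{GC}^k_q$-equivalence on unlabelled graphs via the existence of a bijection $f\colon V(G)\to V(H)$ such that $(G,v)$ and $(H,f(v))$ agree on all one-free-variable $\mathsf{GC}^k_q$-formulae. Your inductive argument, as stated, would yield the \emph{pointed} result---$(G,v)$ and $(H,w)$ satisfy the same $\mathsf{GC}^k_q$-formulae iff they are homomorphism indistinguishable over the labelled class $\GEkq$---but lifting this to unlabelled graphs and the unlabelled class $\GEkqLL$ requires an additional step. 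The paper handles this with a folklore interpolation lemma: the family of functions $v\mapsto\hom(F,G(1\to v))$, $w\mapsto\hom(F,H(1\to w))$ is closed under pointwise products (since $\GEkq$ is closed under the labelled graph product), and agreement of their sums over $V(G)$ and $V(H)$ then forces the existence of a bijection matching the individual values. Without this argument, your proof would stop at pointed graphs and not reach the statement as formulated.
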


\paragraph{Separating $\Ekq$ and $\mathcal{TW}_{k-1} \cap \mathcal{TD}_q$.}
The main contribution of this work, however, concerns the relationship between the graph class $\Ekq$ and the class $\mathcal{TW}_{k-1} \cap \mathcal{TD}_q$ of graphs which have treewidth at most $k-1$ \emph{and} treedepth at most $q$.
Given the results of \cite{Dvorak_recognizing_2010,Grohe_counting_2020}, one might think that elementary equivalence with respect to sentences in $\mathsf{C}^{k}_q = \mathsf{C}^{k} \cap \mathsf{C}_q$ is characterised by homomorphism indistinguishability with respect to $\mathcal{TW}_{k-1} \cap \mathcal{TD}_q$. The central result of this paper asserts that this intuition is wrong.

As a first step towards this, we prove, building on \cite{Furer_rounds_2001}, that the graph class $\Ekq$ and $ \TW_{k-1}\cap\TD_q$ are distinct if $q$ is sufficiently larger than $k$.
In other words, the existence of both a tree-decomposition of small width and the existence of a forest cover of small depth does not guarantee the existence of a decomposition which has both small width and depth simultaneously.

\begin{restatable}{thm}{thmEkqTwTdInformal}
	\label{thm:Ekq_tw-td-informal}
	For $q$ sufficiently larger than $k$, it holds that $\Ekq\subsetneq \TW_{k-1}\cap\TD_q$.
\end{restatable}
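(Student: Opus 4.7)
The inclusion $\Ekq \subseteq \TW_{k-1} \cap \TD_q$ is immediate from the definitions: a $k$-pebble forest cover of depth $q$ is an elimination forest of depth $\leq q$ and yields a tree decomposition of width $\leq k-1$. The content of the theorem is strictness, so the task is to construct, for each $k$ and every $q$ sufficiently large relative to $k$, an explicit witness graph $G_{k,q} \in \TW_{k-1} \cap \TD_q$ with $G_{k,q} \notin \Ekq$.

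Building on the constructions of \cite{Furer_rounds_2001}, I would take $G_{k,q}$ to be an assembly of small treewidth-$(k-1)$ gadgets arranged in a long, chain-like pattern. Membership in $\TW_{k-1}$ would be witnessed by stringing the individual gadget decompositions along a path-like skeleton. Membership in $\TD_q$ would be witnessed differently: by ignoring the chain structure entirely and applying a recursive balanced splitting of the gadget sequence, producing an elimination forest whose depth grows only logarithmically in the number of gadgets, so that choosing $q$ above this logarithmic threshold suffices. These two decompositions are structurally incompatible by design: one is long and thin, the other short and wide, and the point of the construction is that no single decomposition realises both properties simultaneously.

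The main step, and the principal obstacle, is to show $G_{k,q} \notin \Ekq$. The plan is to assume for contradiction that $G_{k,q}$ admits a $k$-pebble forest cover $F$ of depth $q$, and to track, for each node of $F$, a potential measuring how many gadgets lie in the subgraph associated to the descendants of that node. The root of $F$ carries a separator of size at most $k$, and the combinatorics of the Fürer-style gadgets should guarantee that any such small separator reduces the potential only mildly. Iterating this estimate down the $q$ levels of $F$ then yields an upper bound on the number of gadgets the cover can handle; calibrating $G_{k,q}$ to have strictly more gadgets than this bound contradicts the assumed existence of $F$. The delicate point is verifying the ``small separator, small potential drop'' property quantitatively for the chosen gadgets, and simultaneously ensuring that the balanced elimination forest of depth $q$ from the second paragraph really does exist—this balancing act between the two decompositions is exactly what forces the qualifier ``$q$ sufficiently larger than $k$'' in the statement.
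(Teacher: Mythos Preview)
Your high-level strategy matches the paper's: exhibit a graph with a long thin tree-decomposition (witnessing small treewidth) and a short wide elimination forest (witnessing small treedepth), then argue via a progress bound that any width-$(k-1)$ decomposition must be deep. The paper takes the $(k-1)\times\ell$ grid $\grid{k-1}{\ell}$ as witness (a path when $k=2$), and phrases your potential argument as a Robber strategy in the game $\CRkq$ via \cref{thm:Ekq-cops}; only the easy direction of that equivalence is needed here. Your ``small separator, small potential drop'' step is exactly the content of the grid separator lemmas \cref{lem:third-comp-single,lem:size-non-good}: any $\leq k$ vertices in $\grid{k-1}{\ell}$ leave at most two good components and at most one further singleton, so the Robber's component shrinks by at most two per round, forcing $\Theta((k-1)\ell)$ rounds while the treedepth is only $\Theta((k-1)\log\ell)$.

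One genuine caution about your construction: the phrase ``assembly of small treewidth-$(k-1)$ gadgets arranged in a long, chain-like pattern'' is risky if taken literally. A chain of $\ell$ copies of $K_k$ glued along $(k-1)$-cliques has treewidth $k-1$ and treedepth $\Theta(k\log\ell)$, but it \emph{also} lies in $\EParam{k}{\Theta(k\log\ell)}$: a single $K_k$ in the middle is a $k$-vertex separator that bisects the chain, so $k$ cops win in $O(k\log\ell)$ rounds, matching the treedepth. There is no gap. The grid works precisely because it is \emph{not} one-dimensional: to separate two full columns the cops must occupy one vertex per row, and shifting such a separator by one column costs $\Theta(k)$ rounds. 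This two-dimensional rigidity is what drives the linear-versus-logarithmic gap and is the real combinatorial content of the proof; any gadget you choose must replicate it.
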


However, this, let us say syntactical, separation of the graph classes $\Ekq$ and $\mathcal{TW}_{k-1} \cap \mathcal{TD}_q$ does not suffice to separate their homomorphism indistinguishability relations semantically.\footnote{Homorphisms counts $\hom(F, -)$ from a graph $F$ can be regarded as logical queries to a graph $G$, which are evaluated as the number of homomorphisms $F \to G$. Analogously, a counting logic sentence $\phi \in \mathsf{C}$ allows to query a graph $G$ by evaluating whether $G$ satisfies $\phi$. In fact, we will show in \autoref{lem:homcounts_in_ckq} and \autoref{cor:qg_from_ckq} that these perspectives are two sides of the same coin.
	The pattern $F$ is the syntax of the query while the function $\hom(F, -)$ is its semantics. For example, the patterns $K_1$ and $K_1 + K_1$ are syntactically different while, semantically, $\hom(K_1, G) = |V(G)|$ and $\hom(K_1 + K_1, G) = |V(G)|^2$ essentially encode the same information about $G$.}
In fact, it could well be that all graphs which are homomorphism indistinguishable over $\Ekq$ are also homomorphism indistinguishable over $\mathcal{TW}_{k-1} \cap \mathcal{TD}_q$.

That distinct graph classes induce---under certain mild assumptions---distinct homomorphism indistinguishability relations was recently conjectured by Roberson~\cite{roberson_oddomorphisms_2022}. 
His conjecture asserts that every graph class which is closed under taking minors and disjoint unions is homomorphism distinguishing closed. Here, a graph class $\mathcal{F}$ is \emph{homomorphism distinguishing closed} if it satisfies the following maximality condition: For every graph $F \not\in \mathcal{F}$, there exist two graphs $G$ and $H$ which are homomorphism indistinguishable over $\mathcal{F}$ but have different numbers of homomorphism from $F$.

Since $\Ekq$, $\mathcal{TW}_{k-1}$, and $\mathcal{TD}_q$ are closed under disjoint unions and minors, the confirmation of Roberson's conjecture would readily imply the semantic counterpart of \autoref{thm:Ekq_tw-td-informal}. Unfortunately, Roberson's conjecture is wide open and has been confirmed only for the class of all planar graphs \cite{roberson_oddomorphisms_2022}, $\mathcal{TW}_{k-1}$ \cite{neuen_homomorphism-distinguishing_2023}, graphs of degree at most two \cite{roberson_oddomorphisms_2022}, paths \cite{roberson_oddomorphisms_2022}, $\{K_{2,h}, K_4\}$-minor-free graphs \cite{seppelt_homomorphism_2024},
and for graph classes which are essentially finite \cite{seppelt_logical_2023}, cf.\ \cite[Chapter 6]{seppelt_homomorphism_2024}.
We add to this short list of examples:

\begin{restatable}{thm}{tdclosed}
	\label{thm:td-closed}
	For $q \geq 1$, the class $\mathcal{TD}_q$ is homomorphism distinguishing closed.
\end{restatable}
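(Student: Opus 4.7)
The plan is to combine Grohe's characterisation---that $G \equiv_{\mathcal{TD}_q} H$ if and only if $G$ and $H$ satisfy the same sentences of $\mathsf{C}_q$---with a Cai--F\"urer--Immerman-style construction. By that characterisation, it suffices, given any $F$ with $\td(F) > q$, to produce graphs $G, H$ which are $\mathsf{C}_q$-equivalent but satisfy $\hom(F, G) \neq \hom(F, H)$. Since $\mathcal{TD}_q$ is closed under disjoint unions and minors, we may assume $F$ is connected and minor-minimal among graphs violating $\td(F) \leq q$.

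For the candidate distinguishing pair, I would take $G$ and $H$ to be the two CFI-graphs $\mathrm{CFI}(F, 0)$ and $\mathrm{CFI}(F, 1)$ obtained by replacing every vertex of $F$ with a standard CFI vertex gadget and every edge by an edge gadget, with $H$ differing from $G$ by a single twisted edge. That $\hom(F, G) \neq \hom(F, H)$ follows from the classical parity argument: any homomorphism $F \to G$ induces a locally consistent orientation at each vertex gadget, and the global consistency constraint over the cycle space of $F$ detects the twist, giving different counts on $G$ and $H$.

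The key step is to verify that $G \equiv_{\mathsf{C}_q} H$ via the round-bounded bijective Ehrenfeucht--Fraisse game (no pebble limit, exactly $q$ rounds), which characterises $\equiv_{\mathsf{C}_q}$. The hypothesis $\td(F) > q$ supplies, via the standard cops-and-robber characterisation of treedepth, a winning strategy for the Robber in the $q$-round unbounded-cops node-search game on $F$. I would lift this Robber strategy into a Duplicator strategy in the bijective game on $(G, H)$ as follows: in each round Duplicator responds with a bijection built by composing local automorphisms on the vertex gadgets corresponding to vertices of $F$ currently evaded by Robber, thereby sliding the twist along edges of $F$ that remain in Robber's free territory. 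After $q$ rounds the twist has never been exposed, so Duplicator wins.

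The main obstacle is making this translation precise. Unlike the treewidth situation of Neuen, where one controls the number of pebbles simultaneously on the board, here we control only the total number of rounds, and so the coordinated application of local gadget automorphisms over $q$ rounds must continuously conceal a single twisted edge. The delicate invariant to maintain is that at every moment the twist lies on an edge of $F$ whose endpoints Robber can reach under the current cop configuration; this in turn requires that the CFI gadgets act transitively enough on orientations to implement Duplicator's bijection from one round to the next. Once the invariant is in place, the count equality for every $F' \in \mathcal{TD}_q$ follows by translating $F'$ and its elimination forest into a Spoiler strategy that would have to catch the twist within $\td(F') \leq q$ rounds, contradicting the Robber strategy.
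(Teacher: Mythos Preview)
Your approach is essentially the one the paper takes. The paper actually proves the more general statement that $\Ekq$ is homomorphism distinguishing closed (\cref{thm:ekq-closed}) and then derives \cref{thm:td-closed} in one line from the observation $\TD_q = \EParam{q}{q}$; your sketch is precisely the specialisation of that argument to $k=q$. Two remarks on implementation. First, the paper uses Roberson's variant of the CFI construction (inner vertex gadgets only, no edge gadgets), for which the inequality $\hom(F, F_0)\neq\hom(F,F_1)$ is a clean consequence of his oddomorphism machinery (\cref{lem:roberson3.7}); with the ``standard'' CFI graphs you mention, the parity/count argument needs more care, and minor-minimality plays no role either way. Second, the obstacle you anticipate is not harder than Neuen's treewidth case but easier: in the $q$-round bijective game with no pebble limit, Spoiler never lifts pebbles, which exactly matches the stationary cops in the treedepth game, so the invariant ``the twist lies in Robber's component'' is maintained by the same isomorphism-along-a-path trick (\cref{lem:neuen4.3}) without any new bookkeeping.
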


Hence, as the intersection of homomorphism distinguishing closed graph classes \cite{roberson_oddomorphisms_2022}, $\TW_{k-1}\cap\TD_q$ is homomorphism distinguishing closed as well.
Finally, we show that $\Ekq$ is also homomorphism distinguishing closed.
\begin{restatable}{thm}{ekqclosed}
	\label{thm:ekq-closed}
	For $k,q \geq 1$, the class $\Ekq$ is homomorphism distinguishing closed.
\end{restatable}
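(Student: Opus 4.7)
The plan follows the blueprint given in the abstract. The first step is a game-theoretic characterisation of $\Ekq$: I would introduce a Cops-and-Robber game played on $F$ with $k$ cops and $q$ rounds, and show that its \emph{monotone} variant characterises membership in $\Ekq$, namely, the cops have a monotone winning strategy if and only if $F$ admits a $k$-pebble forest cover of depth $q$. The easy direction turns a forest cover into a monotone strategy by following root-to-leaf paths, while the converse extracts a cover from the cops' monotone moves.

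The technical heart of the argument is the equivalence between monotone and non-monotone winning strategies for the cops. Following the abstract, my plan is to first record a non-monotone winning strategy as a strategy tree, then extract from it a \preTreeDec\ inspired by matroid branch-decompositions, and finally clean it up in a breadth-first sweep so as to render it monotone. The delicate point is that each local cleaning step may destroy the property that the decomposition represents a valid strategy; the \preTreeDec\ abstraction is designed precisely to provide the slack needed to restore the strategy property once enough cleaning has been done. Crucially, the round count of $q$ must be controlled uniformly across all branches simultaneously. A naive cleanup easily blows up the depth in some branch, so I would use a vertex exchange argument to rebalance depths as the decomposition is processed level by level. This step is where I expect the main difficulty to lie, and the invariants controlling both the pebble budget and the round budget must be designed with considerable care.

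Having established the monotone/non-monotone equivalence, $F \notin \Ekq$ is equivalent to the robber winning the (non-monotone) game on $F$. For the HDC conclusion, given such an $F$, I would use the robber's winning strategy to guide a CFI-type gadget replacement, producing graphs $G$ and $H$ such that $F$ can detect the CFI twist, giving $\hom(F,G) \neq \hom(F,H)$, because the robber's escapes correspond to parity obstructions visible to $F$. For the indistinguishability direction, given any $F' \in \Ekq$, the cops have a monotone winning strategy on $F'$; processing $F'$ along the associated $k$-pebble forest cover of depth $q$ yields a bijection between $\HOM(F',G)$ and $\HOM(F',H)$, since the bounded pebble number guarantees that only a $k$-sized interface must be tracked at each step (rendering the CFI twist invisible) and the bounded depth ensures termination in $q$ rounds. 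This parallels Neuen's argument for $\TW_{k-1}$, but the simultaneous depth bound provided by the monotone strategy is precisely what upgrades the conclusion to HDC for $\Ekq$ rather than merely $\TW_{k-1}$.
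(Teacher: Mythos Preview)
Your proposal is largely correct and matches the paper on the hard technical core: the Cops-and-Robber characterisation of $\Ekq$, and the monotone/non-monotone equivalence via \preTreeDec s with a breadth-first cleanup controlled by a vertex exchange argument. The CFI-type construction and the fact that $\hom(F,G_0)\neq\hom(F,G_1)$ also match (the paper takes this directly from Roberson's lemma).

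The indistinguishability direction, however, diverges from the paper. You propose, for each $F'\in\Ekq$, to process $F'$ along its $k$-pebble forest cover and build a direct bijection $\HOM(F',G)\to\HOM(F',H)$. The paper does \emph{not} do this. Instead it argues uniformly: the robber's non-monotone winning strategy on the excluded graph $F$ is converted into a Duplicator winning strategy in the $q$-round bijective $k$-pebble game on $(G_0,G_1)$, giving $G_0\equiv_{\mathsf{C}^k_q}G_1$; the separately established characterisation $\equiv_{\Ekq}\,=\,\equiv_{\mathsf{C}^k_q}$ then yields $G_0\equiv_{\Ekq}G_1$ for all $F'\in\Ekq$ at once. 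In particular, no forest cover of any $F'$ enters the argument.

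Your sketch also blurs a point worth flagging. The sentence ``the bounded pebble number guarantees that only a $k$-sized interface must be tracked at each step (rendering the CFI twist invisible)'' is not quite right: a $k$-sized interface on its own does not hide the twist. What hides it is that at every one of the $q$ rounds the robber on $F$ (the base graph of the CFI construction, not $F'$) can move the twist into his current escape component, away from the projections of the pebbled vertices. The $q$ in the depth of $F'$'s forest cover and the $q$ in the robber's round budget on $F$ must be synchronised, and it is the latter that does the work. Your direct-bijection approach can be made to go through, but doing so essentially reimplements the bijective pebble game inside the homomorphism-counting framework; the paper's route via the logical characterisation is more modular and avoids this bookkeeping.
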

Thereby, we lift the syntactic separation from \autoref{thm:Ekq_tw-td-informal} to a separation of the homomorphism indistinguishability relations $\equiv_{\Ekq}$ and $\equiv_{\TW_{k-1}\cap\TD_q}$ for all $q$ sufficiently larger than $k$.
\begin{restatable}{cor}{corSemantic}\label{cor:semantic}
	For $q$ sufficiently larger than $k$, $\equiv_{\Ekq}$ is strictly coarser than $\equiv_{\TW_{k-1}\cap\TD_q}$.
\end{restatable}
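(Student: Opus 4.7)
The plan is to combine the syntactic separation already established in Theorem~\ref{thm:Ekq_tw-td-informal} with the homomorphism distinguishing closedness of $\Ekq$ from Theorem~\ref{thm:ekq-closed}. Given these two ingredients, the corollary is essentially a one-line syllogism, so most of the work consists in unpacking the definitions correctly.

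First I would record the easy inclusion of equivalence relations: since $\Ekq \subseteq \TW_{k-1}\cap\TD_q$, any pair $G,H$ with $G \equiv_{\TW_{k-1}\cap\TD_q} H$ agrees on homomorphism counts from every $F \in \TW_{k-1}\cap\TD_q$, and \emph{a fortiori} from every $F \in \Ekq$, so that $G \equiv_{\Ekq} H$. This shows, without any extra hypothesis on $k,q$, that $\equiv_{\Ekq}$ is at least as coarse as $\equiv_{\TW_{k-1}\cap\TD_q}$.

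For strictness I would invoke the hypothesis that $q$ is sufficiently larger than $k$ and apply Theorem~\ref{thm:Ekq_tw-td-informal} to choose a witness graph $F \in (\TW_{k-1}\cap\TD_q) \setminus \Ekq$. By Theorem~\ref{thm:ekq-closed}, $\Ekq$ is homomorphism distinguishing closed, so the maximality condition applied to this $F \notin \Ekq$ yields graphs $G$ and $H$ with $G \equiv_{\Ekq} H$ but $\hom(F,G) \neq \hom(F,H)$. Since $F \in \TW_{k-1}\cap\TD_q$, this inequality directly witnesses $G \not\equiv_{\TW_{k-1}\cap\TD_q} H$, so the pair $(G,H)$ separates the two equivalence relations, and the coarseness from the first paragraph becomes strict.

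The main obstacles have all been handled upstream: the existence of a graph in $(\TW_{k-1}\cap\TD_q)\setminus\Ekq$ is the content of Theorem~\ref{thm:Ekq_tw-td-informal}, which rests on the construction building on Fürer's work, and the existence of suitable distinguishing graphs $G,H$ is the content of the highly non-trivial Theorem~\ref{thm:ekq-closed}, whose proof goes through the monotone Cops-and-Robber characterisation and the pre-tree-decomposition cleaning procedure. Once those are in hand, no additional structural or algorithmic work is needed here.
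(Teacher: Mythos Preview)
Your proof is correct and follows essentially the same strategy as the paper: combine the syntactic separation (Theorem~\ref{thm:Ekq_tw-td-informal}) with homomorphism distinguishing closedness. There is one small but pleasant difference: the paper argues via the general principle that $\equiv_{\mathcal{F}_1} = \equiv_{\mathcal{F}_2}$ iff $\cl(\mathcal{F}_1) = \cl(\mathcal{F}_2)$, and therefore invokes closedness of \emph{both} $\Ekq$ and $\TW_{k-1}\cap\TD_q$ (the latter via Theorem~\ref{thm:td-closed}, \cite{neuen_homomorphism-distinguishing_2023}, and Roberson's lemma that intersections of closed classes are closed). Your argument is more economical: by picking the witness $F \in (\TW_{k-1}\cap\TD_q)\setminus\Ekq$ first and then applying closedness of $\Ekq$ to that specific $F$, you only need Theorem~\ref{thm:ekq-closed} and can dispense with closedness of the intersection entirely.
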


The proofs of Theorems \ref{thm:td-closed} and \ref{thm:ekq-closed} follow a strategy laid out by Neuen \cite{neuen_homomorphism-distinguishing_2023}.
Given a graph $G \not\in \TW_{k-1}$, he shows that the Cai--Fürer--Immerman graphs $G_0$ and $G_1$ of $G$ are homomorphism indistinguishable over $\TW_{k-1}$ and admit a different number of homomorphisms from $G$.
CFI graphs \cite{Cai_optimal_1992,roberson_oddomorphisms_2022} are highly similar graphs constructed from $G$ by replacing edges and vertices by suitable gadgets.
The difference between $G_0$ and $G_1$ is a \enquote{twist} located at the gadget corresponding to some vertex of $G$.
In order to show that $G_0$ and $G_1$ are equivalent in some counting logic fragments,
one must argue that the Duplicator player has a winning strategy in a suitable model comparison game played on $G_0$ and $G_1$.
This strategy essentially amounts to Duplicator hiding the twist from its opponent who tries to locate it.
Thus, the model comparison game on $G_0$ and $G_1$ reduces to a graph searching game played on the graph $G$.

\paragraph{Monotonicity of Cops-and-Robber games.}
Both treewidth and treedepth can be characterised using \emph{graph searching games} as introduced by Parsons and Petrov 
in \cite{Parsons_pursit-evasion_1978,Parsons_searchnumber_1978,Petrov_pursuit_1982}.
In these games a fugitive moving on a graph tries to evade capture by a set of searchers.
Different rules on the movement of both the searchers and the fugitive lead to characterizations of different parameters in directed and undirected graphs \cite{Bienstock_CRgames_1989, Bienstock_pathwidth_1991, Fomin_NCR_2009, Makedon_cutwidth_1989, Mazoit_monotonicity_2008, Johnson_dtw_2001, Nesetril_CRtd_2006, Seymour_graph_1993}.
For treewidth this is the \emph{Cops-and-Robber game} as introduced in \cite{Bienstock_CRgames_1989, Seymour_graph_1993}.
We introduce a suitable variant of this game and we use it to characterise the graph class $\Ekq$.
The equivalence of the first and last assertion of \autoref{thm:Ekq_equiv-informal} is what allows us to prove \autoref{thm:ekq-closed}.

\begin{restatable}{thm}{thmEkqEquivInformal}
	\label{thm:Ekq_equiv-informal}
	For any graph $G$ and any $k,q\in \NN$, the following are equivalent:
	\begin{enumerate}
		\item There exists a $k$-pebble forest cover of depth $\leq q$ of $G$, i.e.\ $G \in \Ekq$.
		\item There exists a tree-decomposition of $G$ with width $\leq k-1$ and depth $\leq q$.
		\item Cop wins the monotone $k$-cops $q$-round Cops-and-Robber game on $G$.
		\item Cop wins the $k$-cops $q$-round Cops-and-Robber game on $G$.
	\end{enumerate}
\end{restatable}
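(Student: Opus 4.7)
\medskip
\noindent\textbf{Proof proposal.} The plan is to close the cycle (1) $\Rightarrow$ (2) $\Rightarrow$ (3) $\Rightarrow$ (4) $\Rightarrow$ (1). The first three implications are relatively direct structural translations. The main obstacle, flagged in the abstract, is (4) $\Rightarrow$ (1): one must convert an arbitrary (possibly non-monotone) winning Cop strategy using $k$ cops and $q$ rounds into a $k$-pebble forest cover of $G$ of depth $\leq q$, and in particular show monotonicity can be achieved without blowing up either resource.

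For (1) $\Rightarrow$ (2), a $k$-pebble forest cover is essentially a rooted forest $F$ on $V(G)$ whose closure contains $E(G)$, together with an assignment of at most $k$ pebble-colours to vertices such that along every root-to-leaf path each colour class is an interval. Adding a dummy root to merge trees and defining the bag at node $v$ to consist of $v$ together with, for each pebble colour, the deepest strict ancestor of $v$ carrying that colour, yields a tree-decomposition of width $\leq k-1$ and depth $\leq q$; one verifies the connectivity condition using the interval property.

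For (2) $\Rightarrow$ (3), given a rooted tree-decomposition $\Fdec$ of width $\leq k-1$ and depth $\leq q$, Cops play by traversing $T$ top-down, always occupying the current bag. When Robber is forced into the subtree rooted at a child $t'$, Cops descend to $t'$; the bag-intersection property ensures that any vertex removed is never re-entered by Robber, which yields monotonicity. The depth of $T$ bounds the number of rounds by $q$. Implication (3) $\Rightarrow$ (4) is immediate.

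The delicate step is (4) $\Rightarrow$ (1). I would first represent a winning Cop strategy by a game tree whose nodes carry cop-positions of size $\leq k$ and whose root-to-leaf length is $\leq q$, and extract from it a \preTreeDec\ of $G$ in the sense introduced in the paper (inspired by matroid decompositions). The core technical step is the breadth-first \emph{cleaning-up} procedure: processing the \preTreeDec\ level by level, one locally modifies bags to eliminate violations of the interval property, i.e.\ vertices that leave and re-enter a bag along some root-to-leaf path. Intermediate stages may temporarily fail to represent a strategy, so the argument must be global: a vertex-exchange argument compares different branches and simultaneously bounds the depth of every branch by $q$ while enforcing monotonicity. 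Once no interval violations remain, the cleaned \preTreeDec\ directly yields a rooted forest on $V(G)$ whose closure contains $E(G)$, and the pebble colours recovered from the bag structure give a $k$-pebble forest cover of depth $\leq q$, i.e.\ witness $G \in \Ekq$. The hardest technical point is controlling depth across all branches simultaneously during the cleaning, since local modifications on one branch can a priori lengthen another.
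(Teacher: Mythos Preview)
Your cycle $(1)\Rightarrow(2)\Rightarrow(3)\Rightarrow(4)\Rightarrow(1)$ and your identification of the hard step $(4)\Rightarrow(1)$ via a \preTreeDec, breadth-first cleaning, and a vertex-exchange argument for depth are exactly what the paper does. The paper organises the cycle slightly differently---it proves $(1)\Leftrightarrow(2)$ separately (\cref{thm:Ekq_equiv}) and then closes $(1)\Rightarrow(3)\Rightarrow(4)\Rightarrow(2)$ (\cref{thm:Ekq-cops}), going from the forest cover directly to the monotone game rather than via the tree-decomposition, and from the non-monotone game to a tree-decomposition rather than directly to a forest cover---but this is cosmetic.

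Two technical points need repair. First, your description of a $k$-pebble forest cover as having colour classes that are \emph{intervals} along every root-to-leaf path is not the definition: pebbles may repeat along a path; axiom~\ref{ax:pebble-forest-pebble} only forbids reusing $p(u)$ strictly between $u$ and $v$ when $uv\in E(G)$. Consequently your bag in $(1)\Rightarrow(2)$, ``$v$ together with the deepest strict ancestor of each colour'', can have size $k+1$ and gives width $k$, not $k-1$. The paper's bag is $\beta(t)=\{u\preceq t\mid p(u)\neq p(w)\text{ for all }u\prec w\preceq t\}$, of size at most $k$.

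Second, in $(4)\Rightarrow(1)$ you treat only depth, but the cleaning also threatens the width: once bags are redefined as boundaries of the modified cone partitions they are no longer literal cop positions and could a priori grow. The paper frames the cleaning not as removing interval violations in the bags but as making the cones \emph{exact}; at each node it chooses the local extension minimising the boundary, and then a \emph{submodularity} argument for the boundary function (\cref{sec:predec,sec:make-exact}) shows that propagating this choice through the already-processed subtree cannot enlarge any bag. The vertex exchange you mention handles depth (\cref{lem:depth}); submodularity is the separate, equally essential, ingredient for width (\cref{lem:width}).
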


Note that we use `Cop wins' for `Cop has a winning strategy'.
As it is the case for treewidth,
a tree-decomposition straightforwardly gives rise to a winning strategy for Cop.
The challenge is to prove the converse direction, i.e.\ 
that a winning strategy for Cop gives rise to a tree-decomposition.

To that end, it is crucial to show that the Cops-and-Robber game can, without loss of generality, be restricted to its \emph{monotone} version, 
that is, Cop always has a winning strategy in which a previously cleared area never needs to be searched again.
The monotonicity of a graph searching game is a non-trivial property.
There are games where the two variants are equivalent, such as the games corresponding to treewidth \cite{Seymour_graph_1993} and tree depth \cite{Giannopoulou_lifo-search_2012}, as well as games where they are not, such as games corresponding to directed treewidth \cite{Kreutzer_digraph-decomp_2O11} or hypertreewidth \cite{Adler_monotone-marshal_2004, Gottlob_marshals_2003}.
For treewidth the proof of monotonicity in \cite{Seymour_graph_1993} builds on the dual concept of brambles.
In \cite{Mazoit_monotonicity_2008} the authors give a proof of this monotonicity that does not rely on any dual object, \cite{Adler_games_2009} gives a similar proof for a more general variant of the Cops-and-Robber game.
Building on these proof ideas we can show that the game variant that characterises the graph class \Ekq\ is monotone.

In order to do that, we introduce the notion of \emph{\preTreeDec}, based on tree-decompositions of matroids \cite{Hlineny_mtw_2006, Hlineny_addendum-mtw_2009}.
To characterise treewidth using the matroid decomposition, one equips a tree with a bijection from its leafs to the edges of the graph.
Then every node of the tree induces a partition of the edges and the width of such a decomposition is the maximum size of the boundary of such a partition minus one.
We show how one can construct such a \preTreeDec\ from a winning strategy of Cop and give a top-down construction that turns such a \preTreeDec\ into a tree-decomposition.

\paragraph*{Outline.}
We start in \autoref{sec:deep-wide-graph-dec} with three different decompositions and prove that they all yield the same graph class \Ekq.
In \autoref{sec:deep-wide-game} we introduce a variant of the Cops-and-Robber game and show that it exactly captures \Ekq.
This includes a construction that turns a possibly non-monotone strategy of Cop into a tree-decomposition.
We conclude the section with a proof that the graph classes $\Ekq$ and $\TW_{k-1}\cap\TD_q$ are different.
We then turn to the \homInd\ relation induced by $\Ekq$ in \autoref{sec:deep-wide-homind}.
We give a new proof that this relation is exactly $\lC^k_q$-equivalence.
We use the same proof technique to describe $\lGC^k_q$-equivalence in terms of \homInd.
Lastly in \autoref{sec:separation-sem} we prove that we can lift the separation of \Ekq\ and $\TW_{k-1}\cap\TD_q$ to the semantic level, that is we show that the induced \homInd\ relations are different.
We do so by proving that the graph classes $\Ekq$ and $\mathcal{TD}_q$ are \homDistCl.

\section{Preliminaries}
\label{ch:prelim}
In this chapter we will introduce the notation used throughout this paper and state some important definitions and well-known results that this paper is based on.

\subsection{Notation for sets and functions}
By $[k]$ we denote the set $\{1, \dots, k\}$.
For a set $U$ and some $k\geq 1$, we write $2^U$ for the power set of $U$, $U^k$ and $U^{\leq k}$ 
for the set of all tuples over $U$ of length (at most) $k$ and, if $k\leq |U|$, we write $\binom{U}{k}$ and $\binom{U}{\leq k}$ for the set of all subsets of $U$ of size (at most) $k$.
We use bold letters to denote tuples.
The tuple elements are denoted by the corresponding regular letter together with an index.
For example, $\tup{a}$ stands for $(a_1, \dots, a_n)$.
For a fixed universe $U$ and some subset $Y\subseteq U$, we write $\complementOf{Y}$ to denote the complement of $Y$ with respect to $U$, that is $\complementOf{Y}\coloneqq U\setminus Y$.
$\partitions(U)$ denotes the set of all \emph{partitions} of $U$.
We also consider \emph{ordered partitions}, where we allow partitions to contain multiple (but finitely many) copies of the empty set, which we denote by $\partitionsEmpty(U)$.
Let $\pi=(X_1,\ldots,X_d)\in\partitionsEmpty(U)$ and $F\subseteq U$.
For $i\in[d]$, the (ordered) partition
\[
\pi_{X_i\ext F}\coloneqq (X_1\setminus F, \ldots, X_{i-1}\setminus F, X_i\cup F,X_{i+1}\setminus F,\ldots,X_d\setminus F),
\]
is called the \emph{$F$-extension in $X_i$ of $\pi$} (see \cite{Amini_submod-partition_09}).

For a function $f$, we denote the domain of $f$ by $\dom(f)$.
The \emph{image} of $f$ is the set $\img(f) \coloneqq \{f(x) \mid x \in \dom(f)\}$.
The \emph{restriction} of a function $f \colon A \to C$ to some set $B \subseteq A$ is the function $f\restrict{B} \colon B \to C$ with $f\restrict{B}(x) = f(x)$ for $x \in B$.
For functions $f \colon A \to C$, $g \colon B \to C$ that agree on $A \cap B$, we write $f \sqcup g$ for the union of $f$ and $g$, that is, the function mapping $x$ to $f(x)$ if $x \in A$ and to $g(x)$ if $x \in B$.
A partial function is denoted by $f\colon A \parto B$.
This corresponds to a function $f\colon A\to B \cup \{\bot\}$ and we set $\dom(f)\coloneqq \{ x\in A \mid f(x)\neq \bot\}$.
In some applications, a function $f$ depends on a set of parameters $\vec{y}$.
We write $f(\vec{x}\fktmid \vec{y})$ and interpret $f(\phantom{|} \cdot \fktmid \vec{y})$ as a function over the variables $\vec{x}$.
We drop the parameters, if they are clear from context.
For a finite set $U$, we say a set function $\kappa\colon 2^U\rightarrow \RR$ is \emph{normalized} if $\kappa(\emptyset)=0$, $\kappa$ is \emph{symmetric} if $\kappa(X)=\kappa(\complementOf{X})$, for all $X\subseteq U$, and $\kappa$ is \emph{submodular} if $\kappa(X)+\kappa(Y) \geq \kappa(X\cap Y)+\kappa(X\cup Y)$, for all $X,Y\subseteq U$.
A function $w\colon \partitionsEmpty(U)\rightarrow \NN$ is \emph{\submod} if, for all $\pi,\pi'\in \partitionsEmpty(U)$, for all sets $X\in \pi$ and $Y\in \pi'$ with $X\cup Y \neq U$, it holds that
\[w(\pi)+w(\pi')\geq w(\pi_{X\ext \complementOf{Y}})+w(\pi'_{Y\ext \complementOf{X}}).\]
This coincides with the definition of submodularity for symmetric set functions (see also \cite{Amini_submod-partition_09}). Indeed, if for all bipartitions it holds that $f(X,\complementOf{X})=\kappa(X)$, then we get
\begin{align*}
	\kappa(X) + \kappa(Y) = f(X,\complementOf{X}) + f(Y,\complementOf{Y}) &\geq f(X\cup\complementOfB{\complementOf{Y}},\complementOf{X}\setminus\complementOfB{\complementOf{Y}}) + f(Y\setminus\complementOf{X},\complementOf{Y}\cup\complementOf{X})\\
	& = f(X\cup Y,\complementOf{X}\setminus Y) + f(Y\cap X, \complementOf{Y}\cup\complementOf{X})\\
	& = \kappa(X\cup Y) + \kappa(X\cap Y).
\end{align*}

\subsection{Graphs}
A graph $G$ is a tuple $(V(G), E(G))$, with a finite set of vertices $V(G)$ and edges $E(G) \subseteq \binom{V(G)}{\leq 2}$.
We usually write $uv$ or $vu$ to denote the edge $\{u, v\} \in E(G)$.
We write $|G|\coloneqq|V(G)|$.
If \(G\) is clear from the context we write \(V,E\) instead of \(V(G),E(G)\).
Unless otherwise specified, all graphs are assumed to be \emph{loopless}, that is, $E(G)$ contains no singletons. 
We denote the class of all loopless graphs by~$\cG$.

A \emph{$k$-labelled graph} $G$ is a graph together with a partial function $\labfkt_G \colon [k] \parto V(G)$ that assigns labels from the set $[k] = \{1, \dots, k\}$ to vertices of $G$.
A label thus occurs at most once in a graph, a single vertex can have multiple labels, and not all labels have to be assigned.
By $\labels{G}=\img(\labfkt_G)$ we denote the set of labelled vertices of $G$.
We call a graph where every vertex has at least one label \emph{fully labelled}.
We denote the class of all $k$-labelled graphs by $\cG_k$.
For $\ell \in [k]$ and $v \in V(G)$, we write $G(\ell \to v)$ to denote the graph obtained from $G$ by setting $\labfkt_G(\ell) \coloneqq v$.
We can \emph{remove} a label $\ell$ from a graph $G$, which yields a copy $G'$ of $G$ where $\labfkt_{G'}(\ell) = \bot$ and $\labfkt_{G'}(\ell') = \labfkt_G(\ell')$ for all $\ell' \neq \ell$.
The \emph{product}\footnote{Categorically speaking, this is a coproduct of labelled graphs or a pushout of graphs.} $G\odot H$ of two labelled graphs is the graph obtained by taking the disjoint union of $G$ and $H$, identifying vertices with the same label, and suppressing any parallel edges that might be created. Any created self-loops are kept.

For an arbitrary vertex \(v\in V\), we write \(N(v\fktmid G)\coloneqq\{u\mid \{u,v\}\in E\}\) to denote the \emph{neighbourhood of \(v\)}.
For some vertex set $U\subseteq V$, we denote the \emph{boundary} of $U$ by \(N(U\fktmid G)\coloneqq\bigcup_{u\in U} N(u\fktmid G)\setminus U\) to denote the \emph{neighbourhood of \(U\)}.
For $X\subseteq E(G)$, we write $\boundary(X\fktmid G)\coloneqq\{v\in V(G)\mid \text{ there is an } e\in X \text{ and an } e'\notin X \text{ such that } v\in e\cap e' \}$.
For $U,W\subseteq V$ with $U\cap W=\emptyset$, we let $E(U,W\fktmid G) \coloneqq \{\{u,w\}\in E(G)\mid u\in U,w \in W \}$.
Furthermore, we let $E(v\fktmid G) \coloneqq \{ \{u,v\}\in E(G)\mid u\in V(G)\}$ and $E(U\fktmid G) \coloneqq E\cap\binom{U}{2}$.
We drop the parameter if $G$ is clear from the context.

We call $H$ a \emph{subgraph} of $G$ if $H$ can be obtained from $G$ by removing vertices and edges.
$H$ is a \emph{proper subgraph} if $H\neq G$.
For a subset $U\subseteq V(G)$, we write $G[U]$ to denote the \emph{subgraph of $G$ induced by $U$}, that is $G[U]\coloneqq(U,E(U\fktmid G))$.
$H$ is a \emph{minor} of $G$ if it can be obtained from $G$ by removing vertices, removing edges, and \emph{contracting} edges. We contract an edge $uv$ by removing it and identifying $u$ and $v$. For labelled graphs, the new vertex is labelled by the union of labels of $u$ and $v$.

\subsubsection{Trees and Forests}
A \emph{tree} is a graph where any two vertices are connected by exactly one path.
The disjoint union of a set of trees is called a \emph{forest}.
A \emph{rooted tree} $(T, r)$ is a tree $T$ together with some designated vertex $r \in V(T)$, the \emph{root} of $T$.
By \(L(T)\) we denote the set of all \emph{leaves} of \(T\), that is \(L(T)\coloneqq\{ v\in V(T) \mid |N(v)|=1\}\).
We often interpret a rooted tree $(T,r)$ as a directed tree where all edges are directed away from the root and the leaves of a rooted tree are all those vertices with no outgoing neighbours.
For $t\in V(T)\setminus\{r\}$ we write $p_t$ for the parent of $t$.
All vertices that are not leaves are called \emph{inner vertices}.
The \emph{depth} of a rooted tree is equal to the number of vertices on the longest path from the root to a leaf\footnote{In the literature the height of a tree is often the number of edges on the longest path from the root to a leaf. In this paper the number of vertices is the more convenient.}.
The depth of a rooted forest is the maximum depth over all its connected components.

At times, the following alternative definition is more convenient.
We can view a rooted forest $(F, \tup{r})$ as a pair $(V(F), \preceq)$, where $\preceq$ is a partial order on $V(F)$ and for every $v \in V(F)$ the elements of the set $\{u \in V(F) \mid u \preceq v\}$ are pairwise comparable: The minimal elements of $\preceq$ are precisely the roots of $F$, and for any rooted tree $(T, r)$ that is part of $F$ we let $v \preceq w$ if $v$ is on the unique path from $r$ to $w$.

The depth of a rooted forest $(F, \tup{r})$ is then given by the length of the longest $\preceq$-chain.
Then a rooted tree $(T', r')$ is a \emph{subtree} of a tree $(T, r)$ if $V(T') \subseteq V(T)$ and $\preceq^{T'}$ is the restriction of $\preceq^T$ to $V(T')$.
Note that the subgraph of $T$ induced by $V(T')$ might not be a tree, since the vertices of $T'$ can be interleaved with vertices that do not belong to $T'$.
We call a subtree $T'$ of $T$ \emph{connected} if its induced subgraph on $T$ is connected.

\subsubsection{Decompositions}

\begin{defi}
	\label{def:graph_decomp}
	A \emph{tree-decomposition} of a graph $G$ ist a pair $(T,\beta)$, where $T$ is a tree and $\beta\colon V(T)\rightarrow 2^{V(G)}$ is a function from the nodes of $T$ to sets of vertices of $G$ such that
	\begin{enumerate}[label=(TD.\arabic*), labelindent=0pt,itemindent=*,leftmargin=*]
		\item $\bigcup_{t\in V(T)} G[\beta(t)] = G$, and \label{ax:GraphDec1}
		\item for every vertex $v \in G$, the graph $T_v\coloneqq T[\{t\in V(T)\mid v\in \beta(t)\}]$ is connected. \label{ax:GraphDec2}
	\end{enumerate}
	The sets $\beta(t)$ are called the \emph{bags} of this tree-decomposition.
\end{defi}

In order to distinguish easily between $V(T)$ and $V(G)$ in \autoref{def:graph_decomp}, we say $V(G)$ are the \emph{vertices} of $G$ and $V(T)$ are the \emph{nodes} of $T$.
For an example of a tree-decomposition, see \autoref{fig:treewidthtrees}.
The following lemma is a direct implication from \ref{ax:GraphDec2}.

\begin{lem}
	\label{lem:GraphDecConnSubgraph}
	Let $\Fdec$ be a tree-decomposition of a graph $G$.
	Then for any connected subgraph $H$ of $G$, the graph $T_H\coloneqq T[\{t\in V(T)\mid V(H)\cap \beta(t)\neq\emptyset\}]$ is connected. 
\end{lem}

Let $G$ be a graph and $(T,\beta)$ a tree-decomposition of $G$.
The width of the decomposition is $\wid(T,\beta)=\max_{t\in V(T)} |\beta(t)|-1$, the \emph{treewidth} $\tw(G)$ of a graph $G$ is the minimum width over all tree-decompositions of $G$.
Treewidth is a structural graph parameter that measures how close a graph is to being a tree.
We denote the class of graphs of treewidth at most $k$ by $\TW_k$.
\begin{figure}
	\begin{subfigure}[t]{0.6\textwidth}
		\centering
		\begin{tikzpicture}[nodes={draw, minimum size=.25cm, inner sep=1pt}, scale=.8]
			\node[anchor=north east, draw=none] at (-0.5, 1.75) {\llap{\textbf{(a)}}};
			\node at (0, 0) (v1) {
				\begin{tikzpicture}[scale=.1, node font=\tiny]
					\node[draw=none] at (0, 0) {$(1,3)$};
					\node[draw=none] at (0, -5) {$(2,3)$};
				\end{tikzpicture}
			};
			
			\node at (2, 1) (v21) {
				\begin{tikzpicture}[scale=.1, node font=\tiny]
					\node[draw=none] at (0, 0) {$(1,3)$};
					\node[draw=none] at (0, -5) {$(2,3)$};
					\node[draw=none] at (-7, -5) {$(2,2)$};
				\end{tikzpicture}
			};
			\node at (2, -1) (v22) {
				\begin{tikzpicture}[scale=.1, node font=\tiny]
					\node[draw=none] at (0, 0) {$(1,3)$};
					\node[draw=none] at (0, -5) {$(2,3)$};
					\node[draw=none] at (7, -5) {$(2,4)$};
				\end{tikzpicture}
			};
			
			\node at (4, 1) (v31) {
				\begin{tikzpicture}[scale=.1, node font=\tiny]
					\node[draw=none] at (0, 0) {$(1,3)$};
					\node[draw=none] at (-7,0) {$(1,2)$};
					\node[draw=none] at (-7, -5) {$(2,2)$};
				\end{tikzpicture}
			};
			\node at (4, -1) (v32) {
				\begin{tikzpicture}[scale=.1, node font=\tiny]
					\node[draw=none] at (0, 0) {$(1,3)$};
					\node[draw=none] at (7,0) {$(1,4)$};
					\node[draw=none] at (7, -5) {$(2,4)$};
				\end{tikzpicture}
			};
			
			\node at (6, 1) (v41) {
				\begin{tikzpicture}[scale=.1, node font=\tiny]
					\node[draw=none] at (0, 0) {$(1,2)$};
					\node[draw=none] at (0, -5) {$(2,2)$};
					\node[draw=none] at (-7, -5) {$(2,1)$};
				\end{tikzpicture}
			};
			\node at (6, -1) (v42) {
				\begin{tikzpicture}[scale=.1, node font=\tiny]
					\node[draw=none] at (0, 0) {$(1,4)$};
					\node[draw=none] at (0, -5) {$(2,4)$};
					\node[draw=none] at (7, -5) {$(2,5)$};
				\end{tikzpicture}
			};
			
			\node at (8, 1) (v51) {
				\begin{tikzpicture}[scale=.1, node font=\tiny]
					\node[draw=none] at (0, 0) {$(1,2)$};
					\node[draw=none] at (-7,0) {$(1,1)$};
					\node[draw=none] at (-7, 5) {$(2,1)$};
				\end{tikzpicture}
			};
			\node at (8, -1) (v52) {
				\begin{tikzpicture}[scale=.1, node font=\tiny]
					\node[draw=none] at (0, 0) {$(1,4)$};
					\node[draw=none] at (7,0) {$(1,5)$};
					\node[draw=none] at (7, -5) {$(2,5)$};
				\end{tikzpicture}
			};
			
			\draw (v51) -- (v41) -- (v31) -- (v21) -- (v1) -- (v22) -- (v32) -- (v42) -- (v52);
			
		\end{tikzpicture}
		\phantomsubcaption
		\label{fig:treewidthtrees}
	\end{subfigure}
	~
	\begin{subfigure}[t]{0.39\textwidth}
		\centering
		\begin{tikzpicture}[scale=0.5,smallVertex/.style={fill=black, inner sep=1, circle}]
			\node[anchor=north east] at (-3.5, -1) {\llap{\textbf{(b)}}};
			\node[smallVertex] (a03) at (0,-1) {};
			\node[smallVertex] (a13) at (0,-2) {};
			\node[smallVertex] (a01) at (-2,-3) {};
			\node[smallVertex] (a11) at (-2,-4) {};
			\node[smallVertex] (a05) at (2,-3) {};
			\node[smallVertex] (a15) at (2,-4) {};
			\node[smallVertex] (a00) at (-3,-5) {};
			\node[smallVertex] (a10) at (-3,-6) {};
			\node[smallVertex] (a02) at (-1,-5) {};
			\node[smallVertex] (a12) at (-1,-6) {};
			\node[smallVertex] (a06) at (3,-5) {};
			\node[smallVertex] (a16) at (3,-6) {};
			\node[smallVertex] (a04) at (1,-5) {};
			\node[smallVertex] (a14) at (1,-6) {};
			
			\draw (a10) -- (a00) -- (a11) -- (a02) -- (a12);
			\draw (a16) -- (a06) -- (a15) -- (a04) -- (a14);
			\draw (a11) -- (a01) -- (a13) -- (a05) -- (a15);
			\draw (a03) -- (a13);
			
			\foreach \i in {0,1,2,3,4,5,6}{
				\foreach \j in {0,1}{
					\ifthenelse{\NOT \i=0}{
						\tikzmath{
							integer \l;
							\l = \i - 1;}
						\draw[dashed, orange] (a\j\i) -- (a\j\l);
					}{}
				}
				\draw[dashed, orange] (a0\i) -- (a1\i);
			}
		\end{tikzpicture}
		\phantomsubcaption
		\label{fig:treedepthexample}
	\end{subfigure}
	\caption{Tree-decomposition and forest cover of grids.
		\textbf{(a)} A tree-decomposition of the grid $\grid{2}{5}$ of width $3$.
		\textbf{(b)} A  forest cover of the grid $\grid{2}{7}$ of depth $6$.
		The edges of the original grid are dashed.}
	\label{fig:grid-decompositions}
\end{figure}
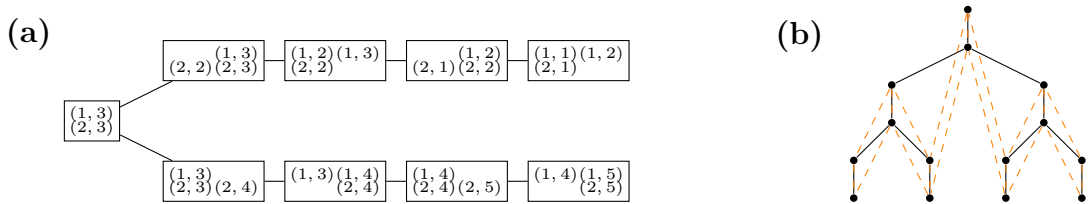

Treedepth can be thought of as measuring how close a graph is to being a star.
Alternatively, we may think of it as extending the notion of height beyond rooted forests.
It is defined for a graph $G$ as the minimum height of a forest $F$ over the vertices of $G$, such that all edges in $G$ have an ancestor-descendant relationship in $F$.

\begin{defi}
	A rooted forest $(F, \tup{r})$ with $V(F) = V(G)$ is a \emph{forest cover} of a graph $G$, if for every edge $uv \in E(G)$ it holds that either $u \preceq v$ or $v \preceq u$.
\end{defi}

The \emph{treedepth} $\td(G)$ of $G$ is the minimum depth of a forest cover of $G$.
We denote the class of all graphs of treedepth at most $q$ by $\TD_q$. 
For an example of a forest cover, see \autoref{fig:treedepthexample}.

It is possible to construct a tree-decomposition from a forest cover $(F, \tup{r})$.
This is achieved by considering a path of bags, each containing the vertices on a path from $\tup{r}$ to the leaves of $F$.
It is not hard to see that there is an ordering of these bags that satisfies the conditions of \autoref{def:graph_decomp}.
This yields the following relation between treedepth and treewidth.

\begin{lem}
	For every graph $G$, it holds that $\tw(G) \leq \td(G) - 1$.
\end{lem}

\subsubsection{Cops-and-Robber}

Both treewidth and treedepth enjoy characterisations in terms of node searching games, the so called \emph{Cops-and-Robber games}.
The general Cops-and-Robber game is played on a graph~$G$ by two players: Cop, controlling a number of cops; and Robber, controlling a single robber.
Cop and the Robber are positioned on vertices of $G$.
The goal of Cop is to place a cop on Robber's position, while Robber tries to avoid capture by moving along paths free from cops.
The players play in rounds where first Cop announces the next position(s) of the cops (with possible restriction on how many cops may be moved and where they may be positioned) and then Robber moves the robber along some path avoiding all vertices where before and after his move there is a cop.
Treewidth can be characterised by the minimum number of cops needed to capture the robber where neither the movement of Cop nor Robber is restricted (see e.g. \cite{Seymour_graph_1993}).
A well-known characterization of treedepth is the minimum number of cops needed in a stationary game where a cop cannot be moved after it is positioned on the graph (see e.g. \cite{Giannopoulou_lifo-search_2012}).
It is equivalent to count the number of vertices that a cop is placed on during a game.
Therefore we use the unified definition of \emph{$q$-rounds $k$-Cops-and-Robber game} $\CRkq(G)$, where $G$ is a non-empty graph and $k,q\geq 1$.
Cop positions are sets $X\in \binom{V(G)}{\leq k}$, Robber positions are vertices $v\in G$.
We define a function $\escape\colon \binom{V(G)}{\leq k}\times V(G) \rightarrow 2^{V(G)}$ as follows
\begin{equation*}
	\escape(X,v) \coloneqq \begin{cases}
		\{v\} & \text{if } v\in X,\\
		V(C) & \begin{split}
			&\text{otherwise, with } C \text{ the connected component }\\
			&\text{of } G\setminus X \text{ such that } v\in V(C).
		\end{split}
	\end{cases}
\end{equation*}
We say $\escape(X,v)$ is the \emph{escape space} of Robber.
We call a pair of Cop position and Robber position or escape space of Robber after one round a \emph{positions of the game} and we write $(X,v)$ or $(X,\escape(X,v))$.
The game is initialized with Cop on the empty set and Robber on a vertex of his choice, that is the initial position is $(\emptyset,V(C))$, for some connected component $C$ of $G$.
In round $i$ Cop can move from position $X_{i-1}$ to $X_i$ if $|X_i\setminus X_{i-1}|\leq1$.
Robber can move to a new position $u\in \escape(X_i\cap X_{i-1},v)$.
Cop wins if at any point the play reaches a position $(X,v)$ with $v\in X$, we say \emph{Robber is caught}.
Robber wins if Cop did not win after $q$ rounds.
If Cop plays in such a way that $\escape(X_i\cap X_{i-1},v)=\escape(X_i,v)$ in every move, we call the game monotone and write $\monCRkq(G)$.

In some applications we need a slightly different Cops-and-Robber game, a game where Robber can hide in an edge, rather than a vertex.
In this game Robber is caught if he hides in an edge where both endpoints are occupied by a cop and Cop wins immediately if the graph contains no edges.
We denote the game by $\eCRkq(G)$ or $\moneCRkq(G)$ respectively.
In this game the escape spaces of Robber are connected subsets of edges, that is we define $\escapeE\colon \binom{V(G)}{\leq k}\times E(G)\rightarrow 2^{E(G)}$ as
\begin{equation*}
	\escapeE(X,uv) \coloneqq \begin{cases}
		\{uv\} & \text{if } u,v\in X,\\
		\{xy\in E(G)\mid x\in V(C)\} & \begin{split}
			&\text{otherwise, with } C \text{ the conn. component } \\&\text{of } G\setminus X \text{ such that } \{v,u\}\cap V(C)\neq\emptyset.
		\end{split}
	\end{cases}
\end{equation*}

It is easy to see that if a graph $G$ contains at least one edge, Cop wins $\CRkq(G)$ if and only if Cop wins $\eCRkq(G)$.
In a graph without edges Cop wins one round earlier.
We can circumvent this difference by playing on a graph $G^\circ$, that is obtained from $G$ by adding all self-loops, we denote this game by $\eCRkq(G^\circ)$.
The game played on the graph $G^\circ$ corresponds to a game on $G$, where Robber can hide both inside a vertex or an edge.
The same holds for the monotone game.

We write $\CR_q(G)$ instead of $\CR_q^q(G)$ and $\CR^k(G)$ instead of $\CR_{|V(G)|}^k(G)$.
Treewidth and treedepth can be expressed in terms of the existence of winning strategies in these games.

\begin{lem}[{\cite[Theorem 1.4]{Seymour_graph_1993}} and {\cite[Theorem 4]{Giannopoulou_lifo-search_2012}}] \label{lem:games}
	Let $G$ be a graph. Let $k,q \geq 1$.
	\begin{enumerate}
		\item $G$ has treewidth at most $k$  if and only if Cop has a winning strategy for $\CR^{k+1}(G)$  if and only if Cop has a winning strategy for $\monCR^{k+1}(G)$.
		\item $G$ has treedepth at most $q$  if and only if Cop has a winning strategy for $\CR_q(G)$  if and only if Cop has a winning strategy for $\monCR_q(G)$.
	\end{enumerate}
\end{lem}

\subsubsection{Logic of graphs}

We will mainly consider \emph{counting first-order logic} $\lC$. $\lC$ extends regular first-order logic $\lFO$ by quantifiers $\exists^{{\geq} t}$, for $t \in \NN$. Consequently, we can build a $\lC$-formula in the usual way from atomic formulae; variables $x_1, x_2, \dots$; logical operators $\land, \lor, \impl, \neg$; and quantifiers $\forall, \exists, \exists^{{\geq} t}$. The atomic formulae in the language of graphs are $E\alpha\beta$ and $\alpha = \beta$ for arbitrary variables $\alpha, \beta$.

An occurrence of a variable $x$ is called \emph{free} if it is not in the scope of any quantifier. The \emph{free variables} $\free(\varphi)$ of a formula $\varphi$ are precisely those that have a free occurrence in $\varphi$. A formula without free variables is called a \emph{sentence}. We often write $\varphi(x_1, \dots, x_n)$ to denote that the free variables of $\varphi$ are among $x_1, \dots, x_n$. For a graph $G$, it usually depends on the interpretation of the free variables whether $G \models \varphi(x_1, \dots, x_n)$. We write $G, v_1, \dots, v_n \models \varphi(x_1, \dots, x_n)$ or $G \models \varphi(v_1, \dots, v_n)$ if $G$ satisfies $\varphi$ when $x_i$ is interpreted by $v_i$. We might also give an explicit \emph{interpretation function} $\fI \colon \free(\varphi) \to V(G)$, writing $G, \fI \models \varphi$.

We generalise the notion of $\lC$-equivalence, writing $G, v_1, \dots, v_n \equiv_\lC H, w_1, \dots, w_n$ to denote that for all formulae $\varphi(x_1, \dots, x_n) \in \lC$ it holds that $G \models \varphi(v_1, \dots, v_n)$ if and only if $H \models \varphi(w_1, \dots, w_n)$. Note that for labelled graphs, such an interpretation function is implicit: If the indices of $\free(\varphi)$ are a subset of the labels of $G$, then we can interpret the variables $x_i$ by the vertex with the label $i$, that is, $\fI(x_i) = \nu(i)$. The semantics of $\lC$ can then be stated succinctly in terms of label assignments.

\begin{defi}[$\lC$ semantics of labelled graphs]
	Let $\varphi \in \lC$ and let $G$ be a labelled graph, such that $\nu(i) \in V(G)$ for all $x_i \in \free(\varphi)$. Then $G \models \varphi$ if
	\begin{itemize}
		\item $\varphi$ is $x_i = x_j$ and $\nu(i) = \nu(j)$,
		\item $\varphi$ is $Ex_ix_j$ and $\nu(i)\nu(j) \in E(G)$,
		\item $\varphi$ is $\neg \psi$ and $G \not\models \psi$,
		\item $\varphi$ is $\psi \lor \theta$ and $G \models \psi$ or $G \models \theta$, or
		\item $\varphi$ is $\exists^{{\geq} t} x_\ell\, \psi(x_\ell)$ and there exist distinct $v_1, \dots, v_t$, such that $G(\ell \to v_i) \models \psi$ for all $i \in [t]$.
	\end{itemize}
\end{defi}

Note that for labelled graphs this is equivalent to extending the standard semantics of $\lFO$ by the following rule: $G, v_1, \dots, v_n \models \exists^{{\geq} t}y \psi(x_1, \dots, x_n, y)$ holds if there exist distinct elements $u_1, \dots, u_t \in V(G)$ such that $G \models \psi(v_1, \dots, v_n, u_i)$ for all $i \in [t]$.

We sometimes write $\exists^{=t}x \varphi(x)$ for $\exists^{{\geq} t} x \varphi(x) \land \neg\exists^{{\geq} t+1}x \varphi(x)$. We also write $\top$ for $\forall x (x=x)$ and $\bot$ for $\neg \top$.
As we already did above, we will often restrict ourselves to the connectives $\neg, \lor$ and the quantifier $\exists^{{\geq} t}$. This set of symbols is indeed equally expressive by De Morgan's laws and observing that $\exists x \varphi(x) \equiv \exists^{{\geq} 1}x \varphi(x)$ and $\forall x \varphi(x) \equiv \neg\exists x \neg\varphi(x)$.

The \emph{quantifier rank} $\qr(\cdot)$ of a formula is defined inductively as follows.
We set $\qr(\varphi) = 0$ for atomic formulae $\varphi$, and otherwise $\qr(\neg\varphi) = \qr(\varphi)$, $\qr(\varphi \lor \psi) = \max \{ \qr(\varphi), \qr(\psi)\}$ and $\qr(\exists^{{\geq} t}x \varphi) = 1 + \qr(\varphi)$.
The \emph{quantifier-rank-$q$ fragment} $\lC_q$ of counting first order logic consists of all formulae of quantifier rank at most $q$.

Instead of restricting the quantifer rank, we can also restrict the number of distinct variables that are allowed to occur in a formula. By $\lC^k$ we denote the \emph{$k$-variable fragment} of $\lC$, consisting of all formulae using at most $k$ different variables. Similarly, we define the \emph{$k$-variable quantifier-rank-$q$ fragment} as $\lC^k_q \coloneqq \lC^k \cap \lC_q$. Note that these are purely syntactic definitions.

\subsubsection{Homomorphisms}

Let $F,G$ be graphs.
A \emph{homomorphism} from $F$ to $G$ is a map $h \colon V(F) \to V(G)$ satisfying $uv \in E(F) \implies h(u)h(v) \in E(G)$.
For $k$-labelled graphs, we additionally require that $h(\labfkt_F(\ell)) = \labfkt_G(\ell)$ for all $\ell \in \dom(\labfkt_F)$.
We denote the set of homomorphisms from $F$ to $G$ by $\HOM(F, G)$.
The number of homomorphisms from $F$ to $G$ we denote by $\hom(F, G) \coloneqq \lvert \HOM(F, G) \rvert$.
We write $\HOM(F, G; a_1 \mapsto b_1, \dots, a_n \mapsto b_n)$ to denote the set of homomorphism $h\colon F \to G$ satisfying $h(a_i) = b_i$ for $i \in [n]$.
Two graphs $G$ and $H$ are \emph{homomorphism indistinguishable} over a graph class $\mathcal{F}$ if $\hom(F, G) = \hom(F, H)$ for all $F \in \mathcal{F}$.
See \cite{seppelt_homomorphism_2024} for a survey on \homInd.

\section{Graph decompositions accounting for treewidth\texorpdfstring{\\}{} and treedepth simultaneously}
\label{sec:deep-wide-graph-dec}
In this section, we reconcile treewidth and treedepth by introducing graph decompositions which account simultaneously for depth and width.
These efforts yield various equivalent characterizations of the graph class $\Ekq$, a subclass both of $\TW_{k-1}$ and $\TD_q$, the classes of graphs of treewidth $\leq k -1$ and treedepth $\leq q$, respectively.

We start with the original definition of the class \Ekq, which incorporates width into forest covers.
This definition has first been introduced as $k$-traversal in \cite{Abramsky_pebbling_2017}.

\begin{defi}
	\label{def:kpebbleforest}
	Let $G$ be  graph and $k\geq 1$.
	A \emph{$k$-pebble forest cover} of $G$ is a tuple $(F,\vec{r},p)$, where $(F,\vec{r})$ is a rooted forest over the vertices $V(G)$ and $p\colon V(G)\rightarrow [k]$ is a pebbling function such that:
	\begin{enumerate}[label=(FC.\arabic*), labelindent=0pt, itemindent=*, leftmargin=*]
		\item \label{ax:pebble-forest-edges} If $uv\in E(G)$, then $u\preceq v$ or $v\preceq u$ in $(F,\vec{r})$.
		\item \label{ax:pebble-forest-pebble} If $uv \in E(G)$ and $u\prec v$ in $(F,\vec{r})$, then for every $w\in V(G)$ with $u\prec w\preceq v$ in $(F,\vec{r})$ it holds that $p(u)\neq p(w)$.
	\end{enumerate}
	$(F,\vec{r},p)$ has depth $q\geq 1$ if $(F,\vec{r})$ has depth $q$.
	We write $\Ekq$ for the class of all graphs $G$ admitting a $k$-pebble forest cover of depth $q$.
\end{defi}

A different way to define the class \Ekq\ is by a new measure of the depth of a tree-decomposition $(T, \beta)$.
Crucially, it does not suffice to take the height of $T$ into account since this notion is not robust.
For example, it is well known that one can alter a tree-decomposition by subdividing any edge multiple times and copying the bag of the child node.
This transformation does neither change the width of the decomposition, nor does it affect the information how to decompose the graph. However, the height of the tree will change drastically under this transformation.
It turns out that the following is the right definition:

\begin{defi}
	\label{def:depth-treedec}
	Let $G$ be a graph.
	A tuple $(T,r,\beta)$ is a \emph{rooted tree-decomposition} of $G$
	if $(T,\beta)$ is a tree-decomposition of $G$ and $r \in V(T)$.
	The \emph{depth of a tree-decomposition $(T,\beta)$} is
	\begin{equation*}
		\dep(T,\beta)\coloneqq \min_{r\in V(T)} \dep(T,r,\beta)
		\qquad \text{with}
		\qquad
		\dep(T,r,\beta)\coloneqq \max_{v \in V(T)} \left| \bigcup_{t\preceq v} \beta(t) \right|.\qedhere
	\end{equation*}
\end{defi}

Lastly we define a construction inspired by Dvo\v{r}\'ak~\cite{Dvorak_recognizing_2010}, that enables us to use their proof technique to study the expressive power of first-order logic with counting quantifiers using homomorphism indistinguishability (see \autoref{fig:elim-example}).

\begin{defi}
	\label{def:Ekq}
	Let $G$ be a (possibly labelled) graph. A \emph{$k$-construction tree} for $G$ is a tuple $(T,\lambda,r)$, where $T$ is a tree rooted at $r$ and $\lambda\colon V(T)\rightarrow \cG_k$ is a function assigning $k$-labelled graphs to the nodes of $T$ such that:
	\begin{enumerate}[label=(CT.\arabic*), labelindent=0pt, itemindent=*, leftmargin=*]
		\item $\lambda(r)=G$, \label{ct1}
		\item all leaves $\ell\in V(T)$ are assigned fully labelled graphs, that is $V(\lambda(\ell))=\labels{\lambda(\ell)}$,
		\item all internal nodes $t\in V(T)$ with exactly one child $t'$ are \emph{elimination nodes}, that is $\lambda(t)$ can be obtained from $\lambda(t')$ by removing one label, and
		\item all internal nodes $t\in V(T)$ with more than one child are \emph{product nodes}, that is $\lambda(t)$ is the product of its children.
	\end{enumerate}
	The \emph{\elimDepth} of a construction tree $(T,\lambda,r)$ is the maximum number of elimination nodes on any path from the root $r$ to a leaf.
	If $G$ has a $k$-construction tree of \elimDepth\ $\leq q$, we say that $G$ is \emph{\elimOrd{k}{q}}.
	For the class of all $k$-labelled \elimOrd{k}{q} graphs we write \Lkq.
\end{defi}

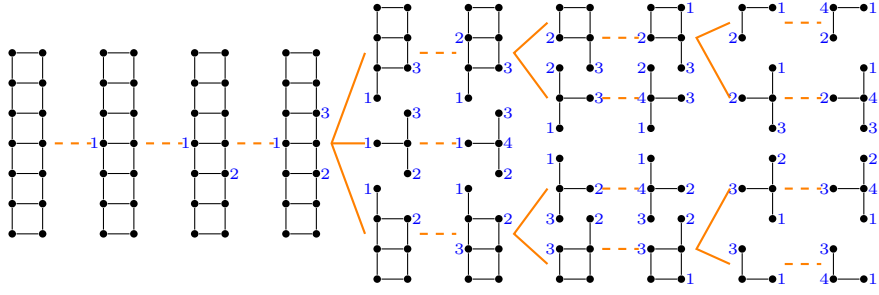
\begin{figure}
	\centering
	\begin{tikzpicture}[scale=0.4,smallVertex/.style={fill=black, inner sep=1, circle}, node font=\tiny]
		
		%unlabbeld grid
		\foreach \i in {0,1,2,3,4,5,6}{
			\foreach \j in {0,1}{
				\tikzmath{
					integer \newj;
					\newj = \j-3;}
				\node[smallVertex] (b\j\i) at (\newj,\i) {};
				\ifthenelse{\NOT \i=0}{
					\tikzmath{
						integer \l;
						\l = \i - 1;}
					\draw (b\j\i) -- (b\j\l);
				}{}
			}
			\draw (b0\i) -- (b1\i);
		}
		
		%first grid
		\foreach \i in {0,1,2,3,4,5,6}{
			\foreach \j in {0,1}{
				\node[smallVertex] (a\j\i) at (\j,\i) {};
				\ifthenelse{\NOT \i=0}{
					\tikzmath{
						integer \l;
						\l = \i - 1;}
					\draw (a\j\i) -- (a\j\l);
				}{}
			}
			\draw (a0\i) -- (a1\i);
		}
		\node[blue] at (-0.3,3) {$1$};
		
		%second grid
		\foreach \i in {0,1,2,3,4,5,6}{
			\foreach \j in {0,1}{
				\tikzmath{
					integer \newj;
					\newj = \j+3;}
				\node[smallVertex] (b\j\i) at (\newj,\i) {};
				\ifthenelse{\NOT \i=0}{
					\tikzmath{
						integer \l;
						\l = \i - 1;}
					\draw (b\j\i) -- (b\j\l);
				}{}
			}
			\draw (b0\i) -- (b1\i);
		}
		\node[blue] at (2.7,3) {$1$};
		\node[blue] at (4.3,2) {$2$};
		
		%third grid
		\foreach \i in {0,1,2,3,4,5,6}{
			\foreach \j in {0,1}{
				\tikzmath{
					integer \newj;
					\newj = \j+6;}
				\node[smallVertex] (b\j\i) at (\newj,\i) {};
				\ifthenelse{\NOT \i=0}{
					\tikzmath{
						integer \l;
						\l = \i - 1;}
					\draw (b\j\i) -- (b\j\l);
				}{}
			}
			\draw (b0\i) -- (b1\i);
		}
		\node[blue] at (5.7,3) {$1$};
		\node[blue] at (7.3,2) {$2$};
		\node[blue] at (7.3,4) {$3$};
		
		%fourth grid1
		\foreach \i in {0,1,2}{
			\foreach \j in {0,1}{
				\tikzmath{
					integer \newj;
					\newj = \j+9;
					\newi = \i+5.5;}
				\node[smallVertex] (c\j\i) at (\newj,\newi) {};
				\ifthenelse{\NOT \i=0}{
					\tikzmath{
						integer \l;
						\l = \i - 1;}
					\draw (c\j\i) -- (c\j\l);
				}{}
			}
			\draw (c0\i) -- (c1\i);
		}
		\node[smallVertex] (c) at (9,4.5) {};
		\draw (c) -- (c00);
		\node[blue] at (8.7,4.5) {$1$};
		\node[blue] at (10.3,5.5) {$3$};
		
		%fourth grid2
		\foreach \i in {0,1,2}{
			\foreach \j in {0,1}{
				\tikzmath{
					integer \newj;
					\newj = \j+9;
					\newi = \i-1.5;}
				\node[smallVertex] (d\j\i) at (\newj,\newi) {};
				\ifthenelse{\NOT \i=0}{
					\tikzmath{
						integer \l;
						\l = \i - 1;}
					\draw (d\j\i) -- (d\j\l);
				}{}
			}
			\draw (d0\i) -- (d1\i);
		}
		\node[smallVertex] (d) at (9,1.5) {};
		\draw (d) -- (d02);
		\node[blue] at (8.7,1.5) {$1$};
		\node[blue] at (10.3,0.5) {$2$};
		
		% claw
		\node[smallVertex] (e1) at (9,3) {};
		\node[smallVertex] (e2) at (10,3) {};
		\node[smallVertex] (e3) at (10,4) {};
		\node[smallVertex] (e4) at (10,2) {};
		\draw (e1) -- (e2) -- (e3) (e2) -- (e4);
		\node[blue] at (8.7,3) {$1$};
		\node[blue] at (10.3,2) {$2$};
		\node[blue] at (10.3,4) {$3$};
		
		%fifth grid1
		\foreach \i in {0,1,2}{
			\foreach \j in {0,1}{
				\tikzmath{
					integer \newj;
					\newj = \j+12;
					\newi = \i+5.5;}
				\node[smallVertex] (c\j\i) at (\newj,\newi) {};
				\ifthenelse{\NOT \i=0}{
					\tikzmath{
						integer \l;
						\l = \i - 1;}
					\draw (c\j\i) -- (c\j\l);
				}{}
			}
			\draw (c0\i) -- (c1\i);
		}
		\node[smallVertex] (c) at (12,4.5) {};
		\draw (c) -- (c00);
		\node[blue] at (11.7,6.5) {$2$};
		\node[blue] at (11.7,4.5) {$1$};
		\node[blue] at (13.3,5.5) {$3$};
		
		%fifth grid2
		\foreach \i in {0,1,2}{
			\foreach \j in {0,1}{
				\tikzmath{
					integer \newj;
					\newj = \j+12;
					\newi = \i-1.5;}
				\node[smallVertex] (d\j\i) at (\newj,\newi) {};
				\ifthenelse{\NOT \i=0}{
					\tikzmath{
						integer \l;
						\l = \i - 1;}
					\draw (d\j\i) -- (d\j\l);
				}{}
			}
			\draw (d0\i) -- (d1\i);
		}
		\node[smallVertex] (d) at (12,1.5) {};
		\draw (d) -- (d02);
		\node[blue] at (11.7,1.5) {$1$};
		\node[blue] at (11.7,-0.5) {$3$};
		\node[blue] at (13.3,0.5) {$2$};
		
		% claw
		\node[smallVertex] (e1) at (12,3) {};
		\node[smallVertex] (e2) at (13,3) {};
		\node[smallVertex] (e3) at (13,4) {};
		\node[smallVertex] (e4) at (13,2) {};
		\draw (e1) -- (e2) -- (e3) (e2) -- (e4);
		\node[blue] at (11.7,3) {$1$};
		\node[blue] at (13.3,2) {$2$};
		\node[blue] at (13.3,4) {$3$};
		\node[blue] at (13.3,3) {$4$};
		
		%sixth grid1
		\foreach \i in {0,1}{
			\foreach \j in {0,1}{
				\tikzmath{
					integer \newj;
					\newj = \j+15;
					\newi = \i+6.5;}
				\node[smallVertex] (c\j\i) at (\newj,\newi) {};
				\ifthenelse{\NOT \i=0}{
					\tikzmath{
						integer \l;
						\l = \i - 1;}
					\draw (c\j\i) -- (c\j\l);
				}{}
			}
			\draw (c0\i) -- (c1\i);
		}
		\node[smallVertex] (c) at (16,5.5) {};
		\draw (c) -- (c10);
		\node[blue] at (14.7,6.5) {$2$};
		\node[blue] at (16.3,5.5) {$3$};
		
		%sixth grid2
		\foreach \i in {0,1}{
			\foreach \j in {0,1}{
				\tikzmath{
					integer \newj;
					\newj = \j+15;
					\newi = \i-1.5;}
				\node[smallVertex] (d\j\i) at (\newj,\newi) {};
				\ifthenelse{\NOT \i=0}{
					\tikzmath{
						integer \l;
						\l = \i - 1;}
					\draw (d\j\i) -- (d\j\l);
				}{}
			}
			\draw (d0\i) -- (d1\i);
		}
		\node[smallVertex] (d) at (16,0.5) {};
		\draw (d) -- (d11);
		\node[blue] at (14.7,-0.5) {$3$};
		\node[blue] at (16.3,0.5) {$2$};
		
		% claw
		\node[smallVertex] (e1) at (16,1.5) {};
		\node[smallVertex] (e2) at (15,1.5) {};
		\node[smallVertex] (e3) at (15,2.5) {};
		\node[smallVertex] (e4) at (15,0.5) {};
		\draw (e1) -- (e2) -- (e3) (e2) -- (e4);
		\node[blue] at (16.3,1.5) {$2$};
		\node[blue] at (14.7,0.5) {$3$};
		\node[blue] at (14.7,2.5) {$1$};
		
		\node[smallVertex] (e1) at (16,4.5) {};
		\node[smallVertex] (e2) at (15,4.5) {};
		\node[smallVertex] (e3) at (15,5.5) {};
		\node[smallVertex] (e4) at (15,3.5) {};
		\draw (e1) -- (e2) -- (e3) (e2) -- (e4);
		\node[blue] at (16.3,4.5) {$3$};
		\node[blue] at (14.7,3.5) {$1$};
		\node[blue] at (14.7,5.5) {$2$};
		
		%sixth grid1
		\foreach \i in {0,1}{
			\foreach \j in {0,1}{
				\tikzmath{
					integer \newj;
					\newj = \j+18;
					\newi = \i+6.5;}
				\node[smallVertex] (c\j\i) at (\newj,\newi) {};
				\ifthenelse{\NOT \i=0}{
					\tikzmath{
						integer \l;
						\l = \i - 1;}
					\draw (c\j\i) -- (c\j\l);
				}{}
			}
			\draw (c0\i) -- (c1\i);
		}
		\node[smallVertex] (c) at (19,5.5) {};
		\draw (c) -- (c10);
		\node[blue] at (17.7,6.5) {$2$};
		\node[blue] at (19.3,5.5) {$3$};
		\node[blue] at (19.3,7.5) {$1$};
		
		%sixth grid2
		\foreach \i in {0,1}{
			\foreach \j in {0,1}{
				\tikzmath{
					integer \newj;
					\newj = \j+18;
					\newi = \i-1.5;}
				\node[smallVertex] (d\j\i) at (\newj,\newi) {};
				\ifthenelse{\NOT \i=0}{
					\tikzmath{
						integer \l;
						\l = \i - 1;}
					\draw (d\j\i) -- (d\j\l);
				}{}
			}
			\draw (d0\i) -- (d1\i);
		}
		\node[smallVertex] (d) at (19,0.5) {};
		\draw (d) -- (d11);
		\node[blue] at (17.7,-0.5) {$3$};
		\node[blue] at (19.3,0.5) {$2$};
		\node[blue] at (19.3,-1.5) {$1$};
		
		% claws
		\node[smallVertex] (e1) at (19,1.5) {};
		\node[smallVertex] (e2) at (18,1.5) {};
		\node[smallVertex] (e3) at (18,2.5) {};
		\node[smallVertex] (e4) at (18,0.5) {};
		\draw (e1) -- (e2) -- (e3) (e2) -- (e4);
		\node[blue] at (19.3,1.5) {$2$};
		\node[blue] at (17.7,0.5) {$3$};
		\node[blue] at (17.7,2.5) {$1$};
		\node[blue] at (17.7,1.5) {$4$};
		
		\node[smallVertex] (e1) at (19,4.5) {};
		\node[smallVertex] (e2) at (18,4.5) {};
		\node[smallVertex] (e3) at (18,5.5) {};
		\node[smallVertex] (e4) at (18,3.5) {};
		\draw (e1) -- (e2) -- (e3) (e2) -- (e4);
		\node[blue] at (19.3,4.5) {$3$};
		\node[blue] at (17.7,3.5) {$1$};
		\node[blue] at (17.7,5.5) {$2$};
		\node[blue] at (17.7,4.5) {$4$};
		
		%paths
		\node[smallVertex] (c1) at (21,6.5) {};
		\node[smallVertex] (c2) at (21,7.5) {};
		\node[smallVertex] (c3) at (22,7.5) {};
		\draw (c1) -- (c2) -- (c3);
		\node[blue] at (20.7,6.5) {$2$};
		\node[blue] at (22.3,7.5) {$1$};
		
		\node[smallVertex] (d1) at (21,-0.5) {};
		\node[smallVertex] (d2) at (21,-1.5) {};
		\node[smallVertex] (d3) at (22,-1.5) {};
		\draw (d1) -- (d2) -- (d3);
		\node[blue] at (20.7,-0.5) {$3$};
		\node[blue] at (22.3,-1.5) {$1$};
		
		% claws
		\node[smallVertex] (e1) at (21,1.5) {};
		\node[smallVertex] (e2) at (22,1.5) {};
		\node[smallVertex] (e3) at (22,2.5) {};
		\node[smallVertex] (e4) at (22,0.5) {};
		\draw (e1) -- (e2) -- (e3) (e2) -- (e4);
		\node[blue] at (20.7,1.5) {$3$};
		\node[blue] at (22.3,0.5) {$1$};
		\node[blue] at (22.3,2.5) {$2$};
		
		\node[smallVertex] (e1) at (21,4.5) {};
		\node[smallVertex] (e2) at (22,4.5) {};
		\node[smallVertex] (e3) at (22,5.5) {};
		\node[smallVertex] (e4) at (22,3.5) {};
		\draw (e1) -- (e2) -- (e3) (e2) -- (e4);
		\node[blue] at (20.7,4.5) {$2$};
		\node[blue] at (22.3,3.5) {$3$};
		\node[blue] at (22.3,5.5) {$1$};
		
		%paths
		\node[smallVertex] (c1) at (24,6.5) {};
		\node[smallVertex] (c2) at (24,7.5) {};
		\node[smallVertex] (c3) at (25,7.5) {};
		\draw (c1) -- (c2) -- (c3);
		\node[blue] at (23.7,6.5) {$2$};
		\node[blue] at (25.3,7.5) {$1$};
		\node[blue] at (23.7,7.5) {$4$};
		
		\node[smallVertex] (d1) at (24,-0.5) {};
		\node[smallVertex] (d2) at (24,-1.5) {};
		\node[smallVertex] (d3) at (25,-1.5) {};
		\draw (d1) -- (d2) -- (d3);
		\node[blue] at (23.7,-0.5) {$3$};
		\node[blue] at (25.3,-1.5) {$1$};
		\node[blue] at (23.7,-1.5) {$4$};
		
		% claws
		\node[smallVertex] (e1) at (24,1.5) {};
		\node[smallVertex] (e2) at (25,1.5) {};
		\node[smallVertex] (e3) at (25,2.5) {};
		\node[smallVertex] (e4) at (25,0.5) {};
		\draw (e1) -- (e2) -- (e3) (e2) -- (e4);
		\node[blue] at (23.7,1.5) {$3$};
		\node[blue] at (25.3,0.5) {$1$};
		\node[blue] at (25.3,2.5) {$2$};
		\node[blue] at (25.3,1.5) {$4$};
		
		\node[smallVertex] (e1) at (24,4.5) {};
		\node[smallVertex] (e2) at (25,4.5) {};
		\node[smallVertex] (e3) at (25,5.5) {};
		\node[smallVertex] (e4) at (25,3.5) {};
		\draw (e1) -- (e2) -- (e3) (e2) -- (e4);
		\node[blue] at (23.7,4.5) {$2$};
		\node[blue] at (25.3,3.5) {$3$};
		\node[blue] at (25.3,5.5) {$1$};
		\node[blue] at (25.3,4.5) {$4$};
		
		% construction tree
		\draw[dashed, thick, orange] (-1.6,3) -- (-0.4,3) (1.4,3) -- (2.6,3) (4.4,3) -- (5.6,3) (10.4,6) -- (11.6,6) (10.4,3) -- (11.6,3) (10.4,0) -- (11.6,0) (16.4,6.5) -- (17.6,6.5) (16.4,4.5) -- (17.6,4.5) (16.4,1.5) -- (17.6,1.5) (16.4,-0.5) -- (17.6,-0.5) (22.4,7) -- (23.6,7) (22.4,4.5) -- (23.6,4.5) (22.4,1.5) -- (23.6,1.5) (22.4,-1) -- (23.6,-1);
		\draw[thick, orange] (8.6,0) -- (7.5,3) -- (8.6,6) (7.5,3) -- (8.6,3) (14.6,4.5) -- (13.5,6) -- (14.6,7) (14.6,-1) -- (13.5,0) -- (14.6,1.5) (20.6,4.5) -- (19.5,6.5) -- (20.6,7) (20.6,-1) -- (19.5,-0.5) -- (20.6,1.5);
	\end{tikzpicture}
	\caption{A $4$-construction tree for the grid $\grid{2}{7}$ of elimination depth $6$. Edges entering elimination nodes are dashed.}
	\label{fig:elim-example}
\end{figure}

It turns out that all three notions coincide, that is given some fixed $k,q\geq 1$ for unlabelled graphs they define the same graph class.
The equivalence of \autoref{def:Ekq} and \autoref{def:depth-treedec} is proven by a careful choice of the tree-decomposition where one can identify the bags with the labelled vertices of the construction tree.
For the equivalence of \autoref{def:depth-treedec} and \autoref{def:kpebbleforest}, we follow the proof of \cite[Theorem~19]{Abramsky_relating_2021} and observe that their construction preserves depth.
The following theorem extends \autoref{thm:Ekq_equiv-informal}, equivalences 1 and 2.
\newpage

\begin{thm}
	\label{thm:Ekq_equiv}
	Let $k, q\geq 1$. For every graph $G$, the following are equivalent:
	\begin{enumerate}
		\item $G$ is \elimOrd{k}{q},\label{ekqEquiv1}
		\item $G$ has a tree-decomposition of width $k-1$ and depth $q$,\label{ekqEquiv2}
		\item $G\in\Ekq$, that is $G$ admits a $k$-pebble forest cover of depth $q$.\label{ekqEquiv3}
	\end{enumerate}
\end{thm}

\begin{proof}
	We first show $(1)\Leftrightarrow (2)$.
	
	Let $G$ be \elimOrd{k}{q} witnessed by a $k$-construction tree $(T,\lambda,r)$ of \elimDepth\ $\leq q$.
	For every node $t\in V(T)$ and the corresponding $k$-labelled graph $\lambda(t)$, we write $\beta(t)$ for the set of labelled vertices of $\lambda(t)$.
	We show that $(T,\beta,r)$ is a tree-decomposition of $G$ of width $\leq k-1$ and depth $\leq q$.
	We observe that the number of vertices that are in any bag of a path from the root to some leaf equal the number of times a label was removed on the same path and therefore is bounded by the \elimDepth.
	Thus one only has to show that this is indeed a tree-decomposition.
	For every edge $uv\in E(G)$, there has to be some leaf $\ell\in V(T)$, such that the edge is already present in $\lambda(\ell)$. Hence, $u,v \in \beta(\ell)$.
	The same holds for all isolated vertices and \ref{ax:GraphDec1} holds.
	Finally, we have that $\beta^{-1}(v)$ is the subset of nodes $t\in V(T)$ such that $v\in V(\lambda(t))$ and $v$ is labelled in $\lambda(t)$.
	This subtree is obviously connected, as labels can only be introduced at leaf nodes and only labelled vertices are identified at join nodes, thus \ref{ax:GraphDec2} holds.
	
	Conversely, let $(T,r,\beta)$ be a rooted tree-decomposition of $G$ of width $\leq k-1$ and depth $\leq q$.
	A rooted tree-decomposition is \emph{nice} if every node that is not a leaf is either an introduce node, a forget node or a join node.
	We call a node $t$ 
	\begin{itemize}
		\item \emph{introduce node} if it has exactly one child $s$ and 
		there exists a vertex $v\in V(G) \setminus \beta(s)$ with $\beta(t)=\beta(s)\cup \{v\} $,
		\item \emph{forget node} if it has exactly one child $s$ and there exists a vertex $v\in V(G) \setminus \beta(t)$ with $\beta(s)=\beta(t)\cup \{v\}$, and 
		\item \emph{join node} if it has exactly two children $s_1,s_2$ and $\beta(t)=\beta(s_1)=\beta(s_2)$.
	\end{itemize}
	Additionally, the bag of the root node and of all leaf nodes are required to be empty.
	By \cite{Bodlaender_partial_1998}, if there is a tree-decomposition then there also is a nice tree-decomposition.
	We observe that the technique to make a tree-decomposition nice preserves the depth of the decomposition.
	Thus w.l.o.g.\ $(T,r,\beta)$ is nice.
	We observe that a join node of a nice tree-decomposition is quite similar to the product node of a construction tree, as is the forget node to the eliminate node.
	But we need to get rid of the introduce nodes.
	We construct a new rooted tree-decomposition $(T',r,\beta')$, where at every introduce node $t$ we append a new leaf $\ell_t$, with bag $\beta'(\ell_t)=\beta(t)$.
	Furthermore we set $\beta'(t)=\beta(t)$, for every $t\in V(T)$.
	For every $t\in V(T')$, we now set $\lambda(t):=G[\gamma(t)]$, that is the subgraph induced by all vertices that are in a bag of the subtree above $t$.
	We set $\beta(t)$ to be the labelled vertices of $\lambda(t)$.
	It remains to define a colouring function $c\colon V(G)\rightarrow [k]$, such that for all $t\in V(T')$, it holds that $c|_{\beta(t)}$ is injective.
	We define this colouring via traversing the tree $T'$ from the root to the leafs.
	Whenever $t$ is a forget node with child $s$, such that $\beta(s)\setminus\beta(t)=\{v\}$, we set $c(v)$ to be the smallest value that is not used for any vertex in $\beta(t)$.
	Then $T'$ together with the graphs $H_t$, that are labelled via the function $c|_{\beta(t)}^{-1}$ is the desired elimination order.
	
	Next we show $(2)\Leftrightarrow (3)$.
	To do that, we use the same construction as in the proof of \cite[Theorem~19]{Abramsky_relating_2021}, where the authors gave a proof that a tree-decomposition of width $\leq k-1$ exists if and only if a $k$-pebble forest cover exists.
	We recall their construction and prove that it preserves depth.
	
	Let $(T,\beta)$ be a tree-decomposition of $G$, of width $\leq k-1$ and depth $\leq q$, and $r\in V(T)$ such that $\dep(T,\beta) = \dep(T,r,\beta)$.
	Again w.l.o.g.\ $(T,r,\beta)$ is nice.
	We define the function $\tau\colon V(G) \rightarrow V(T)$ to map every vertex of $G$ to the unique node of $t$ such that $v\in\beta(t)$ that has smallest distance to the root $r$.
	As $(T,r,\beta)$ is nice, this function is injective.
	The forest cover $(F,\vec{r}')$ is induced by the partial order on the image of $\tau$ with respect to $(T,r)$.
	
	We observe that for all $v\in V(G)$ and $t\in V(T)$ from $v\in\beta(t)$ it follows that $\tau(v)\preceq t$, thus \autoref{ax:pebble-forest-edges} holds as every edge of $G$ is covered by some bag.
	The pebbling function $p\colon V(G) \rightarrow [k]$ is defined inductively.
	Assume $p$ is defined for all $v'\prec v$.
	Then we get $p(v)= \min ([k] \setminus \{ p(v')\mid v'\in \beta(\tau(v))\setminus\{v\}\})$.
	Thus \autoref{ax:pebble-forest-pebble} holds again as every edge of $G$ is covered by some bag.
	We observe that the depth of $(F,\vec{r}')$ is the length of the longest chain in the image of $\tau$ with respect to $(T,r)$.
	Thus the depth of $(F,\vec{r}')$ equals the maximum number of vertices from the root $r$ to any leaf of $T$ and thus the depth of $(T,r,\beta)$.
	
	Now let $(F, \vec{r},p)$ be a $k$-pebble forest cover of depth $\leq q$ of $G$.
	We construct a rooted tree $(T,r')$ by introducing a new root $r'$ that connects to all nodes in $\vec{r}$.
	We define $\beta(r')=\emptyset$ and $\beta(t)\coloneqq \{u\preceq t\mid \text{ for all } w\in V(G), u\prec w\preceq t \Rightarrow p(u)\neq p(w) \}$, for every $t\in V(F)$.
	We have that \[\dep(T,r,\beta)= \max_{v \in V(T)} \left| \bigcup_{t\in P_v} \beta(t) \right| = \max_{t\in V(F)}|\{s\in V(F)\mid s\preceq t \}|=\dep(F,\vec{r},p).\qedhere\]
\end{proof}

\begin{cor}
	\label{cor:Ekq-subset-tw-td}
	\label{cor:minor-closed}
	For $k, q\geq 1$, the graph class
	$\Ekq $ is minor\-/closed, closed under taking disjoint unions, and a subclass of $\TW_{k-1}\cap\TD_q$.
\end{cor}
\begin{proof}
	Per definition, every graph $G \in \Ekq$ has a forest cover of depth $q$, so in particular treedepth at most $q$. By \autoref{thm:Ekq_equiv}, it equivalently has a tree-decomposition of width~$k$, so in particular treewidth at most $k-1$. It follows that $\Ekq \subseteq \TW_{k-1} \cap \TD_q$. Taking disjoint unions of graphs in $\Ekq$ corresponds to joining two $k$-construction trees under a product node, which does not increase the elimination depth. The disjoint union of two graphs in $\Ekq$ is thus also contained in $\Ekq$. 
	
	Finally, consider some graph $G$ with tree-decomposition $(T, \beta)$. If $G'$ is obtained from $G$ by removing an edge, then $(T, \beta)$ is still a tree-decomposition of $G'$. If $G'$ is obtained from $G$ by removing a vertex $v$, then $(T, \beta')$ with $\beta'(t) = \beta(t) \setminus \{v\}$ for all $t\in V(T)$ is a tree-decomposition for $G'$. If $G'$ is obtained from $G$ by contracting the edge $uv$, then replacing all instances of $u$ and $v$ in the bags by the newly created vertex yields a tree-decomposition for $G'$. None of these operations increases width or depth of the tree-decomposition, which through \autoref{thm:Ekq_equiv} implies that $\Ekq$ is minor-closed.
\end{proof}

Dawar, Jakl, and Reggio reduced the proofs of the results of Grohe and Dvo\v{r}\'ak \cite{Dvorak_recognizing_2010,Grohe_counting_2020} to a \enquote{combinatorial core} \cite[Remark~17]{dawar_lovasz-type_2021}, which amounts to showing that the classes $\TW_k$ and $\TD_q$ are closed under contracting edges.
To that end, \autoref{cor:minor-closed} illustrates the benefits of characterizing $\Ekq$ in terms of tree-decompositions (\autoref{def:depth-treedec}): Proving that pebble forest covers are preserved under edge contractions requires a non-trivial amount of bookkeeping while the analogous statement for tree-decomposition is straightforward.

\section{A Cops-and-Robber game}
\label{sec:deep-wide-game}

In this section we introduce a characterization of $\Ekq$ in terms of a Cops\-/and\-/Robber game.
We find that Robber does not gain an advantage if Cop is forced to play monotonely.
In order to prove that, we introduce a \emph{\preTreeDec} and lift the definition of depth of a tree-decomposition to matroid tree-decompositions.
Lastly, building upon a result from \cite{Furer_rounds_2001}, we show that $\Ekq$ is a proper subclass of $\TW_{k-1} \cap \TD_q$ if $q$ is sufficiently larger than $k$.
The following Theorem is a formal restatement of \autoref{thm:Ekq_equiv-informal} equivalences 1, 3 and 4.

\begin{thm}
	\label{thm:Ekq-cops}
	Let $G$ be a graph and $q,k\geq 1$.
	The following are equivalent:
	\begin{enumerate}
		\item Cop wins the game $\monCR_q^{k}(G)$,
		\item Cop wins the game $\CR_q^{k}(G)$, and
		\item $G\in \Ekq$.
	\end{enumerate}
\end{thm}

The implication from (1) to (2) is trivial.
The implication from (2) to (3) is where the the crux lies and Sections~\ref{sec:predec}, \ref{sec:game} and
\ref{sec:make-exact} are dedicated to the proof.
In Section~\ref{sec:predec} we introduce the notions of pre-tree-decomposition
and \emph{exact} pre-tree-decomposition,
together with relevant properties. In particular, 
in Lemma~\ref{lem:exact-and-tdec} we show that an exact pre-tree-decomposition gives rise to a tree decomposition and vice versa. This reduces the task to finding an exact pre-tree-decomposition. 
In Section~\ref{sec:game} we show
that a (non-monotone) winning strategy for Cop gives rise to a pre-tree-decomposition. In Section~\ref{sec:make-exact} we then show the pre-tree-decomposition can be turned into an \emph{exact} pre-tree-decomposition using a careful cleaning-up procedure. 
This then completes the proof.

We begin by proving the implication from (3) to (1).

\begin{lem}
	Let $G\in \Ekq$ be a graph and $q,k\geq 1$. Cop wins the game $\monCR_q^{k}(G)$.
\end{lem} 

\begin{proof}
	Suppose without loss of generality that $1\leq k\leq q$.
	
	As $G\in\Ekq$, there is a $k$-pebble forest cover $(F,\vec{r},p)$ of depth at most $q$.
	Write $\preceq$ for the order induced by $(F, \vec{r})$.
	For a set $Y\subseteq V(G)$ and a vertex $x\in V(G)$, 
	write $Y \preceq x$ if $y \preceq x$ for all $y \in Y$.
	Define $x \preceq Y$ analogously.
	For a set $Y \subseteq V(G)$ that is totally ordered by $\preceq$, write $\max_{\preceq}Y$ for its maximal element, i.e.\ the unique element $y \in Y$ such that $Y \preceq y$.
	Note that, for an arbitrary set $Y \subseteq  V(G)$, it may be that there is no such maximal element.
	
	Intuitively, Robber is forced away from a root to a leaf of the pebble forest cover by Cop occupying the reachable vertices below the robber's position.
	This is analogous to the definition of the tree decomposition induced by a pebble forest cover in the proof of \autoref{thm:Ekq_equiv}.
	For a vertex $v \in V(G)$, define the set
	\[
	\beta(v) \coloneqq \{u \preceq v \mid \forall w \in V(G), \text{if } u \prec w \preceq v, \text{then } p(u) \neq p(w)\}.
	\]
	Since there are at most $k$ pebble indices, it holds that $|\beta(v)| \leq k$ for every $v \in V(G)$.
	Furthermore, if $v \prec v'$ are such that there is no  vertex $v \prec v'' \prec v'$,
	then $|\beta(v') \setminus \beta(v)| \leq 1$.
	
	The game $\monCR_q^k(G)$ commences with the position $(\emptyset, y)$, where $y$ is the vertex chosen by Robber.
	Throughout the game, Cop keeps track of a vertex $r_1 \in V(G)$, which is initially set to the $\preceq$-minimal vertex such that $r_1 \preceq y$, i.e.\ $r$ is the root of the connected component of $\preceq$ containing $y$.
	
	The first position Cop assumes is $\beta(r_1)$. 
	At any later stage $i \geq 1$, 
	Cop moves to $\beta(r_{i+1})$ where $r_{i+1}$ is the $\preceq$-minimal vertex such that $r_i \prec  r_{i+1} \preceq y$
	and $y$ is the vertex occupied by Robber.
	
	It nees to be argued that this strategy is well-defined, i.e.\ that, at any stage, Robber's position $y$ satisfies $r_i \preceq y$.
	By definition, this holds for $i = 1$.
	For the inductive step, suppose that $r_i \preceq y$. 
	Cop announces to move to $\beta(r_{i+1})$.
	Now Robber must move to a vertex $y'$ reachable from $y$ via a path that does not traverse $\beta(r_i) \cup \beta(r_{i+1})$. 
	Let $y''$ denote any vertex on such a path that is $\preceq$-comparable with $y$.
	If $y \preceq y''$,
	then Robber is  caught or $r_{i+1} \preceq y''$, as desired.
	Hence, suppose that $y'' \prec y$.
	If $y''$ and $y$ are adjacent in $G$,
	then, by \ref{ax:pebble-forest-pebble},  $y''$ is the $\preceq$-maximal vertex preceeding $y$ that carries the pebble $p(y'')$.
	It follows that $r_{i+1} \preceq y''$ since otherwise $y'' \in \beta(r_{i+1})$ and Robber is caught.
	
	A similar argument applies when $y''$ and $y$ are not adjacent but $y''$ is merely on Robber's paths.
	It follows that $r_{i+1} \preceq y'$, as desired.
	
	Having established that Cop's strategy is well-defined and that Robber always occupies a vertex $y$ such that $r_i \preceq y$,
	it follows that Robber is caught after at most $q$ moves since $r_1 \prec r_2 \prec r_3 \prec \dots$ forms a strict chain whose length is bounded by the depth of $\prec$.
\end{proof}

\subsection{\PreTreeDec, exactness and submodularity}\label{sec:predec}

Here we consider a definition of tree-decompositions that is inspired by tree-decompositions of matroids.
We relax this definition into what we call a \preTreeDec.

\begin{defi}
	Let $G=(V(G),E(G))$ be a graph.
	Let $X\subseteq E(G)$.
	Let $\pi\in\partitionsEmpty(E(G))$.
	We define
	\[
	\boundary(\pi)\coloneqq\{v\in V(G)\mid \exists X\in \pi, v\in\boundary(X) \}.
	\]
	A tuple $(T,r,\beta,\gamma)$, where $(T,r)$ is a rooted tree, $\beta\colon V(T)\rightarrow 2^{V(G)}$ and $\gamma\colon \overrightarrow{E(T)}\rightarrow 2^{E(G)}$, is a \emph{rooted \preTreeDec} if:
	\begin{enumerate}[label=(PD.\arabic*), labelindent=0pt, itemindent=*, leftmargin=*]
		\item $\beta(r)=\emptyset$ and for every connected component $C$ of $G$, there is a child $c$ of the root with $\gamma(r,c)=E(C)$.\label{ax:preTreeRoot}
		\item For every leaf $\ell\in L(T)$ with parent $t$, it holds that $|\gamma(t,\ell)|\leq 1$.\label{ax:preTreeLeaf}
		\item For every internal node $t\in V(T)\setminus L(T)$, we define $\pi_t\coloneqq(\gamma(t,t_1),\ldots,\gamma(t,t_d))$, where $N(t)=\{t_1,\ldots,t_d\}$ an arbitrary enumeration of the neighbours of $t$, and for a leaf $\ell \in L(T)$ with parent $p$ we define $\pi_\ell\coloneqq(\gamma(\ell,p),\complementOf{\gamma(\ell,p)})$.
		For every $t\in V(T)$, the tuple $\pi_t$ is an ordered partition of $E(G)$ and $\boundary(\pi_t)\subseteq \beta(t)$.\label{ax:preTreePart}
		\item For every edge $st\in E(T)$, it holds that $\gamma(s,t)\cap \gamma(t,s) = \emptyset$.\label{ax:preTreeEdge}
	\end{enumerate}
	We call an edge $st\in E(T)$ \emph{exact} if $\gamma(s,t)\cup\gamma(t,s)=E(G)$, we call $(T,r,\gamma,\beta)$ \emph{exact}, if every edge is exact and $\beta(t)= \boundary(\pi_t)$, for all $t\in V(T)$.
	We call $\beta(t)$ the \emph{bag} at node $t$ and $\gamma(s,t)$ the \emph{cone} at edge $st$.
\end{defi}

The function $\gamma$ describes a partition of the edges of the graph at every inner node, whereas the function $\beta$ gives a vertex separator for this partition.
This separator may contain more vertices than necessary at a certain node, which is needed to define the depth of a \preTreeDec, as seen below.
For an edge $st\in E(T)$ with $s\prec t$, we can view $\gamma(s,t)$ as the set of edges that need to be decomposed in the subtree below and $\gamma(t,s)$ as the set of edges that is for sure decomposed somewhere else within the tree.
With this point of view the axioms correspond to the following ideas:
\begin{enumerate}[label=(PD.\arabic*), labelindent=0pt, itemindent=*, leftmargin=*]
	\item We start by separating the different connected components of the graph and assign one distinct subtree to decompose each component. This way we also ensure that all of the graph is decomposed in some subtree.
	\item We want to decompose the graph into single edges. We do allow empty leaves and even empty subtrees, to ease our cleaning up procedure in the following sections. The reader may recall that the root of a rooted tree is never a leaf, by definition.
	\item At every node of the tree we make sure that all edges of the original graph are accounted for and that $\beta$ is indeed a separator for the given partition.
	\item If a parent node assigns an edge to the set that still has to be decomposed, the child node can not assign this edge to the set that is already decomposed somewhere else. But the other direction is possible, if the parent node assigns an edge to the set that is decomposed somewhere else, the child can still assign it to one of its subtrees. If the latter is also not the case the edge is exact.
\end{enumerate}

Similar to the definition of width and depth for tree-decompositions we define the width and depth of a \preTreeDec.
We slightly adapt the definition of depth as \ref{ax:GraphDec2} does not hold in \preTreeDec s.

\begin{defi}
	The \emph{width} of a partition $\pi$ of the edges of a graph is
	\[
	\wid(\pi)\coloneqq|\boundary(\pi)|.
	\]
	The \emph{width} of a \preTreeDec\ is
	\[
	\wid(T,r,\beta,\gamma)\coloneqq\max_{t\in V(T)} |\beta(t)| - 1.
	\]
	The \emph{depth} of a rooted \preTreeDec\ is
	\[
	\dep(T,r,\beta,\gamma) \coloneqq \max_{t\in V(T)} \sum_{r\prec s\preceq t} |\beta(s) \setminus \beta(p_s)|.\qedhere
	\]
\end{defi}

The reader may note that the width of a \preTreeDec\ only gets smaller if one sets $\beta(t)\coloneqq \boundary(\pi_t)$, for all nodes $t\in V(T)$, but the depth can become larger.

\begin{exa}
	\label{ex:pre-tree-dec}
	Let $G$ be the graph that results from a $2\times 5$ grid when contracting the edge $(1,3)(2,3)$ into a single vertex $3$, see \autoref{fig:pre-tree-dec-graph}.
	A \preTreeDec\ of the graph $G^\circ$, that is $G$ with a self-loop attached to each vertex, can be seen in \autoref{fig:pre-tree-dec-dec}.
	The labels of the nodes are the bags and the labels of the edges the cones.
	All edges that are drawn dashed and in orange are not exact and for these edges the cones at both directions are depicted.
	The cones pointing towards and away from leaves are omitted, they are exactly the edge covered by the bag of the leaf.
	For all other edges only the cone for the direction pointing away from the root is depicted, the other cone is its complement.
	Vertices in the bags that are not part of the corresponding boundary are red.
	The blue vertices have a symmetric subtree.
	We omit one due to space restrictions.
	
	The width of this \preTreeDec\ is $4$, the depth is $9$.
	The reader may observe that if the bags where changed to $\beta'(t)\coloneqq \boundary(\pi_t)$ the depth would change to $10$, whilst the width would not change.
\end{exa}

\begin{figure}
	\begin{subfigure}[t]{\textwidth}
		% [inline block 0: 73 envs, 23986 chars -> data_tex | \begin{tikzpicture}[smallVertex/.style={fill=black, inner sep=1.5, circle}] 			\node at (-4.5,1) {\llap{\textbf{(a)}}};...]
} (t);
			\draw[blue, dashdotted] (t) -- (9,-10) -- (11,-10) -- (t);
		\end{tikzpicture}
		\phantomcaption{~}
		\label{fig:pre-tree-dec-dec}
	\end{subfigure}
	\caption{Graph and corresponding \preTreeDec\ from \autoref{ex:pre-tree-dec}.}
\label{fig:pre-tree-dec}
\end{figure}

We show that the width of a partition of the edges as defined above is submodular.
We need this property to show that our main construction does not enlarge the width of the \preTreeDec.

\begin{lemC}[\protect{\cite[Proposition 7]{Amini_submod-partition_09}}]\label{lem:submodular}
	For every graph $G$, the width $\wid$ of partitions is \submod.
	That is,  for all $\pi,\pi'\in \partitionsEmpty(E(G))$, for all sets $X\in \pi$ and $Y\in \pi'$ with $X\cup Y \neq U$, it holds that
	\[\wid(\pi)+\wid(\pi')\geq \wid(\pi_{X\ext \complementOf{Y}})+\wid(\pi'_{Y\ext \complementOf{X}}).\]
\end{lemC}

\begin{proof}
	Let $\pi=(X_1,\ldots,X_d),\pi'=(Y_1,\ldots,Y_d)\in\partitionsEmpty(E(G))$.
	We prove that \[\wid(\pi)+\wid(\pi')\geq \wid(\pi_{X_1\ext \complementOf{Y_1}})+\wid(\pi'_{Y_1\ext \complementOf{X_1}}),\] which is enough to prove the lemma by symmetry.
	
	If $X_1=E(G)$, then $\pi=\pi_{X_1\ext \complementOf{Y_1}}$ and $\pi'=\pi'_{Y_1\ext \complementOf{X_1}}$, thus the lemma holds.
	
	If $X_1=\emptyset$, then $\pi'_{Y_1\ext\complementOf{X_1}}=(E(G),\emptyset,\ldots,\emptyset)$ and thus $\wid(\pi'_{Y_1\ext\complementOf{X_1}})=0$.
	Furthermore $\boundary(\pi_{X_1 \ext \complementOf{Y_1}}) \subseteq \boundary(\pi) \cup \boundary(Y_1) \subseteq \boundary(\pi) \cup \boundary(\pi')$ and thus $\wid(\pi_{X_1 \ext \complementOf{Y_1}}) \leq \wid(\pi) + \wid(\pi')$, thus the lemma holds.
	
	If $Y_1=E(G)$ or $Y_1=\emptyset$ the lemma holds analogously.
	
	Thus let $\emptyset\neq X_1,Y_1\neq E(G)$.
	Trivially we get that $\boundary(\pi_{X_1 \ext \complementOf{Y_1}}) \subseteq \boundary(\pi) \cup \boundary(Y_1)$ and $\boundary(\pi'_{Y_1 \ext \complementOf{X_1}}) \subseteq \boundary(\pi') \cup \boundary(X_1)$.
	Assume there exists some $v\in \boundary(\pi_{X_1 \ext \complementOf{Y_1}})\setminus \boundary(\pi)$, then $v\in \boundary(Y_1)$ and thus $v\in \boundary(\pi')$.
	Furthermore we get that $E(v) \cap X_1=\emptyset$ and thus it holds that $E(v)\subseteq Y_1 \cup \complementOf{X_1}$.
	But then $v\notin \boundary(\pi'_{Y_1 \ext \complementOf{X_1}})$.
	Analogously we can show that $\left(\boundary(\pi'_{Y_1 \ext \complementOf{X_1}})\setminus \boundary(\pi')\right) \cap \boundary(\pi_{X_1 \ext \complementOf{Y_1}})=\emptyset$.
	Thus all in all every vertex that is newly introduced to one of the boundaries $\boundary(\pi_{X_1 \ext \complementOf{Y_1}}), \boundary(\pi'_{Y_1 \ext \complementOf{X_1}})$ is removed from the other and therefore the lemma holds.
\end{proof}

We continue this section with some lemmas, that help us to get comfortable with the definition of a \preTreeDec\ and are useful to prove that our cleaning up procedure in the following sections is correct.
We start with a lemma about the cones along a path of exact edges.
It is a direct consequence of exactness and the fact that the cones incident to a vertex form a partition of the edges.

\begin{lem}
	\label{obs:exact-path}
	Let $(T,r,\beta,\gamma)$ be a \preTreeDec\ of a graph $G$.
	Let $P=t_1,\ldots,t_\ell$ be a path in $T$, such that every edge $t_it_{i+1}$, for $i\in[\ell-1]$, is exact.
	Then it holds that
	$\gamma(t_1,t_2) \supseteq \gamma(t_2,t_3) \supseteq \ldots \supseteq \gamma(t_{\ell-1},t_\ell)$.
\end{lem}

\begin{proof}
	We show that $\gamma(t_1,t_2) \supseteq \gamma(t_2,t_3)$ and then the lemma follows by induction over the path.
	As $\pi_{t_2}$ is a partition of the edges, we get that $\complementOf{\gamma(t_2,t_1)} \supseteq \gamma(t_2,t_3)$.
	By definition of exactness we have that $\complementOf{\gamma(t_2,t_1)}=\gamma(t_1,t_2)$, which concludes the proof.
\end{proof}

The following lemma shows, that \ref{ax:preTreePart} spreads over exact edges, that is any subtree of $T$ that only contains exact edges induces a partition of the edges of the original graph.

\begin{lem}
	\label{obs:exact-subtree}
	Let $(T,r,\beta,\gamma)$ be a \preTreeDec\ of a graph $G$ and let $(T',r')$ be a subtree of $(T,r)$, where $r'$ is the minimal node of $T'$ with respect to $\preceq$, such that all edges of $T'$ are exact.
	We pick arbitrary enumerations of $N(V(T')\fktmid T)\coloneqq \{t_1,\ldots,t_a\}$ and of $L(T)\cap L(T')\coloneqq \{\ell_1,\ldots,\ell_b\}$.
	For $U\coloneqq\{t_1,\ldots,t_a,\ell_1,\ldots,\ell_b\}$, 
	define $s\colon U\rightarrow V(T')$ by mapping $u \in U$ to its unique neighbour in $T'$.
	Then \[ (\gamma(s(t_1),t_1),\ldots, \gamma(s(t_a),t_a),\gamma(s(\ell_1),\ell_1), \ldots,\gamma(s(\ell_b),\ell_b)) \] is an ordered partition of $E(G)$.
\end{lem}

\begin{proof}
	Assume there exists some $e\in E(G)\setminus \bigcup_{u\in U} \gamma(s(u),u)$.
	As $\pi_{r'}$ is a partition of $E(G)$ there is some $s_1\in V(T)$ such that $e\in \gamma(r',s_1)$.
	By \ref{ax:preTreeEdge}, we know that $e\notin \gamma(s_1,r')$.
	If $s_1\notin V(T')$ or $s_1\in L(T')$ we have found a contradiction.
	Otherwise as $\pi_t$ is a partition of $E(G)$ there is some $s_2\in V(T)$ such that $e\in \gamma(s_1,s_2)$ and by \ref{ax:preTreeEdge}, we know that $e\notin \gamma(s_2,s_1)$.
	Again if $s_2\notin V(T')$ or $s_2\in L(T')$ we have found a contradiction.
	Using this argument we inductively construct a sequence $r',s_1,s_2,\ldots$ such that $e\in \gamma(s_i,s_{i+1})$ and $e\notin \gamma(s_{i+1},s_i)$.
	If $s_i\notin V(T')$ or $s_i\in L(T')$, for any $i>1$, we have found a contradiction, thus this sequence is infinite.
	This is a contradiction to $T'$ being a finite tree, thus our assumption was false and $E(G) = \bigcup_{u\in U} \gamma(s(u),u)$.
	
	Now assume there are distinct $t,t'\in U$ and there exists $e\in\gamma(s(t),t)\cap \gamma(s(t'),t')$.
	Let $s_1=s(t),s_2,\ldots,s_c=s(t')$ be the unique path from $s(t)$ to $s(t')$ in $T'$.
	From \autoref{obs:exact-path} we get that $e\in\gamma(s(t),s_2)$.
	Since $s_2\neq t$ this contradicts $\pi_{s(t)}$ being a partition.
	Thus for all distinct $t,t'\in U$ it holds that $\gamma(s(t),t)\cap \gamma(s(t'),t')=\emptyset$.
\end{proof}

The next lemma is needed in th etranslation of exact \preTreeDec s into tree-decompositions.
Furthermore it will help us bound the depth within our cleaning up procedure in the following sections.

\begin{lem}
	\label{obs:exact-subtree-depth}
	Let $(T,r,\beta,\gamma)$ be a \preTreeDec\ of a graph $G$ and let $(T',r')$ be a subtree of $(T,r)$, where $r'$ is the minimal node of $T'$ with respect to $\preceq$, such that all edges of $T'$ are exact.
	For every vertex $v\in V(G)$, it holds that the induced subgraph 
	$T'_v\coloneqq T[t\in V(T')\mid v\in \boundary(\pi_t)]$ is connected.
	In particular, if $r=r'$, for every $t\in V(T')$, it holds that
	\begin{equation*}
		\sum_{\substack{s\preceq t\\s\neq r}} \left\lvert \boundary(\pi_s) \setminus \boundary(\pi_{p_s}) \right\rvert
		= \left\lvert \bigcup_{s\preceq t} \boundary(\pi_s) \right\rvert.
	\end{equation*}
\end{lem}

\begin{proof}
	Assume there exists a $v\in V(G)$ such that $T'_v$ is not connected.
	Let $T_1,T_2$ be two disjoint connected components of $T'_v$ and let $P=t_1,\ldots,t_a$ be the shortest $T_1$-$T_2$-path in $T'$.
	Then $v\notin \boundary(\gamma(t_1,t_2))\subseteq \boundary(\pi_{t_2})$ and thus $E(v)\cap \gamma(t_1,t_2)=\emptyset$.
	As all edges in $P$ are exact it holds that $\gamma(t_1,t_2)\supseteq \gamma(t_a,s)$, for all $s\in N(t_a) \setminus\{t_{a-1}\}$.
	And thus it holds that $E(v)\cap \gamma(t_a,s)=\emptyset$ and $E(v)\subseteq \gamma(t_a,t_{a-1})$.
	This contradicts $v\in\bigcup_{s\in N(t_a)} \boundary(\gamma(t_a,s))$.
\end{proof}

We conclude this section with a lemma that shows that a \preTreeDec\ of a graph $G$ is indeed a relaxation of a tree-decomposition of $G$.
If every edge is exact and all bags are exactly the boundary of the partition, then we can construct a tree decomposition.
We need to start with a \preTreeDec\ of the graph $G^\circ$ with all self-loops added to ensure that every non-isolated vertex does appear in some bag and that the components corresponding to isolated vertices are covered by the \preTreeDec.
On the other hand we can transform a tree-decomposition into a \preTreeDec, by copying the tree-decomposition of each connected component of $G$ and adding leaves that correspond to the edges of $G^\circ$.

\begin{lem}\label{lem:exact-and-tdec}
	\label{lem:tw-ptw}
	Let $k,q\geq 1$.
	Let $G=(V,E)$ be a graph.
	Any tree-decomposition of $G$ of width $\leq k-1$ and depth $\leq q$ gives rise to an exact \preTreeDec\ of $G^\circ$ of width $\leq k-1$ and depth $\leq q$ and vice versa.
\end{lem}

\begin{proof}
	Let $(T,r,\beta,\gamma)$ be an exact \preTreeDec\ of $G^\circ$ of width $\leq k-1$ and depth $\leq q$.
	We define $\beta'\colon V(T)\rightarrow 2^{V(G)}$ as follows
	\begin{equation*}
		\beta'(t)\coloneqq\begin{cases}
			\{v\} & \text{ if } t\in L(T) \text{ and } r \text{ is parent of } t \text{ and } \gamma(r,t)=\{vv\},\\
			\beta(t) & \text{ otherwise.}
		\end{cases}
	\end{equation*}
	\begin{claim}
		$(T,\beta')$ is a tree-decomposition of width $\leq k-1$ and depth $\leq q$.
	\end{claim}
	\begin{claimproof}
		From \ref{ax:preTreeRoot}, \ref{ax:preTreeLeaf} and \autoref{obs:exact-subtree} applied to the complete tree $(T,r)$ we get that for every edge $uv\in E(G^\circ)$ there is some leaf $\ell$ with parent $p$ and $\gamma(p,\ell)=\{uv\}$.
		Thus if $u=v$, then $\beta'(\ell)=\{v\}$ and thus $vv\in E(G[\beta'(\ell)])$.
		Otherwise it holds that $uu,vv\in E(G^\circ)\setminus \{uv\}$ and thus $u,v\in\beta'(\ell)$ and $uv\in E(G[\beta'(\ell)])$.
		All in all we get that \ref{ax:GraphDec1} holds.
		
		By \autoref{obs:exact-subtree-depth} applied to the complete tree $(T,r)$ we know that all $T_v$ are connected.
		Therefore \ref{ax:GraphDec2} also holds and $(T,\beta')$ is a tree-decomposition.
		
		The width and depth are obvious as $k,q\geq 1$.
	\end{claimproof}
	
	Now let $(T,r,\beta)$ be a tree-decomposition of $G$ of width $\leq k-1$ and depth $\leq q$.
	W.l.o.g.\ $\beta$ is \emph{tight}. That is, for all $t\in V(T)$ and $v\in \beta(t)$, it holds that $(T,r,\beta')$ is not a tree-decomposition of $G$, where $\beta'(t)\coloneqq\beta(t)\setminus\{v\}$ and $\beta'(s)=\beta(s)$, for all $s\in V(T)\setminus \{t\}$.
	We construct a new tree $T'$ with root $r'$ and functions $\beta'\colon V(T')\rightarrow 2^{V(G)}$, $\gamma\colon \overrightarrow{E(T')} \rightarrow 2^{E(G^\circ)}$ and $f\colon V(T')\setminus\left(L(T')\cup\{r'\}\right)\rightarrow V(T)$ as follows.
	Let $C$ be a connected component of $G$ and let $V_C\coloneqq \{t\in V(T)\mid V(C)\cap \beta(t)\neq \emptyset\}$.
	By \autoref{lem:GraphDecConnSubgraph} $V_C$ is connected.
	If $C$ contains only an isolated vertex $v$, then $V_C=\{t\}$, for some $t\in V(T)$.
	We add a new node $t_v$ to $T'$ and connect it to the root.
	We set $\beta'(t_v)=\emptyset$, $\gamma(r',t_v)=\{vv\}$ and $\gamma(t_v,r')= E(G^\circ)\setminus \{vv\}$.
	Otherwise let $T_C$ be a copy of the subtree induced by $V_C$ with root $r_C$ and vertices $V^*_C$ and $f|_{V^*_C}\colon V^*_C\rightarrow V_C$ the natural bijection between the copies and their originals.
	We attach $r_C$ to the root $r'$.
	For every $v\in V(C)$, there is some $t_v\in V_C$ such that $v\in\beta(t_v)$, as $C$ is not an isolated vertex.
	We add a new leaf $t'_v$ that we attach to $f|_{V(T_C)}^{-1}(t_v)$ and set $\beta'(t'_v)=\{v\}$, $\gamma(f|_{V^*_C}^{-1}(t_v),t'_v)=\{vv\}$ and $\gamma(t'_v,f|_{V^*_C}^{-1}(t_v))=E(G^\circ)\setminus \{vv\}$.
	For every $e\in E_G(C)$ there is some $t_e\in V_C$ such that $e\subseteq\beta(t_e)$.
	We add a new leaf $t'_e$ that we attach to $f|_{V^*_C}^{-1}(t_e)$ and set $\beta'(t'_e)=e$, $\gamma(f|_{V^*_C}^{-1}(t_e),t'_e)=\{e\}$ and $\gamma(t'_e,f|_{V^*_C}^{-1}(t_e))=E(G^\circ)\setminus \{e\}$.
	For every node $t\in V^*_C$ with parent $p$ we add all edges $e\in E_G(C)$, where $t'_e$ is a descendant of $t$, and all self-loops $vv\in E_{G^\circ}(C)$, where $t'_v$ is a descendant of $t$, to $\gamma(p,t)$.
	Furthermore we set $\gamma(t,p)\coloneqq E(G^\circ)\setminus \gamma(p,t)$ and $\beta'(t)\coloneqq \boundary(\pi_t)\subseteq\beta(f(t))\cap V(C)$.
	By tightness of $\beta$ there is some $v\in \beta(f(\ell))$ such that $T_v=\{f(\ell)\}$, for every $\ell\in L(T_C)$, thus no leaf of $T_C$ is a leaf in $T'$, thus $(T',r',\beta',\gamma)$ satisfies \ref{ax:preTreeLeaf}.
	\ref{ax:preTreeRoot}, \ref{ax:preTreePart} and \ref{ax:preTreeEdge} hold by construction.
	Furthermore every edge is exact by construction.
	Thus we get that $(T',r',\beta',\gamma)$ is an exact \preTreeDec\ of $G^\circ$.
	
	The width is obvious as every bag in $\beta'$ is a subset of some bag in $\beta$.
	To see that the depth bound also holds we observe two things.
	For every leaf $\ell\in L(T')$ with parent $p$ we get that $\beta'(\ell)\setminus\beta'(p)=\emptyset$.
	For every inner node $t\in V(T')\setminus L(T')$ with parent $p$ we get that $\beta'(t)\setminus\beta'(p)\subseteq \beta(f(t))$ and, if $p\neq r'$, $\beta'(t)\setminus\beta'(p)\subseteq \beta(f(t))\setminus \beta(f(p))$, by the tightness of~$\beta$.
\end{proof}

\subsection{From edge game to \preTreeDec}\label{sec:game}

In this section we define how to construct a \preTreeDec\ from a winning strategy of Cop.
We assume that the next move of Cop only depends on the escape space of Robber and not on the exact edge he is positioned in.
We can assume this w.l.o.g. as we can simply pick one of the possible moves of Cop to be applied to all Robber positions as the reachable edges for Robber only depend on the escape space, not the exact edge.

\begin{defi}[strategy tree]
	\label{def:strat-tree}
	Let $G$ be a graph without isolated vertices and let $k,q\in\NN$.
	Let $\sigma\colon V(G)^{\leq k}\times E(G)\rightarrow V(G)^{\leq k}$ be a cop strategy such that for all $X\in V(G)^{\leq k}$, for all $uv\in E(G)$ and for all $u'v'\in\escapeE(X,{uv})$ the strategy agrees on $uv$ and $u'v'$, that is we have that $\sigma(X,uv)=\sigma(X,u'v')$.
	We write $\sigma(X,\escapeE(X,{uv}))$ instead of $\sigma(X,uv)$.
	
	The \emph{strategy tree of $\sigma$} is a \preTreeDec\ $(T,r,\beta,\gamma)$, inductively defined as follows:
	\begin{itemize}
		\item $\beta(r)=\emptyset$,
		\item For every connected component $C$ of $G$, there is a unique child $c$ of the root $r$. Define $\gamma(r,c) \coloneqq E(C)$.
		\item For every node $t\in V(T)\setminus\{r\}$ with parent $s\in V(T)$,
		\begin{itemize}
			\item if the robber player is caught, we set $\beta(t)= e$, where $\gamma(s,t)=\{e\}$,
			\item else $\beta(t)=\sigma(\beta(s),\gamma(s,t))$ and
			\item we add a new child $c$ to $t$ with $\gamma(t,c)=\escapeE(\beta(s)\cap\beta(t),uv)$, for every $uv\in\gamma(s,t)$, where $\escapeE(\beta(s)\cap\beta(t),uv)$ is disjoint from all previously added cones,
			\item we set $\gamma(t,s)\coloneqq
			\begin{cases}
				E(G)\setminus \bigcup_{c \text{ child of }t}\gamma(t,c), &\text{if } t\notin L(T),\\
				E(G)\setminus\gamma(s,t), &\text{if } t\in L(T).
			\end{cases}$
		\end{itemize}
	\end{itemize}
	We call $t\in V(T)$ a \emph{branching node} if the cop player placed a new cop incident to the robber escape space.
	
	Observe that if $t\in V(T)$ is a leaf, then Robber is captured and the depth of $(T,r,\beta,\gamma)$ is $\leq q$ if and only if $\sigma$ is a winning strategy in $\eCRkq(G)$.
\end{defi}

We observe that all inner nodes of the strategy tree directly correspond to Cop moves in a Cops-and-Robber game in the following sense: In every game on $G$ where the cop player plays according to the strategy $\sigma$ and every (reachable) position of the game $(X,\escape(X,uv))$, there exists an inner node $s$ of the tree together with a child $c$ such that $\beta(s)=X$ and $\escape(X,uv)=\gamma(s,c)$.
Additionally for every inner node $s$ with child $c$ the pair $(\beta(s),\gamma(s,c))$ is a position in a Cops-and-Robber game, more precisely in the game where the cop player always plays according to the strategy $\sigma$ and the robber player always moves to an edge in $\gamma(s_i,s_{i+1})$ where $r=s_0,s_1,\ldots,s_\alpha=s$ is the path from the root to $s$ and the current Cop position is $\beta(s_i)$.
Note that w.l.o.g. every child of the root is a branching node, as Cop w.l.o.g. only plays positions that are inside the component Robber chose in the first round.
If the game is played on $G^\circ$, then every branching node that does not correspond to the placement of a cop onto an isolated vertex has more than one child.

\begin{exa}
	Let $G$ be the graph from \autoref{ex:pre-tree-dec} (see \autoref{fig:pre-tree-dec-graph}).
	The \preTreeDec\ depicted in \autoref{fig:pre-tree-dec-dec} is a strategy tree of a winning strategy for Cop in the game $\eCR^5_9(G^\circ)$.
\end{exa}

We observe that the monotone moves of Cop correspond to the exact edges in the strategy tree.

\begin{lem}
	\label{obs:monotone-exact}
	For edge $st \in E(T)$, where $s$ is the parent of $t$ it holds that the move $\sigma(\beta(s),\gamma(s,t))$ is monotone if and only if $st$ is exact.
\end{lem}

\begin{proof}
	Let $e\in \gamma(s,t)$.
	By definition it holds that $\gamma(s,t)\subseteq \escapeE(\beta(s) \cap \sigma(\beta(s), \gamma(s,t)), e)$.
	Now by the definition of monotone it holds that $\escapeE(\beta(s) \cap \sigma(\beta(s), \gamma(s,t)), e) \subseteq \gamma(s,t)$ and thus $\gamma(s,t)= \escapeE(\beta(s) \cap \sigma(\beta(s), \gamma(s,t)), e)$.
	As furthermore by construction it holds that $\gamma(t,s)=\complementOfB{\escapeE(\beta(s) \cap \sigma(\beta(s), \gamma(s,t)), e)}$ it follows that $st$ is exact.
\end{proof}

The following lemma about the self-loops of the graph $G^\circ$ is key to prove that the construction in the next section does not enlarge the depth of the \preTreeDec.
That is, it works as intended.

\begin{lem}
	\label{obs:self-loops-gamma}
	\label{obs:self-loops}
	When considering the game on $G^\circ$ all self-loops $vv$ incident to $\beta(s)$, for some $s\in V(T)\setminus L(T)$, are either contained in $\gamma(s,p_s)$ or there is a child $c$ of $s$ such that $\gamma(s,c)=\{vv\}$.
	Furthermore $s$ has a child $c$ with $\gamma(s,c)=\{vv\}$, for some non-isolated vertex $v$ if and only if $s$ is a branching node and $v\in\beta(s)\setminus\beta(p_s)$.
\end{lem}

\begin{proof}
	Let $t\preceq s$ minimal such that $v$ is contained in all bags along the path from $t$ to $s$.
	Note that $p_t$ exists as $\beta(r)=\emptyset$.
	By minimality of $t$ we know that $v\notin \beta(p_t)$, thus $\{v\}=\beta(t)\setminus\beta(p_t)$.
	Therefore we know that $t$ is an inner node corresponding to a move where the cop player picks a new vertex and $v$ is this vertex.
	We know that $\escapeE(\beta(t),vv)=\{vv\}$.
	If $vv\in\gamma(p_t,t)$, then by construction there is a child $c'$ of $t$, such that $\gamma(t,c')=\{vv\}$, otherwise again by construction $vv\in \gamma(t,p_t)$.
	Note that $t$ is branching if and only if $vv\in\gamma(p_t,t)$.
	If $t=s$ we are done.
	Otherwise let $t_1=t,t_2,\ldots,t_k=s$ be the unique path from $t$ to $s$.
	We know that $vv\notin\gamma(t_1,t_2)$.
	For all $i\in [k-1]$, as $v\in \beta(t_i)$ we get that $\escapeE(\beta(t_i),vv)=\{vv\}$ and by induction over $i$ that $vv\notin \gamma(t_i,t_{i+1})$ and thus by construction $vv\in\gamma(t_{i+1},t_i)$.
	For $i=k-1$ we thus get $vv\in\gamma(s,p_s)$.
	
	Now assume $s$ has a child $c$ with $\gamma(s,c)=\{vv\}$.
	Then by construction $vv\in \gamma(p_s,s)$.
	As $v$ is not isolated we have that $\escapeE(\beta(s),vv)=\{vv\}$ only if $v\in\beta(s)$.
	But as $\gamma(p_s,s)$ is connected we have that $v\notin \beta(p_s)$ and thus $v$ is the vertex Cop picked.
	And since $vv\in \gamma(p_s,s)$, $s$ is branching.
\end{proof}

\subsection{Making a strategy tree exact}\label{sec:make-exact}

Our goal is to prove the following lemma.

\begin{lem}
	\label{lem:CRtoTD}
	Let $G$ be a graph and let $k,q\geq 1$. 
	If Cop wins $\CRkq(G)$, then there is a tree-decomposition of $G$ with width $\leq k$ and depth $\leq q$.
\end{lem}

To prove this, we construct an exact \preTreeDec\ of $G^\circ$ from the strategy tree that is constructed from a winning strategy of Cop in the game $\eCRkq(G^\circ)$, starting at the root $r$ and traversing the tree nodes in a breadth-first-search.
We then use \autoref{lem:tw-ptw} to get the desired tree-decomposition.
When we \consider\ a node $t \in V(T)$, 
we change the \preTreeDec\ so that all edges incident with $t$ are exact afterwards.
Note that, by the choice of the traversal, we only need to consider edges to the children of $t$.

\paragraph*{The construction.}
Let $(T,r,\beta,\gamma)$ be the \preTreeDec\ of $G^\circ$ from a winning strategy of Cop in the game $\eCRkq(G^\circ)$.
Let $s_1,\ldots,s_{n_T}$ be an order of the nodes of $T$ as appearing in a breadth-first search where $s_1=r$. Let $\beta_0\coloneqq\beta$ and $\gamma_0\coloneqq \gamma$.
We inductively construct a sequence $(T,r,\beta_0,\gamma_0), \ldots, (T,r,\beta_{n_T},\gamma_{n_T})$ of \preTreeDec s, such that $(T,r,\beta_{n_T},\gamma_{n_T})$ is exact.
We say $s_i$ is \emph{\considered\ in step $i$}.
Let \[T_i\coloneqq T[\{s_1,\ldots,s_i\}\cup N_T(\{s_1,\ldots,s_i\})].\]
See \autoref{fig:teeEye} for an illustration of $T_i$. (It will become clear that this is the subtree of all nodes where the \preTreeDec\ is modified in or before step $i$. We also point out that edges from $T_i$ to $T\setminus T_i$ may become non-exact during our modification process.)

If $s_i$ is a leaf, there are no children and thus no incident edges that are not exact.
We set $\beta_i\coloneqq\beta_{i-1}$ and $\gamma_i\coloneqq \gamma_{i-1}$.
Otherwise let $t^i_1,\ldots,t^i_{a_i}\in N_T(s_i)$ be an enumeration of all children of $s_i$.

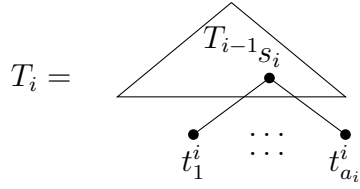
\begin{figure}
	\centering
	\begin{tikzpicture}
		\node at (-1,0) {$T_i =$};
		
		\coordinate (A) at (1.5,1);
		\coordinate (B) at (0,-.25);
		\coordinate (C) at (3,-.25);
		
		% Draw the triangle
		\draw (A) -- (B) -- (C) -- (A);
		
		\coordinate (si) at (2,0);
		\node (Ti1) at (1.5,0.5) {$T_{i-1}$};
		\coordinate (t1) at (1,-.75);
		\coordinate (ti) at (3,-.75);
		\coordinate (dots1) at (2,-.75);
		\coordinate (dots2) at (2,-1);
		
		\fill [black] (si) circle (2pt) node[above] {$s_i$};
		\fill [black] (t1) circle (2pt) node[below] {$t_{1}^{i}$};
		\fill [black] (ti) circle (2pt) node[below] {$t_{a_i}^{i}$};
		\node at (dots1) {$\dots$};
		\node at (dots2) {$\dots$};
		
		\draw (si)--(t1);
		\draw (si)--(ti);
	\end{tikzpicture}
	\caption{The subtree $T_i$ appearing in the construction.}  
	\label{fig:teeEye}
\end{figure}

\begin{itemize}
	\item We pick pairwise disjoint $F^i_1,\ldots,F^i_{a_i}\subseteq E(G^\circ)$, with \[F^i_j\subseteq \complementOf{\gamma_{i-1}(t^i_j,s_i)}\cap\complementOf{ \gamma_{i-1}(s_i,t^i_j)},\] such that the partition $\pi^*$ that results from taking the $F^i_j$-extensions in $\gamma_{i-1}(s_i,t^i_j)$, that is adding $F^i_j$ to $\gamma_{i-1}(s_i,t^i_j)$ and removing it from all other sets in order\footnote{The order does not have an impact on the resulting pratition, as all $F^i_j$ are distinct.} $j=1,\ldots,a_i$, has the minimum size boundary. Thus the resulting partition is
	\begin{equation*}
		\begin{split}
			\left(\left(\gamma_{i-1}(s_i,t^i_1)\setminus \bigcup_{j\in[a_i]} F^i_j\right)\cup F^i_1,\ldots,\left(\gamma_{i-1}(s_i,t^i_{a_i})\setminus \bigcup_{j\in[a_i]} F^i_j\right)\cup F^i_{a_i},\right.\\
			\left.\phantom{\bigcup_{j\in[a_i]}}\gamma_{i-1}(s_i,p_{s_i})\setminus\bigcup_{j\in[a_i]} F^i_j \right).
		\end{split}
	\end{equation*}
	If there are multiple optimal choices for $F^i_1,\ldots,F^i_{a_i}$ we select the one that minimizes the size of $\bigcup_{j\in[a_i]} F^i_j$, if there are still several options we break ties arbitrarily.
	\item Let $F^i\coloneqq\bigcup_{j\in[a_i]} F^i_j$ and $F^{*i}_j\coloneqq \left(\left(\complementOf{\gamma_{i-1}(t^i_j,s_i)}\cap \complementOf{\gamma_{i-1}(s_i,t^i_j)}\right)\cup F^i\right) \setminus F^i_j$.
\end{itemize}

For every $p\in V(T_i)$ with child $c$ we set
\begin{equation*}
	\gamma_i(p,c)\coloneqq\begin{cases}
		\left(\gamma_{i-1}(s_i,t^i_j)\setminus F^i\right)\cup F^i_j & \text{ if } (p,c)=(s_i,t^i_j), \text{ for some } j\in[a_i],\\
		\gamma_{i-1}(p,c)\setminus F^{*i}_j & \text{ if } t^i_j= p, \text{ for some } j\in[a_i],\\
		\gamma_{i-1}(p,c)\cup F^i & \text{ if } c\preceq s_i,\\
		\gamma_{i-1}(p,c)\setminus F^i & \text{ otherwise,}
	\end{cases}
\end{equation*}
and
\begin{equation*}
	\gamma_i(c,p)\coloneqq\begin{cases}
		\gamma_{i-1}(c,p) \cup F^{*i}_j & \text{ if } (p,c)=(s_i,t^i_j), \text{ for some } j\in[a_i],\\
		\gamma_{i-1}(c,p) & \text{ if } t^i_j= p, \text{ for some } j\in[a_i],\\
		\gamma_{i-1}(c,p) \setminus F^i & \text{ if } c\preceq s_i,\\
		\gamma_{i-1}(c,p) \cup F^i & \text{ otherwise,}
	\end{cases}
\end{equation*}
and all other $uv\in\overrightarrow{E(T)}$ we set $\gamma_i(u,v)\coloneqq\gamma_{i-1}(u,v)$.
Furthermore we set
\begin{equation*}
	\beta_i(t)\coloneqq\begin{cases}
		\boundary(\pi^i_t) & \text{ if } t\in V(T_i),\\
		\beta_{i-1}(t) & \text{ otherwise.}
	\end{cases}
\end{equation*}

Intuitively in the construction above we push the change at $s_i$ through $T_{i-1}$, that is for all edges in $T_{i-1}$ we add $F^i$ to the directed edge that points away from $s_i$ and remove $F^i$ from the edges in the other direction.
We obtain the following lemma.

\begin{lem}
	\label{obs:only-remove}
	Let $i,j\in[n_T]$ such that $s_i$ is the parent of $s_j$.
	Then $\gamma_{\alpha}(s_i,s_j) \subseteq \gamma(s_i,s_j)$, for all $\alpha<i$.
\end{lem}

\begin{proof}
	We prove this statement by induction over $\alpha$.
	For $\alpha = 0$, the statement obviously holds.
	If $s_i\notin T_{\alpha}$ the statement again obviously holds.
	It suffices to show that, for every $\alpha\in[i-1]$ such that $s_i\in T_{\alpha}$, it holds that $\gamma_{\alpha}(s_i,s_j) \subseteq \gamma_{\alpha-1}(s_i,s_j)$.
	Thus let us consider how $\gamma_{\alpha}(s_i,s_j)$ is computed from $\gamma_{\alpha - 1}(s_i,s_j)$.
	As $\alpha < i$ it holds that $s_j\notin T_{\alpha}$ and $s_i\not\preceq s_{\alpha}$.
	In both of the remaining two cases of the definition we get that $\gamma_{\alpha}(s_i,s_j) \subseteq \gamma_{\alpha-1}(s_i,s_j)$ as desired.
\end{proof}

\begin{figure}
	\hspace*{10pt}
	\begin{subfigure}{0.3\textwidth}
		% [inline block 1: 110 envs, 37932 chars -> data_tex | \begin{tikzpicture}[smallVertex/.style={fill=black, inner sep=1, circle},decNode/.style={draw, inner sep=1}] 			\node at...]
};
			\draw (j) -- (k4);
			
		\end{tikzpicture}
		\phantomcaption{~}
		\label{fig:monotony-construction-after}
	\end{subfigure}
	\caption{Example of the construction, see \autoref{ex:monotony-construction}.}
	\label{fig:monotony-construction}
\end{figure}

\begin{exa}
	\label{ex:monotony-construction}
	We consider the graph $G^\circ$ from \autoref{ex:pre-tree-dec} together with its \preTreeDec\ as depicted in \autoref{fig:pre-tree-dec}.
	
	In \autoref{fig:monotony-construction} we depict some steps of our construction to transform this \preTreeDec\ into an exact \preTreeDec.
	In green we draw nodes and edges that are part of the tree $T_{i-1}$.
	We omit some of the cones within this subtree, as by \autoref{obs:exact-subtree} they are induced by the cones to the leaves and at the edges leaving $T_{i-1}$.
	In orange and dashed we draw edges that are not exact in the intermediate \preTreeDec.
	The dark-blue edges within the edge labels are the edges of the cone that points away from the root, the light-blue edges are those of the cone pointing towards the root.
	In red we depict those edges of the original graph that are covered by neither of the two cones.
	
	Let $s_1,\ldots,s_{68}$ be the nodes of the decomposition tree in breath-first order, that is top to bottom, left to right.
	The first eight steps are trivial as all edges are already exact and only the bag at node $s_9$ changes to the boundary of the partition.
	
	In \autoref{fig:monotony-construction-step9} we see the node $s_9$ together with its parent $p_{s_9}$ and its single child $t^9_1$ before step $i=9$ of the construction.
	We see that we have to choose $F^9_1$ from the set \[\{(1,2)(1,2), (1,2)(2,2), (1,2)3, (2,2)(2,2), (2,2)3, 33, 3(1,4), 3(2,4)\}.\]
	The choice $\{(1,2)(1,2), (1,2)(2,2), (1,2)3, (2,2)(2,2), (2,2)3\}$ minimizes the boundary at $s_9$, which will then be the single vertex $3$.
	The resulting \preTreeDec\ restricted to the vertices in $T_9$ together with its neighbourhood \autoref{fig:monotony-construction-after-step9}.
	We see that one edge leaving $t^9_1$ is no longer exact after this step.
	
	Step $i=10$ is the next interesting step.
	In \autoref{fig:monotony-construction-step10} we see the node $s_{10}$ together with its parent $p_{s_{10}}$ and its four children $t^{10}_1,\ldots,t^{10}_4$ (enumerated from left to right) before step $i=10$ of the construction.
	We see that we have to choose $F^{10}_1$ and $F^{10}_2$ from the following two sets 
	\[\left\{\begin{matrix}
		3(1,4), 3(2,4), (1,4)(1,4), (1,4)(2,4), (1,4)(1,5),\\
		(2,4)(2,4), (2,4)(2,5), (1,5)(1,5), (1,5)(2,5), (2,5)(2,5)
	\end{matrix}\right\},\]
	\[\{(1,2)3, (2,2)3, 33, 3(1,4), 3(2,4), (1,4)(1,4), (1,4)(2,4), (2,4)(2,4)\},\]
	whilst $F^{10}_3=F^{10}_4=\emptyset$, as the corresponding edges are exact.
	The optimal choice here is $F^{10}_1=\emptyset$ and $F^{10}_2=\{33, 3(1,4), 3(2,4), (1,4)(1,4), (1,4)(2,4), (2,4)(2,4)\}$ again resulting in a boundary only consisting of $3$.
	We note that this choice is not unique, we could also choose $F^{10}_1=\{33\}$.
	We note that after this step all cones pointing away from the root at $s_{10}$, except the cone $\gamma_{10}(s_{10},t^{10}_2)$.
	The resulting \preTreeDec\ restricted to the vertices in $T_{10}$ together with its neighbourhood \autoref{fig:monotony-construction-after-step10}.
	
	In every step $i>10$ the optimal choice yields $F^i=\emptyset$.
	The exact \preTreeDec\ and thus also the tree-decomposition that results from the construction can be seen in \autoref{fig:monotony-construction-after}.
	We omit the subtree below $s_{12}$, which only contains empty bags.
	The width after the construction is $3$ and the depth is $5$.
	The reader may note that some edges such as the self-loop $33$ are at leaves that have a distance very far from the first bag they are covered by.
	Our construction never adds new leaves and never adds elements to the cones pointing towards leaves, thus in the resulting exact \preTreeDec\ every edge of the original graph will be covered by some leaf that already covered that edge in the original \preTreeDec.
\end{exa}

\paragraph*{The proof.}

We prove \autoref{lem:CRtoTD} in three steps.
First we prove that the construction indeed yields an exact \preTreeDec. 
Next we show that the width can be bounded as desired and lastly we prove that the construction yields the desired depth.

\begin{lem}
	\label{lem:exact}
	For all $i\in[n_T]$, $(T,r,\beta_i,\gamma_i)$ is a \preTreeDec.
	Furthermore all edges in $E(T_i)$ are exact.
\end{lem}

\begin{proof}
	\ref{ax:preTreeRoot} holds as all edges leaving the root are already exact in $\gamma$, thus we change nothing in step 1 where the root is considered and every $F^i$, with $i>1$, only contains edges from a single component of $G$ by construction.
	
	We observe that the changes from $\gamma_{i-1}$ to $\gamma_{i}$ at some node $t\in V(T_i)\setminus \{s_i\}$ correspond to an $F^i$- or $F^{*i}_j$-extension of $\pi^{i-1}_t$ at the set that corresponds to the edge, that points towards $s_i$.
	Furthermore $\pi^i_s$ is a partition of the edges by construction and for all $t\in V(T_i)$ we set $\beta_i(t)=\boundary(\pi^i_t)$.
	As $\gamma_i$ and $\beta_i$ are equal to $\gamma$ and $\beta$ at all vertices that are not part of $V(T_i)$, this shows by induction that \ref{ax:preTreePart} still holds.
	
	Next we observe that at every edge that is not incident to some $t^i_j$ we add to one direction exactly what we remove from the other direction.
	Furthermore by construction the edges $s_i t^i_j$ are exact after the construction.
	Lastly, for all children $c$ of $t^i_j$, we only remove edges from $\gamma_i(t^i_j,c)$.
	Thus again by induction we get that \ref{ax:preTreeEdge} holds and that all edges of $T_i$ are exact.
	
	It remains to show that \ref{ax:preTreeLeaf} holds.
	Let $i\in[n_t]$ and $j\in[a_i]$ such that $t^i_j\in L(T)$.
	From \autoref{obs:only-remove} we know that $\gamma_{i-1}(s_i,t^i_j)\subseteq \gamma(s_i,t^i_j)$ and thus $|\gamma_{i-1}(s_i,t^i_j)|\leq 1$.
	Furthermore we know that $\gamma_{i-1}(t^i_j,s_i)=\gamma(t^i_j,s_i)= E(G^\circ)\setminus \gamma(s_i,t^i_j)$.
	Therefore we get that $F^i_j\subseteq \gamma(s_i,t^i_j)$ and thus $\gamma_{i}(s_i,t^i_j)\subseteq \gamma(s_i,t^i_j)$.
	By construction, in step $i'\in[n_T]$, such that $s_{i'}=t^i_j$, we do nothing.
	And in all other steps $\alpha > i$, we have that $t^i_j\not\preceq s_\alpha$ and thus we only remove edges from $\gamma_{\alpha}(s_i,t^i_j)$.
	This shows that for all $\alpha\in[n_T]$ we have $|\gamma_{\alpha}(s_i,t^i_j)|\leq 1$.
\end{proof}

Hence, for $i=n_T$, we get that $(T,r,\beta_{n_T},\gamma_{n_T})$ is an exact \preTreeDec. 
Note that it is possible that $\gamma_{n_T}(s,t)$ is empty for an edge $st\in \overrightarrow{E(T)}$.
By Lemma~\ref{lem:tw-ptw} we obtain a tree-decomposition, from this \preTreeDec.
We show below that the width and depth are as stated in the theorem.

Our construction does not change the width of the decomposition.
To prove this we observe that in step $i$ the bound in $s_i$ is minimal.
We then push the change through the subtree $T_i$ and find that if a change would increase the width, we could push this change back to the node $s_i$ and find an even smaller bound there, which contradicts the minimality of our choice.

\begin{lem}
	\label{lem:width}
	$\wid(T,r,\beta_i,\gamma_i)\leq \wid(T,r,\beta,\gamma)$, for all $i\in[n_T]$.
\end{lem}

\begin{proof}
	We prove the statement for all $0\leq i\leq n_T$ by induction.
	The statement clearly holds for $i=0$, as $(T,r,\beta_0,\gamma_0)=(T,r,\beta,\gamma)$.
	For all $i\in[n_T]$, it remains to show that $\wid(T,r,\beta_i,\gamma_i)\leq \wid(T,r,\beta_{i-1},\gamma_{i-1})$.
	Obviously $|\beta_i(t)|=|\beta_{i-1}(t)|$, for all $t\notin T_i$.
	Furthermore by construction $|\beta_i(s_i)|\leq |\beta_{i-1}(s_i)|$.
	Let $j\in[a_i]$, let $X\coloneqq \gamma_i(s_i,t^i_j)$ and let $Y\coloneqq\gamma_{i-1}(t^i_j,s_i)$.
	We observe that \[\pi^{i}_{t^i_j}=\pi^{i-1}_{t^i_j,Y \ext \complementOf{X}}.\]
	Thus it holds that \[|\beta_i(t^i_j)| = |\wid(\pi^{i-1}_{t^i_j, Y \ext \complementOf{X}})| \leq |\wid(\pi^{i-1}_{t^i_j})| \leq |\beta_{i-1}(t^i_j)|\] as otherwise by submodularity, as established in \autoref{lem:submodular}, for the partitions $\pi^{i-1}_{t^i_j}$ and $\pi^i_{s_i}$, we get that \[|\wid(\pi^i_{s_i, X \ext \complementOf{Y}})| < |\wid(\pi^i_{s_i})|,\]
	which contradicts the minimality of the bound for $F^i_1,\ldots,F^i_{a_i}$.
	
	Lastly assume there is a node $t$ in $V(T_i)\setminus\{s_i,t^i_1,\ldots,t^i_{a_i}\}$ such that $|\beta_i(t)|>|\beta_{i-1}(t)|$.
	We assume $t$ is of minimal distance to $s_i$ with this property.
	Let $x_0=t,x_1,\ldots,x_{b}=s_i$ be the path from $t$ to $s_i$.
	By minimality of the distance we know that $|\beta_i(x_1)|\leq |\beta_{i-1}(x_1)|$.
	Additionally we know that all edges on the path from $s_i$ to $x_1$ are exact in $\gamma_i$, as well as the edge $x_1t$ in $\gamma_{i-1}$.
	Now let $Y\coloneqq \gamma_{i-1}(t,x_1)$ and, for all $0\leq \alpha < b$, let $X_\alpha \coloneqq \gamma_i(x_{\alpha + 1},x_{\alpha})$ and $Z_\alpha \coloneqq \gamma_i(x_{\alpha},x_{\alpha + 1})$.
	The transition from $i-1$ to $i$ at $t$ corresponds to $\pi^{i-1}_{t,Y \ext F}=\pi^{i-1}_{t,Y \ext \complementOf{X_0}}$.
	Thus if $\wid(\pi^{i-1}_{t,Y \ext F})=|\beta_i(t)|>|\beta_{i-1}(t)|=\wid(\pi^{i-1}_{t})$ it follows from submodularity that $\wid(\pi^i_{x_1})>\wid(\pi^i_{x_1,X_0 \ext \complementOf{Y}})$.
	As the edge $x_1t$ was exact at step $i-1$, we know that  \[F'\coloneqq\complementOf{Y}\setminus X_0= F^i\setminus Y \subseteq F^i.\]
	We now push this change back to $s_i$ along the path $x_1,\ldots,x_b$ and again we construct a contradiction to the minimality of the bound of $F^i_1,\ldots,F^i_{a_i}$.
	For this, let us assume we have pushed the change to $x_\alpha$, that is we changed $\pi^i_{x_\alpha}$ to $\pi^*_{x_\alpha} = \pi^i_{x_\alpha,X_{\alpha - 1}\ext F'}$ and we know that $\wid(\pi^*_{x_\alpha})<\wid(\pi^{i}_{x_\alpha})$.
	As the edge $x_\alpha x_{\alpha+1}$ is exact in $\gamma_i$, we get that $\pi^*_{x_\alpha,(Z_\alpha \setminus F') \ext \complementOf{X_\alpha}} = \pi^i_{x_\alpha}$.
	Let \[\pi^*_{x_{\alpha+1}}\coloneqq \pi^i_{x_{\alpha+1},X_\alpha \ext \complementOfB{(Z_\alpha \setminus F')}} = \pi^i_{x_{\alpha+1},X_\alpha \ext F'},\] then by submodularity $\wid(\pi^*_{x_{\alpha+1}}) < \wid(\pi^i_{x_{\alpha+1}})$.
	When we have pushed the change to $\alpha=b$, we find the desired contradiction.
\end{proof}

To prove that our construction does not increase the depth we show that in every step $i$ the depth up to the nodes in $T_i$ is bounded by the depth up to these nodes in the original tree.
We prove this by induction on the number of steps.
In step $i$ every change in any bag at some node in $V(T_{i-1})$ is closely related to the change at the \considered\ node $s_i$.
Additionally we find that the vertices in a bag at some child of $s_i$ that is not present in the bag at $s_i$ is exactly the vertex the cop player newly placed in the corresponding move of the game.

\begin{lem}
	\label{lem:depth}
	For all $i\in[n_T]$ and all $t\in V(T_i)$, it holds that
	\begin{equation*}
		\sum_{r\prec s \preceq t} |\beta_i(s) \setminus \beta_i(p_s)|\leq \sum_{r\prec s \preceq t} |\beta(s) \setminus \beta(p_s)|.
	\end{equation*}
\end{lem}

\begin{proof}
	Let $\ell\in[n_T]$.
	As by construction $\beta_\ell(t)=\boundary(\pi^\ell_t)$, for all $t\in V(T_\ell)$, we get from \autoref{obs:exact-subtree-depth} and \autoref{lem:exact} that $|\bigcup_{s \preceq t} \beta_\ell(s)| = \sum_{r\prec s \preceq t} |\beta_\ell(s) \setminus \beta_\ell(p_s)|$.
	Thus it suffices to show that $|\bigcup_{s \preceq t} \beta_i(s)| \leq \sum_{r\prec s \preceq t} |\beta(s) \setminus \beta(p_s)|.$
	
	We prove the statement by induction on the steps $i$.
	Let us recall that by construction $(T,r,\beta_0,\gamma_0)=(T,r,\beta,\gamma)$, thus the statement holds for $i=0$.
	Now assume the statement holds for $i-1$, thus for all $t\in V(T_{i-1})$ it hold that  $|\bigcup_{s \preceq t} \beta_{i-1}(s)| \leq \sum_{r\prec s \preceq t} |\beta(s) \setminus \beta(p_s)|.$
	
	We recall that $V(T_i)=V(T_{i-1})\cup \{t^i_1,\ldots,t^i_{a_i}\}$ and that $s_i\in L(T_{i-1})$.
	For the nodes $t\in V(T_{i-1})$ we can directly build upon the induction hypothesis.
	But the nodes $t^i_j$, with $j\in [a_i]$, are added into the subtree.
	Here we need to compare directly to the original bags, as we can no longer use that in step $i-1$ the depth at these nodes is bounded by the depth in the original strategy tree.
	We can prove for these nodes that every vertex newly placed at one of these nodes in step $i$ is also newly placed in the original strategy.
	Then we can show that the difference between depth at these nodes and their parent in step $i$ is bounded by the difference in the original strategy tree.
	\begin{claim}
		\label{cl:depth-leafs-new}
		Every $j\in[a_i]$ satisfies $\beta_i(t^i_j)\setminus \beta_i(s_i)\subseteq \beta(t^i_j)\setminus \beta(s_i).$
	\end{claim}
	\begin{claimproof}
		Let $v \in \beta_i(t^i_j) \setminus \beta_i(s_i)$.
		As $v \notin \beta_i(s_i)$ we get that $v\notin\boundary(\gamma_i(t^i_j,s_i))$ and thus it holds that $E_{G^\circ}(v)\cap \gamma_i(t^i_j,s_i) = \emptyset$.
		By construction we have that $\gamma_i(t^i_j,s_i)\supseteq \gamma_{i-1}(t^i_j,s_i) = \gamma(t^i_j,s_i)$, and thus $vv \notin \gamma(t^i_j,s_i)$.
		As $v \in \beta_i(t^i_j)=\boundary(\pi^i_{t^i_j})$, there are two distinct children $c_1,c_2$ of $t^i_j$ such that $v\in\boundary(\gamma_i(t^i_j,c_\ell))$ and thus $E_{G^\circ}(v)\cap \gamma_i(t^i_j,c_\ell) \neq \emptyset$, for $\ell = 1,2$.
		By construction we have $\gamma_i(t^i_j,c_\ell)\subseteq \gamma_{i-1}(t^i_j,c_\ell) = \gamma(t^i_j,c_\ell)$, for $\ell = 1,2$.
		And thus $v\in \boundary(\pi_{t^i_j})\subseteq \beta(t^i_j)$.
		By \autoref{obs:self-loops-gamma}, there is a child $c$ of $t^i_j$ such that $\gamma(t^i_j,c)=\{vv\}$ and, by \autoref{obs:self-loops}, $v\in \beta(t^i_j)\setminus \beta(s_i)$.
	\end{claimproof}
	
	The following claim tracks vertices that are added to any bag in $V(T_{i-1})$ at step $i$.
	
	\begin{claim}
		\label{cl:new-vertex-new}
		Let $i\in [n_T]$ and let $t\in V(T_{i-1})$.
		If $v\in \beta_i(t)\setminus\beta_{i-1}(t)$, then $v\in\beta_i(t^*)$, for all $t^*$ on the path from $t$ to $s_i$.
	\end{claim}
	
	\begin{claimproof}
		Let $t^*\neq t$ and let $t'$ be the next node on the path from $t$ to $s_i$.
		It holds that $\gamma_i(t,t')=\gamma_{i-1}(t,t')\cup F^i$.
		As $\gamma_i(t,t')$ is the only set incident to $t$ where edges are added in step $i$, we get that $v\in\boundary(\gamma_i(t,t'))$.
		And from $v\notin\boundary(\gamma_{i-1}(t,t'))$ we get that $v\in\boundary(F^i)$.
		Now suppose that $v\notin \beta_i(t^*)$, and thus also $v\notin \boundary(\gamma_i(t^*,p))$, where $p$ is the next node on the path from $t^*$ to $t$.
		As $v$ is incident to edges in $F^i$ we get that $E_{G^\circ}(v) \cap \gamma_i(t^*,p) = \emptyset$.
		We know from \autoref{lem:exact} that all edges in $T_i$ are exact and thus that $\gamma_i(t',t)\subseteq \gamma_i(t^*,p)$ by \autoref{obs:exact-path}.
		This contradicts $v\in\boundary(\gamma_i(t,t'))=\boundary(\gamma_i(t',t))$ and thus $v\in\beta_i(t^*)$.
	\end{claimproof}
	
	The next claim is used to show that a vertex that disappears from a bag in $V(T_{i-1})$ at step $i$ also disappears from the union of bags that determine the depth at that bag, especially if a vertex disappears from the bag at $s_i$, then it disappears from every bag in $V(T_i)$.  
	
	\begin{claim}
		\label{cl:remove-vertex-new}
		Let $i\in [n_T]$ and let $t\in V(T_{i-1})$. If $v\in \beta_{i-1}(t)\setminus\beta_{i}(t)$, then $v\notin\beta_i(t^*)$, for all $t^*\in V(T_{i-1})$ 
		such that $t$ is contained in the path from $t^*$ to $s_i$.
	\end{claim}
	
	\begin{claimproof}
		We have $E_{G^\circ}(v)\cap F^i\neq \emptyset$.
		
		Let $t=s_i$.
		As $v\notin \beta_i(s_i)$ we get that $v\notin \boundary(\gamma_i(s_i,p_{s_i}))$ and thus for the edges incident to $v$ it holds that
		$E_{G^\circ}(v)\cap \gamma_i(s_i,p_{s_i}) = E_{G^\circ}(v)\cap \gamma_{i-1}(s_i,p_{s_i}) \cap \complementOfB{F^i} = \emptyset$.
		Now let $t^*\in V(T_{i-1})$ and $t'$ be the next node on the path from $t^*$ to $s_i$.
		Then by \autoref{lem:exact} we get that $\gamma_i(t^*,t')\supseteq\gamma_i(p_{s_i},s_i)\supseteq E_{G^\circ}(v)$ 
		and thus $v\notin\beta_i(t^*)$.
		
		Otherwise let $t\neq s_i$.
		Let $t'$ be the next node on the unique path from $t$ to $s_i$.
		As $v\notin\boundary(\gamma_i(t,t'))$ it follows that $E_{G^\circ}(v)\subseteq \gamma_i(t,t')=\gamma_{i-1}(t,t')\cup F^i$, that $v\in\boundary(\gamma_{i-1}(t,t'))$, and that $E_{G^\circ}(v) \cap \gamma_{i-1}(t',t)\subseteq E_{G^\circ}(v)\cap F^i$.
		Assume there is some $t^*\in V(T_{i-1})$ such that $v\in\beta_i(t^*)$.
		We observe that due to \autoref{lem:exact} and because all edges incident to $v$ are contained in $\gamma_i(t,t')$, we get that $t$ is not contained in the path from $t^*$ to $s_i$.
	\end{claimproof}
	
	We are now ready to prove the lemma.
	Towards this, let $i\geq 1$ and assume the statement holds for $i-1$.
	We consider all vertices that appear at a bag at any node in $T_i$ due to the changes in step $i$.
	Observe that if $\beta_i(s_i)=\beta_{i-1}(s_i)$, then there are no changes to the bags at other nodes than the $t^i_j$ by minimality of $|F^i|$, and if $\beta_i(s_i)\neq\beta_{i-1}(s_i)$,  we have $|\beta_i(s_i)|<|\beta_{i-1}(s_i)|$ again by the minimality of the choice.
	
	Let $t\in V(T_{i-1})$.
	Let $U \coloneqq \left(\bigcup_{s \preceq t} \beta_{i}(s)\right) \setminus \left(\bigcup_{s \preceq t} \beta_{i-1}(s)\right)$.
	Let $t^*$ be the greatest common ancestor of $t$ and $s_i$.
	As $t^*$ is on every path from some predecessor of $t$ to $s_i$, from \autoref{cl:new-vertex-new} we know that $u\in\beta_i(t^*)\setminus \beta_{i-1}(t^*)$, for all $u\in U$.
	Let $W = \beta_{i-1}(t^*)\setminus \beta_{i}(t^*)$.
	As by \autoref{lem:width} it holds that $|\beta_i(t^*)|\leq |\beta_{i-1}(t^*)|$, we know that $|U| \leq |W|$.
	By \autoref{cl:remove-vertex-new} we get that $W\subseteq \left(\bigcup_{s \preceq t} \beta_{i-1}(s)\right) \setminus \left(\bigcup_{s \preceq t} \beta_{i}(s)\right)$.
	By this \emph{vertex exchange} we conclude that $\left|\bigcup_{s \preceq t} \beta_{i}(s)\right| \leq \left|\bigcup_{s \preceq t} \beta_{i-1}(s)\right|$.
	
	Otherwise $t=t^i_j$ for some $j\in [a_i]$.
	We get $\bigcup_{s\preceq{t}} \beta_{i}(s) = \bigcup_{s\preceq{s_i}} \beta_{i}(s) \cup \beta_i(t) \setminus \beta_i(s_i)$, by construction.
	We have shown that $|\bigcup_{s\preceq{s_i}} \beta_{i}(s)| \leq \sum_{r\prec s\preceq{s_i}} |\beta(s) \setminus \beta(p_s)|$ 
	and by \autoref{cl:depth-leafs-new} we have $\beta_i(t) \setminus \beta_i(s_i) \subseteq \beta(t)\setminus \beta(s_i)$.
	Thus we get $|\bigcup_{s\preceq{t}} \beta_i(s)| \leq \sum_{r\prec s\preceq{t}} |\beta(s) \setminus \beta(p_s)|$.	
\end{proof}

\begin{proof}[Proof of \autoref{lem:CRtoTD}]
	Let $G=(V(G),E(G))$ be a graph, let $k,q\geq 1$ and assume Cop wins $\CRkq(G)$.
	Then Cop wins $\eCRkq(G^\circ)$ and there is some winning strategy \[\sigma\colon V(G^\circ)^{\leq k}\times E(G^\circ)\rightarrow V(G^\circ)^{\leq k},\]
	where $\sigma(X,uv)=\sigma(X,u'v')$ for all $X\in V(G^\circ)^{\leq k}$ and $u'v'\in\escapeE(X,uv)$.
	From this winning strategy we construct a strategy tree, thus \preTreeDec\ of $G^\circ$.
	Combining Lemmas \ref{lem:exact}, \ref{lem:width} and \ref{lem:depth} we get that there exists an exact \preTreeDec\ of $G^\circ$ of width $\leq k$ and depth $\leq q$, if the cop player wins $\eCRkq(G^\circ)$.
	The theorem then follows from \autoref{lem:tw-ptw}.
\end{proof}

The reader may note that the way we define monotony is known as robber-monotony in some literature.
There also is an alternative definition called cop-monotony, where Cop is not allowed to move to a vertex that he moved away from in an earlier round.
For standard Cop-and-Robber game these two notions are known to be equivalent.
Our proofs show that they are also equivalent in our game, as the strategy that results from a tree-decomposition is both robber-monotone and cop-monotone.

\newcommand{\numberOfRoundsK}{\ensuremath{\frac{(k-1)(\ell - (k-1) + 2)}{4}}}
\newcommand{\sizeLargeComponent}{\ensuremath{\frac{\ell h - h - 2}{2}}}
\newcommand{\sizeNonGood}{\ensuremath{\frac{(h-1)(h+2)}{2}}}
\newcommand{\boundL}{\ensuremath{\lceil \frac{q}{k-1}\rceil (k+4)}}

\subsection{Separating \texorpdfstring{$\Ekq$}{Ekq} from \texorpdfstring{$\TW_{k-1} \cap \TD_q$}{TWk-1 intersection TDq}}

We aim to show that the graph class $\Ekq$ is a proper subclass of $\TW_{k-1} \cap \TD_q$.
One can only hope to separate the classes if $q$ is larger than $k$, since $\EParam{q}{q}=\TD_q=\TW_{q-1}\cap\TD_q$.
Using the characterizations of $\Ekq$ via a Cops-and-Robber game, we show that there indeed are graphs for which one can not simultaneously bound the width to treewidth and the depth to treedepth.

The graph we consider is the $(h\times\ell)$-grid \grid{h}{\ell}, with $h<\ell$. 
Its vertex set is $[h] \times [\ell]$ and it contains an edge between $(i,j)$ and $(i',j')$ iff $|i-i'| + |j-j'| = 1$ for $i,i' \in [h]$ and $j,j' \in [\ell]$.
It is well known that $\tw(\grid{h}{\ell})=h$ and $\td(\grid{h}{\ell})\leq h \lceil\log(\ell+1)\rceil$, cf.\@ \autoref{fig:grid-decompositions}. We give a lower bound to the number of cop rounds that Robber can win in a, possibly non-monotone, game $\CR^h(G)$, which is linear in both $\ell$ and $h$.

\begin{lem}
	\label{lem:lower-bound-rounds}
	For $1<h<\ell-2$ and $q\leq\frac{h(\ell  - h + 2)}{4}$, Robber wins the game $\CR_q^{h+1}(\grid{h}{\ell})$.
\end{lem}

The proof of \autoref{lem:lower-bound-rounds} builds upon \cite{Furer_rounds_2001}.
The winning strategy of the robber is to always stay in the component with the most vertices.
To prove that this strategy benefits Robber, we first make some structural observations on the grid~\grid{h}{\ell} with $h>1$ rows and $\ell$ columns and its separators of size $\leq h+1$.
For $X\subseteq V(\grid{h}{\ell})$, we call a connected component of $\grid{h}{\ell}\setminus X$ \emph{good} if it contains at least one full column of \grid{h}{\ell}.

\begin{lem} \label{obs:good-component}
	For $1 < h < \ell - 1$ and $X\in \binom{V(\grid{h}{\ell})}{\leq h+1}$, there exists a good component in $\grid{h}{\ell} \setminus X$.
\end{lem}

\begin{proof}
	Since there are at most $h+1$ vertices in $X$ and the grid has at least $h+2$ columns, there exists a column which does not contain any vertices from $X$. This column is contained in a good component of $\grid{h}{\ell} \setminus X$.
\end{proof}

The next lemma shows that $\grid{h}{\ell}\setminus X$ can never contain more than two good components if $X\in\binom{V(\grid{h}{\ell})}{\leq h+1}$.
And if it does contain two good components, there can only be one single vertex that is in any other component.

\begin{lem}
	\label{lem:third-comp-single}
	Let $h,\ell>1$ and let $X\in \binom{V(\grid{h}{\ell})}{\leq h+1}$. If there are two good components in $\grid{h}{\ell}\setminus X$, then there is at most one additional component and this component has size 1.
\end{lem}

\begin{proof}
	Let $C_1,C_2$ be the two good components.
	Since $X$ separates $C_1$ and $C_2$, $X$ must contain at least one vertex from every row. Thus, at most one row contains two vertices from $X$.
	Let $C_3$ be the set of vertices that are not contained in $X\cup C_1\cup C_2$.
	Assume $C_3$ contains a vertex from row $i\in[h]$.
	Then row $i$ contains vertices from $C_1$, $C_2$, and $C_3$. 
	Since these are all distinct components of $\grid{h}{\ell}\setminus X$, row $i$ also contains at least two, thus exactly two, vertices of $X$.
	
	Hence, $C_3$ intersects at most one row, say row~$r$.
	As $C_3$ cannot contain a vertex of any other row, all neighbours of $C_3$ that are not in row $r$ need to be in $X$.
	As $h>1$, at least one of $r-1$ or $r+1$ is in $[h]$.
	Thus there is a row in $[h]\setminus \{r\}$ where at least $|C_3|$ vertices are contained in $X$.
	Since all rows besides $r$ have exactly one vertex in $X$, we conclude that $|C_3|\leq 1$.
\end{proof}

We use \autoref{lem:third-comp-single} to show that Robber can find some large component and that Cop can never remove more than two vertices from this component as long as his component is good.
Next we show that a large component is always good.

\begin{lem}
	\label{lem:size-non-good}
	Let $1<h<\ell-2$ and let $X\in \binom{V(\grid{h}{\ell})}{\leq h+1}$.
	A connected component $C$ of $\grid{h}{\ell}\setminus X$ that contains more than $\sizeNonGood$ vertices is good.
\end{lem}

\begin{proof}
	Let $C$ be a connected component of $\grid{h}{\ell}\setminus X$ which is not good.
	We prove that $C$ contains at most $\sizeNonGood$ vertices.
	
	By \autoref{obs:good-component}, there exists a good component $C'$ in $G \setminus X$ containing some column $j \in [\ell]$. Without loss of generality, all columns intersecting $C$ lie left of $j$, i.e.\@ have index strictly smaller than $j$.
	
	\begin{claim} \label{cl:size-non-good}
		For $1\leq i<h$,
		$C$ contains at most $i+1$ columns with at least $h-i$ vertices.
	\end{claim}
	\begin{claimproof}
		Assume $C$ would contain at least $i+2$ columns with at least $h-i$ vertices.
		$X$ contains at least one vertex of each of these columns, as $C$ is not good.
		Let $j'$ be the largest index of a column intersecting $C$ and let $I \subseteq [h]$ be the set of rows that intersect $C$ in column $j'$, that is $(i',j')\in C$, for all $i'\in I$.
		We define vertex disjoint $C$-$C'$-paths by $(i',j'),(i',j'+1),\ldots,(i',j)$ for all $i' \in I$.
		By Menger's Theorem, $X$ contains at least one vertex of each of these paths and all those vertices are in some column that is strictly larger than $j'$.
		All in all, $X$ contains at least $i+2+|I|\geq i+2+h-i=h+2$ vertices, which contradicts the choice of $X$.
	\end{claimproof}
	
	For $1 \leq i \leq h$, write $m_i$ for the number of columns in $C$ which contain at least $i$ vertices. Then $m_i - m_{i+1}$ is the number of columns in $C$ which contain exactly $i$ vertices. 
	By \autoref{cl:size-non-good}, $m_i \leq h +1 - i$ for all $1 \leq i \leq h-1$.
	Hence,
	\[
	|C| = \sum_{i = 1}^{h-1} (m_i - m_{i+1}) i = \sum_{i=1}^{h-1} m_i \leq (h-1)(h+1) - \sum_{i=1}^{h-1} i = \sizeNonGood,
	\]
	as desired.
\end{proof}

We observe that this bound is tight as one can realize a component $C$ that is not good and contains exactly $i+1$ columns with at least $h-i$ vertices, for all $1\leq i<h$, by $X=\{(1,1),(1,2),(2,3),\ldots,(h,h+1)\}$.
We are now ready to prove \autoref{lem:lower-bound-rounds}.

\begin{proof}[Proof of \autoref{lem:lower-bound-rounds}]
	The strategy of Robber is to always move to the largest component.
	\begin{claim}
		After $q'< \frac{\ell h - h^2 + 2h}{4}$ cop introductions the size of the robber component is at least $\sizeLargeComponent-2(q'-h) > \sizeNonGood$.
	\end{claim}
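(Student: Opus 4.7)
The plan is to prove the sub-claim by induction on $q'$. A preliminary observation is that the hypothesis $q' < \frac{h(\ell-h+2)}{4}$ is equivalent, by the algebraic manipulation already performed in the paper, to $\sizeLargeComponent - 2(q'-h) > \sizeNonGood$; this secondary inequality is then automatic once the lower bound on $|R_{q'}|$ is established, and via Lemma~\ref{lem:size-non-good} it also ensures that $R_{q'}$ is automatically a good component, an invariant I will carry through the induction. The base case $q'=0$ is immediate: Robber starts in the single component of $\grid{h}{\ell}$ of size $h\ell$, and one verifies $h\ell \geq \sizeLargeComponent + 2h$ via $h\ell \geq 3h-2$, which holds since $\ell > h+2 \geq 3$.

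For the inductive step, assume $|R_{q'-1}| \geq \sizeLargeComponent - 2(q'-1-h)$ and that $R_{q'-1}$ is good. In round $q'$, Cop moves from $X_{q'-1}$ to $X_{q'}$ with $|X_{q'} \setminus X_{q'-1}| \leq 1$; write $v$ for the new cop (if any), $X^\ast = X_{q'-1}\cap X_{q'}$, and let $R^\ast$ be the component of $\grid{h}{\ell}\setminus X^\ast$ containing Robber. Since $X^\ast \subseteq X_{q'-1}$, we have $R_{q'-1} \subseteq R^\ast$, so $|R^\ast| \geq |R_{q'-1}|$. Robber selects as $R_{q'}$ the largest component of $\grid{h}{\ell}\setminus X_{q'}$ contained in $R^\ast$.

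I would then apply Lemma~\ref{lem:third-comp-single} to $X_{q'}$ (whose size is at most $h+1$) and split into two cases. If $\grid{h}{\ell}\setminus X_{q'}$ contains two good components $G_1, G_2$, their combined size is at least $h\ell - (h+1) - 1 = 2\sizeLargeComponent$, so the larger has size at least $\sizeLargeComponent$; one then argues that at least one of $G_1, G_2$ lies inside $R^\ast$ (using that $R_{q'-1} \subseteq R^\ast$ is itself good and a single added cop $v$ cannot separate every full column of $R_{q'-1}$ from the component it sits in), and since having two good components forces $|X_{q'}| \geq h$ and hence $q' \geq h$, the bound $\sizeLargeComponent \geq \sizeLargeComponent - 2(q'-h)$ suffices. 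If $\grid{h}{\ell}\setminus X_{q'}$ has at most one good component, then $R_{q'}$ is that good, and I would show $|R_{q'}| \geq |R^\ast| - 2 \geq \sizeLargeComponent - 2(q'-h)$ by arguing that the components of $R^\ast \setminus \{v\}$ other than $R_{q'}$ comprise $v$ itself together with at most one singleton.

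The main obstacle is the last step. Lemma~\ref{lem:third-comp-single} only forbids multiple small components alongside \emph{two} good components; when there is only one good component globally, additional non-singleton non-good components inside $R^\ast$ are not directly ruled out. To overcome this, I would strengthen the inductive invariant to also assert that at every round, $\grid{h}{\ell}\setminus X_{q'}$ decomposes as Robber's good component $R_{q'}$, plus at most one further good component and at most one singleton. Maintaining this invariant uses that exactly one new cop appears per round, together with a vertex-exchange argument on released cops: a single added cop $v$ cannot simultaneously create several non-trivial non-good pieces inside the large good region $R^\ast$ without violating the structural constraints Lemma~\ref{lem:third-comp-single} imposes on $X_{q'}$. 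Carrying this refined invariant through the induction then yields the claimed per-round loss of at most $2$ and completes the proof.
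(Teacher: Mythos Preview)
Your overall plan---induction on $q'$ with a case split on whether $\grid{h}{\ell}\setminus X_{q'}$ has one or two good components---mirrors the paper's argument, and your Case~1 (two good components) is essentially what the paper does. You are also right that \cref{lem:third-comp-single} only applies when there are two good components; the paper uses it only in that regime.

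The genuine gap is in Case~2. Both your target inequality $|R_{q'}|\geq |R^\ast|-2$ and your proposed strengthened invariant (``Robber's good component plus at most one further good component and at most one singleton'') are false. Take $h=3$ and let Cop place, in order, $(1,2),(2,1),(2,3),(3,2)$, never removing a cop. After round~$3$ the only components are the singleton $\{(1,1)\}$ and one large good component $R_3$ of size $3\ell-4$; Robber sits in $R_3$. After round~$4$ the components of $\grid{3}{\ell}\setminus X_4$ are three singletons $\{(1,1)\},\{(3,1)\},\{(2,2)\}$ together with one good component of size $3\ell-7$. Thus there is only one good component but \emph{three} singletons, so your invariant fails; and $R^\ast=R_3$ while $R_4$ has size $|R^\ast|-3$, so the per-round loss exceeds $2$. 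A single newly placed cop of degree up to~$4$ can split off several singletons simultaneously, and \cref{lem:third-comp-single} gives no control here because there is no second good component to invoke it against.

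The paper avoids this by treating the one-good-component situation differently: rather than bounding the per-round loss, it argues that whenever $\grid{h}{\ell}\setminus X_{q'}$ has only one good component, Robber's component is that component and its size already exceeds $\sizeLargeComponent$, which directly re-establishes the bound $\sizeLargeComponent-2(q'-h)$. The ``shrinks by at most two'' step is reserved for rounds where two good components are present, precisely so that \cref{lem:third-comp-single} applies with the non-Robber good component playing the role of the second good component. Your induction can be repaired along the same lines: in Case~2 do not attempt to compare $|R_{q'}|$ with $|R^\ast|$, but instead show $|R_{q'}|>\sizeLargeComponent$ outright.
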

	\begin{claimproof}
		Since $h < \ell-2$, for every Cop position $X\in \binom{V(\grid{h}{\ell})}{\leq h+1}$, there are at least two columns where no cop is positioned.
		In order to catch the robber, the cops have to move to a position such that these two columns lie in distinct good components of $\grid{h}{\ell}\setminus X$.
		Since such an $X$ contains a vertex of every row, Cop needs to introduce at least $h$ cops.
		
		By the pigeonhole principle and \autoref{lem:third-comp-single}, there is some component $C$ of size at least $\sizeLargeComponent$ and it can be reached by Robber.
		Before Robber reaches this component, the largest component of the graph contained even more vertices.
		If the cops move back to a position $X$ with only one good component, the escape space of the robber is again larger than $\sizeLargeComponent$.
		If the cops move outside of the robber component, its size does not change.
		The only way to shrink the size of the escape space is to move the next cop into the robber component.
		Such a move could, in addition to the vertex where the cop is placed onto, remove some connected component from the robber escape space.
		But we know from \autoref{lem:third-comp-single} that this component can only be a singleton as long as the larger remaining component is good.
		Thus from this point onward the size of the robber component shrinks by at most two per round as long as the robber is still in a good component.
		Since 
		\begin{equation*}
			\sizeLargeComponent - 2(q' - h)> \sizeLargeComponent -2\left(\frac{\ell h - h^2 + 2h}{4} - h\right) = \sizeNonGood,
		\end{equation*}
		Robber can choose a good component by \autoref{lem:size-non-good}.
	\end{claimproof}
	
	As $\sizeNonGood>1$, for $h>1$, Robber is not caught after $q\leq \frac{\ell h - h^2 + 2h}{4}$ rounds and he wins the game.
\end{proof}

The bound of \autoref{lem:lower-bound-rounds} turns out to be tight up to an additive term that only depends on $h$, for all $h>3$.

\begin{lem}
	\label{lem:upper-bound-rounds}
	For $3<h<\ell-3$ and $q\geq \frac{\ell h}{4} + h +1$,
	Cop wins the game $\CR_q^{h+1}(\grid{h}{\ell})$.
\end{lem}

\begin{proof}
	In the first $h$ rounds, Cop places the cops on the following vertices in arbitrary order $X\coloneqq\{(1,\lfloor\frac{\ell}{2}\rfloor-\lfloor\frac{h}{2}\rfloor+1),(2,\lfloor\frac{\ell}{2}\rfloor-\lfloor\frac{h}{2}\rfloor+2),\ldots,(h,\lfloor\frac{\ell}{2}\rfloor+\lceil\frac{h}{2}\rceil)\}$.
	
	W.l.o.g.\ we assume that the robber is in position $\escape(X,{(h,1)})$.
	In the next round Cop places the last cop on $(h,\lfloor\frac{\ell}{2}\rfloor+\lceil\frac{h}{2}\rceil - 2)$.
	In round $r$, as long as $(1,1)\notin X_r$, there is some $i\in[h]$ and $1<j\leq \lfloor\frac{\ell}{2}\rfloor-\lfloor\frac{h}{2}\rfloor+1$ such that the cops are in some position $X_{r}=\{(1,j), (2,j+1), \ldots, (i,j+i-1), (i,j+i-3), (i+1,j+i-2), \ldots, (h,j+h-3)\}$.
	The robber is either still in position $\escape({X_{r}},{(h,1)})$ or he moved to the singleton component $\{(i,j+i-2)\}$.
	If the robber moves to the singleton component, he can be caught in the next round as $h+1>4$ and the singleton has at most four neighbours.
	Thus assume that the robber is in position $\escape({X_{r}},{(h,1)})$.
	Then the cop player moves the cop on position $(i,j+i-1)$ to $(i-1,\max(1,j+i-4))$ if $i>1$, or to position $(h,j+h-5)$.
	That is in $h$ rounds the cops move iteratively to a new diagonal that is two columns closer to vertex $(h,1)$.
	Else if $(1,1)\in X_r$, Cop continues this strategy of iteratively moving the (now partial) diagonal two columns closer to the vertex $(h,1)$ until the robber is caught.
	All in all this strategy places $\lceil(\lfloor\frac{\ell}{2}\rfloor-\lfloor\frac{h}{2}\rfloor+i)/2\rceil$ times a cop in row $i\in [h]$, if the robber avoids singleton components as long as possible.
	Therefore the number of rounds done by Cop is
	\begin{align*}
		1 + \sum_{i=1}^{h} \lceil(\lfloor\frac{\ell}{2}\rfloor-\lfloor\frac{h}{2}\rfloor+i)/2\rceil &\leq 1 + \frac12 \sum_{i=1}^{h} (\frac{\ell}{2}-\frac{h-1}{2}+i+1)\\
		&= 1 + \frac{\ell h-h^2+3h}{4} +\frac12\sum_{i=1}^{h} i = \frac{\ell h}{4} + h + 1
	\end{align*}
	
	If the robber starts in position $\escape(X,{(1,\ell)})$ the strategy for the cops is symmetric, but since $\lfloor\frac{\ell}{2}\rfloor-\lfloor\frac{h}{2}\rfloor+1\geq \ell - \lfloor\frac{\ell}{2}\rfloor - \lceil\frac{h}{2}\rceil$, Cop does not need longer to catch the robber.
\end{proof}

For $h=1$, the proof idea of \autoref{lem:lower-bound-rounds} is not applicable as on a path there are separators of size two that separate the path into three components of roughly equal size.
Despite that, one may observe that such a separator does not benefit Cop as from such a position he would always have to combine two of these components into a larger one.
Thus the cop player can only move along the path and shrink the escape space of Robber by one vertex.
This case is covered in the original proof of \cite{Furer_rounds_2001}.

\begin{lemC}[{\cite[Theorems~5~and~7]{Furer_rounds_2001}}]
	\label{lem:rounds-path}
	Let $\ell\geq 1$. Robber wins the game $\CR_q^2(\grid{1}{\ell})$ if and only if $q\leq\lceil \frac{\ell-1}{2}\rceil$.
\end{lemC}

Using Lemmas \ref{lem:lower-bound-rounds} and \ref{lem:rounds-path}, we can construct a graph $G\in\TW_{k-1}\cap\TD_q$ such that Robber wins $\CR_q^k(G)$, if $q$ is sufficiently larger than $k$.

\begin{lem}
	\label{lem:CR_tw-td}\hfill
	\begin{enumerate}
		\item For  $2\leq k-1\leq\frac{q}{3 +\log q}$, there exists a connected graph $G\in\TW_{k-1}\cap\TD_q$ such that Robber wins $\CR_q^k(G)$.
		\item For $q \geq 3$, there exists a connected graph $G'\in \TW_{1}\cap\TD_{q}$ such that Robber wins $\CR_q^2(G')$.
	\end{enumerate}
\end{lem}

\begin{proof}
	We first construct $G'\in \TW_{1}\cap\TD_q$.
	Consider the path $G'\coloneqq\grid{1}{2^q-1}$.
	It is well known that $\tw(\grid{1}{2^q-1})=1$ and $\td(\grid{1}{2^q-1}) =  \lceil \log(2^q-1+1)\rceil = q$, see  \cite[(6.2)]{Nesetril_sparsity_2012}, thus $G' \in \TW_{1}\cap\TD_q$.
	By \autoref{lem:rounds-path}, Robber wins against two cops if the game is played for $q' \leq \lceil \frac{2^q-1-1}{2}\rceil=2^{q-1} -1$ rounds.
	Thus Robber wins $\CR_q^2(G')$ since $q \geq 3$ by assumption.
	
	Next we construct $G\in\TW_{k-1}\cap\TD_q$.
	Since $\frac{q}{k-1} \geq \log q > 1$, we may pick an integer~$\ell$ such that $\frac{q}{k-1}(k+1) \leq \ell \leq \frac{q}{k-1}(k+2)$.  
	Define the graph $G$ as the grid $\grid{k-1}{\ell}$.
	Then $\td(G) \leq (k-1)(\lceil \log (\ell+1)\rceil)$, cf.\@ \autoref{fig:treedepthexample}.
	Using that $\ell \leq \frac{q}{k-1}(k+2)$ and $q \geq k+1$, we get that 
	\begin{align*}
		(k-1)(\lceil \log (\ell+1)\rceil) &\leq (k-1)\left( \log \left(\frac{q}{k-1}(k+2) + 1 \right)+1\right)\\
		&\leq (k-1)\left( \log (q)+\log\left(\frac{k+2}{k-1} + \frac{1}{k+1} \right)+1\right)\\
		&\leq (k-1)\left( \log (q)+2.5 \right) \leq q,
	\end{align*}
	where the penultimate inequality holds since $k \geq 3$ by assumption.
	Hence, $G \in \TD_q$ and clearly $G \in \TW_{k-1}$.
	
	Finally, we derive that the bound in \autoref{lem:lower-bound-rounds} applies, yielding that Robber wins the game $\CR_q^k(G)$.
	Since $\ell \geq \frac{q}{k-1}(k+1)$ and $q \geq k-1$.
	\begin{align*}
		\numberOfRoundsK 
		& \geq \frac{q(k+1) - (k-1)(k-3)}{4} \\ 
		& \geq q + \frac{(k-1)(k-3) - (k-1)(k-3)}{4} \\
		& = q.
	\end{align*}
	Hence, Robber wins $\CR_q^k(G)$.
\end{proof}

This now enables us to separate the graph classes \Ekq\ and $\TW_{k-1}\cap\TD_q$, for $q$ sufficiently larger than $k$.
The following is a formal restatement of \autoref{thm:Ekq_tw-td-informal}:

\begin{restatable}{thm}{ekqtwtdsyntax}
	\label{thm:Ekq_tw-td}
	For $q\geq 3$, it holds that $\EParam{2}{q}\subsetneq \TW_{1}\cap\TD_q$, and for $k,q\geq 1$ such that $2\leq k-1\leq\frac{q-4}{1+\log q}$, it holds that $\Ekq\subsetneq \TW_{k-1}\cap\TD_q$.
\end{restatable}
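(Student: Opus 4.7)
The plan is to assemble the statement directly from two pieces that are already in place: the subset containment from Corollary \ref{cor:Ekq-subset-tw-td}, and the game-theoretic characterisation of $\Ekq$ in Theorem \ref{thm:Ekq-cops}, applied to the witness graphs constructed in Lemma \ref{lem:CR_tw-td}. Since Corollary \ref{cor:Ekq-subset-tw-td} already gives $\Ekq \subseteq \TW_{k-1}\cap\TD_q$, the only thing left is to exhibit, for each admissible pair $(k,q)$, a graph in $\TW_{k-1}\cap\TD_q$ that lies outside $\Ekq$.

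For the first case, with $q \geq 3$ and effective parameter $k-1 = 1$, the plan is to take the path $G' = \grid{1}{2^q-1}$ produced by Lemma \ref{lem:CR_tw-td}. This graph lies in $\TW_1 \cap \TD_q$ by the treewidth and treedepth estimates recalled there, and Lemma \ref{lem:CR_tw-td} (via Lemma \ref{lem:rounds-path}) asserts that Robber wins $\CR_q^2(G')$. Applying the equivalence of clauses (2) and (3) of Theorem \ref{thm:Ekq-cops}, a Robber win in $\CR_q^k$ forces $G' \notin \EParam{2}{q}$, yielding $\EParam{2}{q} \subsetneq \TW_1 \cap \TD_q$.

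For the second case, with $2 \leq k-1 \leq \frac{q-4}{1+\log q}$, I will instantiate Lemma \ref{lem:CR_tw-td} to obtain a grid-based graph $G \in \TW_{k-1}\cap\TD_q$ on which Robber wins $\CR_q^k(G)$. The membership in $\TW_{k-1}\cap\TD_q$ is verified inside Lemma \ref{lem:CR_tw-td} using the standard grid tree-decomposition together with the inductive forest cover of depth $(k-1)\lceil\log(\ell+1)\rceil$, and the constraint on $k$ is exactly what is needed to push this bound below $q$. Invoking Theorem \ref{thm:Ekq-cops} once more gives $G \notin \Ekq$, whence the strict inclusion $\Ekq \subsetneq \TW_{k-1}\cap\TD_q$.

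Because all the technical work (the combinatorial lower bounds on rounds in Lemmas \ref{lem:lower-bound-rounds} and \ref{lem:rounds-path}, and the parameter tuning in Lemma \ref{lem:CR_tw-td}) has been done upstream, there is no real obstacle left; the theorem is a one-line assembly of Corollary \ref{cor:Ekq-subset-tw-td}, Theorem \ref{thm:Ekq-cops}, and Lemma \ref{lem:CR_tw-td}.
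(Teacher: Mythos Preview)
Your proposal is correct and matches the paper's approach exactly: the paper's proof is literally the one-line ``By \cref{thm:Ekq-cops,lem:CR_tw-td},'' and you have simply unpacked this citation (adding the explicit reference to Corollary~\ref{cor:Ekq-subset-tw-td} for the easy containment, which the paper leaves implicit).
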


\begin{proof}
	By \autoref{thm:Ekq-cops} and \autoref{lem:CR_tw-td}.
\end{proof}

\section{Homomorphism indistinguishability and logical equivalence}
\label{sec:deep-wide-homind}

In this section, we want to investigate $\Ekq$ in terms of homomorphism indistinguishability.
It turns out that the representation of graphs in $\Ekq$ in terms of construction trees offers a perspective suitable for obtaining characterisations of logical equivalence.
The general idea will be to use these trees to inductively construct $\lC$-formulae that capture homomorphism counts.
Not only does this approach generalise results from \cite{Dvorak_recognizing_2010,Grohe_counting_2020}, it also yields an intuitive characterisation of $\lC^k_q$-equivalence and thereby provides a more elementary proof of a result from \cite{dawar_lovasz-type_2021}.

Moreover, the constructive nature of our proof strategy proves fruitful in obtaining additional characterisations of fragments of $\lC$.
The general idea is to impose natural restrictions on the construction trees, such that a fragment $\lL \subsetneq \lC$ already suffices to capture homomorphism counts.
By choosing these restrictions carefully, the resulting subclass of $\Ekq$ is then still large enough to capture $\lL$-equivalence.
We illustrate this point by giving a characterisation of \emph{guarded} counting logic $\lGC$.

We conclude by \emph{semantically} separating $\Ekq$ and $\mathcal{TW}_{k-1} \cap \mathcal{TD}_q$.
More formally, we show \autoref{cor:semantic}, i.e.\ that, for $q$ sufficiently larger than $k$, there exist graphs $G$ and $H$ which are homomorphism indistinguishable over $\Ekq$ but have different numbers of homomorphisms from some graph in $\mathcal{TW}_{k-1} \cap \mathcal{TD}_q$.

\subsection{Homomorphism indistinguishability over \texorpdfstring{$\Ekq$}{Ekq} is \texorpdfstring{$\lC^k_q$}{Ckq}-equivalence}
\label{subsec:ckq}

In his 2010 paper \cite{Dvorak_recognizing_2010}, Dvo\v{r}ák showed that $\lC^k$-equivalence is equivalent to homomorphism indistinguishability over $\TW_k$.
It turns out that his techniques generalize remarkably well to construction trees.
To begin with, we make a few observations on how the operations that make up construction trees interact with homomorphism counts.

The following is not hard to see.

\begin{obs}\label{prp:homproducts}
	For labelled graphs $F_1, F_2, G$, it holds that \[\hom(F_1F_2, G) = \hom(F_1, G) \cdot \hom(F_2, G).\]
\end{obs}

This is because any two homomorphisms $g \colon F_1 \to G$ and $h \colon F_2 \to G$ must agree on vertices with the same label, so $g \sqcup h$ is well-defined and a homomorphism from $F_1F_2$ to $G$.
Moreover, for $h \in \HOM(F_1F_2, G)$ the restrictions $h\restrict{V(F_1)}$ and $h\restrict{V(F_2)}$ are homomorphisms.
We can also relate the homomorphism counts from graphs $F$ and $F'$, whenever $F'$ is obtained from $F$ by removing some label $\ell$.
Then in any homomorphism $h \colon F' \to G$ the image of $\labfkt_F(\ell)$ is no longer necessarily $\labfkt_G(\ell)$.
Hence, we can obtain $\hom(F, G)$ by moving the label $\ell$ to different vertices in $G$ and tallying up the homomorphisms from $F$ to those graphs.

\begin{obs}\label{prp:homlabeldel}
	Let $F'$ be the graph obtained from $F$ by removing a single label $\ell$.
	Then $\hom(F', G) = m$ if and only if there exists a decomposition $m = \sum_ic_im_i$ with $c_i, m_i \in \NN$, such that:
	\begin{itemize}
		\item There exist exactly $c_i$ vertices $v$ with $\hom(F, G(\ell \rightarrow v)) = m_i$.
		\item There exist exactly $c \coloneqq \sum_i c_i$ vertices $v$ with $\hom(F, G(\ell \rightarrow v)) \neq 0$.
	\end{itemize}
\end{obs}

Finally, observe that when $F$ is fully labelled there can be at most one homomorphism $h \colon F \to G$, which is entirely determined by the label positions in $G$.

\begin{obs}
	Let $F$ be a fully labelled graph and let $L_F$ denote the set of labels.
	Then there exists a unique homomorphism $h \colon F \to G$ if for all labels $i, j \in L_F$
	\begin{itemize}
		\item $\labfkt_F(i) = \labfkt_F(j) \implies \labfkt_G(i) = \labfkt_G(j)$,
		\item $\labfkt_F(i)\labfkt_F(j) \in E(F) \implies \labfkt_G(i)\labfkt_G(j) \in E(G)$.
	\end{itemize}
\end{obs}

The crucial insight is that these conditions are all definable in $\lC$.
The condition for fully labelled graphs in particular can be expressed as a conjunction of atomic formulae using at most $|L_F|$ different variables.
This allows us to prove the following lemma by induction over a construction tree.

\begin{lem}
	\label{lem:homcounts_in_ckq}
	Let $F \in \Lkq$ be a $k$-labelled graph, and let $m\in \NN$.
	There exists a formula $\phi_m \in \lC^k_q$ such that for each $k$-labelled graph $G$ with $L_F \subseteq L_G$, $G \models \phi_m$ if and only if $\hom(F, G) = m$.
\end{lem}

\begin{proof}
	Since $F$ is \elimOrd{k}{q}, there is a $k$-construction tree $(T, \lambda, r)$ for $F$ of \elimDepth{} at most $q$.
	We construct $\phi_m$ by induction over the structure of $T$.
	Without loss of generality, we assume that $T$ is a binary tree.
	
	Let $v \in V(T)$ be a leaf of $T$.
	Since $H \coloneqq \lambda(v)$ is fully labelled, for any graph $G$ with $L_{H} \subseteq L_G$ there is either a unique homomorphism from $H$ to
	$G$ or none at all.
	We thus let $\phi_1^v$ be the conjunction of formulae $x_i = x_j$ for $\labfkt_{H}(i) = \labfkt_{H}(j)$ and $Ex_ix_j$ for $\labfkt_{H}(i)\labfkt_{H}(j) \in E(H)$.
	We then let $\phi^v_0 = \neg \phi^v_1$ and $\phi^v_m = \bot$ for $m > 1$.
	Note that for all $m$, $\phi^v_m$ uses at most $k$ distinct variables and has quantifier-rank $0$, so $\phi_m^v \in \lC^k_q$.
	
	Now let $v \in V(T)$ be a node with two children $w_1, w_2$.
	Since $T$ is a construction tree, we have $\lambda(v) = \lambda(w_1)\lambda(w_2)$.
	Per induction hypothesis, there are formulae $\phi^{w_1}_m, \phi^{w_2}_m \in \lC^k_q$, such that $\hom(\lambda(w_i), G) = m$ if and only if $G \models \phi^{w_i}_m$ for appropriately labelled graphs $G$, $i \in \{1,2\}$, and $m \in \NN$.
	If $m \geq 1$ we let $\phi^v_m$ be the disjunction of formulae $\phi^{w_1}_{m_1} \land \phi^{w_2}_{m_2}$ for all $m_1, m_2$ with $m = m_1m_2$.
	For $m = 0$ we let $\phi_m^v = \phi^{w_1}_0 \lor \phi^{w_2}_0$.
	This boolean combination does not alter the quantifier rank, so $\phi^v_m$ still has quantifier-rank at most $q$.
	Moreover, $\phi^{w_1}_m$ and and $\phi^{w_2}_m$ both have variables among $x_1, \dots, x_k$, so $\phi^v_m \in \lC^k_q$.
	
	Finally, suppose $v \in V(T)$ has only one child $w$.
	This implies that we can obtain $\lambda(v)$ from $\lambda(w)$ by removing a label $\ell$.
	Per induction hypothesis, there are formulae $\phi^w_m \in \lC^k_q$ that capture that there are $m$ homomorphisms from $\lambda(w)$ to an appropriately labelled graph $G$.
	By \autoref{prp:homlabeldel}, we can define $\phi^v_m$ as the disjunction over all decompositions $m = \sum_{i=1}^t c_im_i$, for $c_i, m_i \in \NN$ and $c \coloneqq \sum c_i$, of formulae
	\begin{equation*}
		\exists^{= c} x_\ell ~\neg\phi^w_0 \land \bigwedge_{i \in [t]} \exists^{=c_i}x_l ~\phi^w_{m_i}.
	\end{equation*}
	
	If $v$ is the root of $T$, i.e. $\lambda(v) = F$, we let $\phi_m = \phi_m^v$.
	Observe that each elimination step increases the quantifier-rank of $\phi_m$ by one, and at the leafs the quantifier-rank is 0.
	Since $T$ has \elimDepth{} at most $q$, it holds that $\phi_m \in \lC^k_q$.
\end{proof}

Ideally, we would like to prove the converse in a similar manner.
Given some $\lC^k_q$ formula $\psi$ that distinguishes two graphs $G$ and $H$, we want to construct a graph $F \in \Lkq$ with $\hom(F, G) \neq \hom(F, H)$ by induction over the structure of $\psi$.
While graphs are too rigid in this regard, such a construction will be possible using \emph{formal linear combinations} of graphs.\footnote{These linear combinations are called \enquote{quantum graphs} in \cite{Dvorak_recognizing_2010}.} A formal linear combination is just a set of tuples $(c_i, G_i)$ of scalar values $c_i$ and graphs $G_i$, that we interpret as a linear combination of the $G_i$ with corresponding coefficient $c_i$.

For a class of (labelled) graphs $\mathcal{F}$, we let $\RR\mathcal{F}$ be the class of finite formal linear combinations with real coefficients of graphs $F \in \mathcal{F}$.
We linearly extend the $\hom$ function to $\RR\cG$ by defining
\begin{equation*}
	\hom(\qg{F}, G) = \hom(\sum_i c_iF_i, G) \coloneqq \sum_i c_i \cdot \hom(F_i, G),
\end{equation*}
for $\qg{F} = \sum_i c_iF_i \in \RR\mathcal{F}$.

The following observation shows that homomorphism indistinguishability over $\mathcal{F}$ and over $\RR\mathcal{F}$ is essentially the same. This allows us to reason about linear combinations instead of graphs.
This argument is also used in \cite{Dvorak_recognizing_2010}.

\begin{obs}\label{prp:qgdistinguishing}
	Let $G, H$ be graphs and let $\qg{F} \in \RR\mathcal{F}$.
	If $\hom(\qg{F}, G) \neq \hom(\qg{F}, H)$, then there is already an $F \in \mathcal{F}$ with $\hom(F, G) \neq \hom(F, H)$.
\end{obs}

The product of two linear combinations is defined in the natural way, that is,
\begin{equation*}
	\left(\sum_i c_i F_i\right)\left(\sum_i c'_i F'_i\right) = \sum_{ij} c_ic'_j F_iF'_j.
\end{equation*}
We also remove any graphs with loops that might have been created from the resulting linear combination, to ensure that it only contains loopless graphs.
This does not impact the homomorphism count into a loopless graph $H$, as there are no homomorphisms from a graph with a self-loop into a loopless graph.
This definition preserves the property on homomorphism counts of products, that is $\hom(\qg{F}_1\qg{F}_2, H) = \hom(\qg{F}_1, H)\hom(\qg{F}_2, H)$ and admits the following interpolation lemma.

\begin{lemC}[{\cite[Lemma 5]{Dvorak_recognizing_2010}}]
	\label{lem:qginterpolation}
	Let $\mathcal{F}$ be a class of graphs and let $\qg{F} \in \RR\mathcal{F}$.
	If $S^-, S^+$ are disjoint finite sets of real numbers, then there exists a linear combination of graphs $\qgp{F}{S^+}{S^-} \in \RR\mathcal{G}$, such that for any graph $G$
	\begin{itemize}
		\item $\hom(\qgp{F}{S^+}{S^-}, G) = 1$ if $\hom(\qg{F}, G) \in S^+$, and
		\item $\hom(\qgp{F}{S^+}{S^-}, G) = 0$ if $\hom(\qg{F}, G) \in S^-$.
	\end{itemize}
	Moreover, if $\mathcal{F}$ is closed under taking products then $\qgp{F}{S^+}{S^-} \in \RR\mathcal{F}$.
\end{lemC}

With this result, we may construct for a formula $\psi \in \lC^k_q$ and $n \in \NN$ a linear combination $\qg{F}_{\psi, n}$ such that for all graphs $G$ of size $n$ it holds that $\hom(\qg{F}_{\psi, n}, G) = 1$ if $G \models \psi$ and $\hom(\qg{F}_{\psi, n}, G) = 0$ otherwise.
We say that $\qg{F}_{\psi, n}$ \emph{models $\psi$ for graphs of size $n$}.

\begin{lem}
	\label{lem:qg_from_ckq}
	Let $k, q \geq 1$ and let $\phi$ be a $\lC^k_q$-formula.
	Then for every $n \geq 1$ there exists an $\qg{F} \in \RR\Lkq$ modelling $\phi$ for graphs of size $n$.
\end{lem}

\begin{proof}
	The proof is by induction over the structure of $\phi$.
	If $\phi = [x_i = x_j]$, we let $\qg{F} = F$ be the graph consisting of a single vertex $v$ with $\labfkt_F(i) = \labfkt_F(j) = v$.
	If $\phi = Ex_ix_j$, we let $\qg{F} = F$ be the graph consisting of two adjacent vertices $v_1, v_2$ with $\labfkt_F(i) = v_1$ and $\labfkt_F(j) = v_2$, unless $i = j$, in which case we let $\qg{F} = 0$.
	It is not hard to see that there exists a (unique) homomorphism from $\qg{F}$ to a loopless graph $G$ iff $G \models \varphi$.
	In all these cases $\qr(\phi) = 0$, and since $F$ is always fully labelled, it holds that $F$ is \elimOrd{k}{0}.
	Consequently, $\qg{F} \in \RR\LParam{k}{0}$.
	
	If $\phi = \neg \psi$, then there exists per induction hypothesis an $\qg{F}_\psi \in \RR\Lkq$ modelling $\psi$ for graphs of order $n$.
	We use the interpolation construction from \autoref{lem:qginterpolation} and let $\qg{F} = \qgp{F_\psi}{\{0\}}{\{1\}}$.
	Since $\Lkq$ is closed under taking products, we have $\qg{F} \in \RR\Lkq$.
	
	If $\phi = \psi \lor \theta$, let $\qg{F_\psi}, \qg{F_\theta}$ be defined as above.
	Then $\qg{F} \coloneqq \qgp{F'}{\{1, 2\}}{\{0\}}$ where $\qg{F'} = \qg{F_\psi} + \qg{F_\theta}$ models $\phi$ for graphs of order $n$.
	Again, by \autoref{lem:qginterpolation} it is $\qg{F} \in \RR\Lkq$.
	
	Finally, consider the case $\phi = \exists^{{\geq} t} x_\ell \psi$.
	Let $\qg{F_\psi} = \sum_{i}c_iF_{\psi, i}$ be a linear combination modelling $\psi$ for graphs of size $n$.
	Since $\qr(\psi) = \qr(\phi) - 1$, from the induction hypothesis we get that $\qg{F_\psi} \in \RR\LParam{k}{q-1}$.
	We let $\qg{F_\psi'}$ be the graph obtained from $\qg{F_\psi}$ by removing the label $\ell$ from all $F_{\psi, i}$.
	Then
	\begin{align*}
		\hom(\qg{F_\psi'}, G) &= \sum_{v \in V(G)}\sum_i c_i\hom(F_{\psi, i}, G(\ell \to v))\\
		&= \sum_{v \in V(G)}\hom(\qg{F_\psi}, G(\ell \to v)),
	\end{align*}
	so $\qg{F} = \qgp{F_\psi'}{\{t, \dots, n\}}{\{0, \dots, t-1\}}$ models $\phi$ for graphs of order $n$.
	Moreover, it is easy to see that the $F_{\psi, i}'$, obtained by removing a label from $F_{\psi, i}$, are \elimOrd{k}{q}.
	Consequently, $\qg{F} \in \RR\Lkq$.
\end{proof}

The construction used above has the property that labels in the components of $\qg{F}$ correspond to free variables of $\phi$.
This correspondence yields the following corollary.

\begin{cor}\label{cor:qg_from_ckq}
	Let $k, q \geq 1$ and let $\phi$ be a $\lC^k_q$-sentence.
	Then for every $n \geq 1$ there exists an $\qg{F} \in \RR\Ekq$ modelling $\phi$ for graphs of size $n$.
\end{cor}

We can now prove the main result of this section.

\begin{thm}\label{thm:ckq_equivalence}
	Let $k,q \geq 1$.
	Two graphs $G$ and $H$ are $\lC^k_q$-equivalent if and only if they are homomorphism indistinguishable over $\Ekq$.
\end{thm}
\begin{proof}
	Suppose there was a graph $F \in \Ekq \subseteq \Lkq$ with $\hom(F, G) \neq \hom(F, H)$.
	Then by \autoref{lem:homcounts_in_ckq} there exist $\lC^k_q$ sentences $\varphi^F_m$ such that $G \models \varphi^F_m$ iff $\hom(F, G) = m$.
	Consequently, there exists an $m$ with $G \models \varphi^F_m$ and $H \not\models \varphi^F_m$, so $G$ and $H$ cannot satisfy the same $\lC^k_q$ sentences.
	
	Suppose now there was a sentence $\phi \in \lC^k_q$ with $G \models \phi$ and $H \not\models \phi$.
	W.l.o.g. we assume $|G| = |H| = n$.
	Then, by \autoref{cor:qg_from_ckq}, there is an $\qg{F} \in \RR\Ekq$ that models $\phi$ for graphs of size $n$, that is, $\hom(\qg{F}, G) \neq \hom(\qg{F}, H)$.
	By \autoref{prp:qgdistinguishing}, this already implies the existence of an $F \in \Ekq$ with $\hom(F, G) \neq \hom(F, H)$.
\end{proof}

By dropping the restriction on one of the parameters in \autoref{thm:ckq_equivalence}, we recover the original results of Dvo\v{r}\'ak~\cite{Dvorak_recognizing_2010} and Grohe~\cite{Grohe_counting_2020}:

\begin{cor}
	Let $k, q \geq 1$.
	Let $G$ and $H$ be graphs.
	\begin{enumerate}
		\item $G$ and $H$ are $\lC^k$-equivalent iff they are homomorphism indistinguishable over the graph class $\mathcal{TW}_{k-1}$.
		\item $G$ and $H$ are $\lC_q$-equivalent iff they are homomorphism indistinguishable over the graph class $\mathcal{TD}_q$.
	\end{enumerate}
\end{cor}

\subsection{Guarded fragments}
\label{subsec:gckq}

Given the constructive nature of these proofs, it is interesting to investigate whether the same strategy can be used to obtain results for different fragments of $\lC$ by restricting construction trees in some way.
An example where this works well is guarded counting logic $\lGC$.

In the guarded fragment $\lGC$, quantifiers are restricted to range over the neighbours of a vertex, that is they are equipped with a guard.
Formally, we require that quantifiers only occur in the form $\exists^{{\geq} t}y (Exy \land \psi(z_1, \dots, z_n, y))$, where $x$ and $y$ are distinct variables.

Since $\lGC$-formulae necessarily have a free variable, it is not immediately obvious how to define $\lGC$-equivalence on graphs.
One option is to pair graphs together with a distinguished vertex and write $G, v \equiv_\lGC H, w$.
This works, and we obtain, in fact, a characterisation for precisely this relation.
However, we would prefer to study the landscape of homomorphism indistinguishability relations on graphs without restrictions.
The following natural definition of \textsf{GC}-equivalence allows us to lift our result to graphs without a dinstinguished vertex.

\begin{defi}[\textsf{GC}-equivalence]\label{def:gcequiv}
	Let $G$ and $H$ be unlabelled graphs.
	We say that $G$ and $H$ are $\textsf{GC}$-equivalent if there exists a bijection $f \colon V(G) \to V(H)$ such that
	\[G, v \models \varphi(x) \iff H, f(v) \models \varphi(x)\]
	for all $v \in V(G)$ and $\varphi \in \mathsf{GC}$.
	In this case we write $G \equiv_{\mathsf{GC}} H$.
\end{defi}

To apply our proof strategy to $\lGC$, we need to restrict the construction trees such that guarded quantifiers suffice to express homomorphism counts.
Observe that the quantifiers are only needed to describe how the number of homomorphisms $F \to G$ changes by removing a label from $F$.
More precisely, we use the fact that removing a label $\ell$ from $F$ is the same as moving it around in $G$ and tallying up the resulting homomorphisms.
Now if $\ell$ is adjacent to some other label $\ell'$, then the only positions of $\ell$ in $G$ that contribute to the final homomorphism count are adjacent to $\ell'$.
Consequently, it will suffice to quantify over the neighbours of $\ell'$.

\begin{defi}
	Let $k, q \geq 1$.
	By $\GEkq$ we denote the class of $k$-labelled graphs that admit a $k$-construction tree of elimination-depth $q$ with the additional restriction that labels can only be removed if they have a labelled neighbour.
\end{defi}

We observe that in \autoref{fig:elim-example} there are nodes where labels without labelled neighbors are removed.
In \autoref{fig:guard-elim-example}, we depict a construction tree without such nodes of the same graph.
We remark that all graphs in $\GEkq$ are labelled, as a single label can never be removed.
Under these restrictions, the argument from \autoref{lem:homcounts_in_ckq} goes through using only guarded quantifiers.

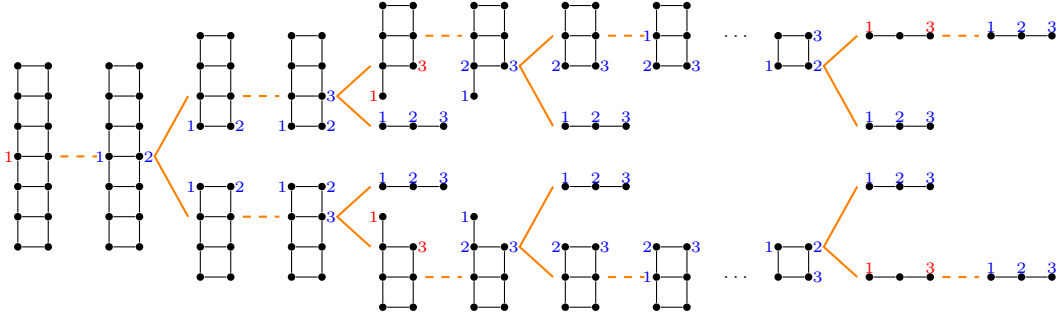
\begin{figure}
	\centering
	\begin{tikzpicture}[scale=0.4,smallVertex/.style={fill=black, inner sep=1, circle}, node font=\tiny]
		%first grid
		\foreach \i in {0,1,2,3,4,5,6}{
			\foreach \j in {0,1}{
				\node[smallVertex] (a\j\i) at (\j,\i) {};
				\ifthenelse{\NOT \i=0}{
					\tikzmath{
						integer \l;
						\l = \i - 1;}
					\draw (a\j\i) -- (a\j\l);
				}{}
			}
			\draw (a0\i) -- (a1\i);
		}
		\node[red] at (-0.3,3) {$1$};
		
		%second grid
		\foreach \i in {0,1,2,3,4,5,6}{
			\foreach \j in {0,1}{
				\tikzmath{
					integer \newj;
					\newj = \j+3;}
				\node[smallVertex] (b\j\i) at (\newj,\i) {};
				\ifthenelse{\NOT \i=0}{
					\tikzmath{
						integer \l;
						\l = \i - 1;}
					\draw (b\j\i) -- (b\j\l);
				}{}
			}
			\draw (b0\i) -- (b1\i);
		}
		\node[blue] at (2.7,3) {$1$};
		\node[blue] at (4.3,3) {$2$};
		
		%third grid1
		\foreach \i in {0,1,2,3}{
			\foreach \j in {0,1}{
				\tikzmath{
					integer \newj;
					\newj = \j+6;
					integer \newi;
					\newi = \i+4;}
				\node[smallVertex] (c\j\i) at (\newj,\newi) {};
				\ifthenelse{\NOT \i=0}{
					\tikzmath{
						integer \l;
						\l = \i - 1;}
					\draw (c\j\i) -- (c\j\l);
				}{}
			}
			\draw (c0\i) -- (c1\i);
		}
		\node[blue] at (5.7,4) {$1$};
		\node[blue] at (7.3,4) {$2$};
		
		%third grid2
		\foreach \i in {0,1,2,3}{
			\foreach \j in {0,1}{
				\tikzmath{
					integer \newj;
					\newj = \j+6;
					integer \newi;
					\newi = \i-1;}
				\node[smallVertex] (d\j\i) at (\newj,\newi) {};
				\ifthenelse{\NOT \i=0}{
					\tikzmath{
						integer \l;
						\l = \i - 1;}
					\draw (d\j\i) -- (d\j\l);
				}{}
			}
			\draw (d0\i) -- (d1\i);
		}
		\node[blue] at (5.7,2) {$1$};
		\node[blue] at (7.3,2) {$2$};
		
		%fourth grid1
		\foreach \i in {0,1,2,3}{
			\foreach \j in {0,1}{
				\tikzmath{
					integer \newj;
					\newj = \j+9;
					integer \newi;
					\newi = \i+4;}
				\node[smallVertex] (c\j\i) at (\newj,\newi) {};
				\ifthenelse{\NOT \i=0}{
					\tikzmath{
						integer \l;
						\l = \i - 1;}
					\draw (c\j\i) -- (c\j\l);
				}{}
			}
			\draw (c0\i) -- (c1\i);
		}
		\node[blue] at (8.7,4) {$1$};
		\node[blue] at (10.3,4) {$2$};
		\node[blue] at (10.3,5) {$3$};
		
		%fourth grid2
		\foreach \i in {0,1,2,3}{
			\foreach \j in {0,1}{
				\tikzmath{
					integer \newj;
					\newj = \j+9;
					integer \newi;
					\newi = \i-1;}
				\node[smallVertex] (d\j\i) at (\newj,\newi) {};
				\ifthenelse{\NOT \i=0}{
					\tikzmath{
						integer \l;
						\l = \i - 1;}
					\draw (d\j\i) -- (d\j\l);
				}{}
			}
			\draw (d0\i) -- (d1\i);
		}
		\node[blue] at (8.7,2) {$1$};
		\node[blue] at (10.3,2) {$2$};
		\node[blue] at (10.3,1) {$3$};
		
		%fifth grid1
		\foreach \i in {0,1,2}{
			\foreach \j in {0,1}{
				\tikzmath{
					integer \newj;
					\newj = \j+12;
					integer \newi;
					\newi = \i+6;}
				\node[smallVertex] (c\j\i) at (\newj,\newi) {};
				\ifthenelse{\NOT \i=0}{
					\tikzmath{
						integer \l;
						\l = \i - 1;}
					\draw (c\j\i) -- (c\j\l);
				}{}
			}
			\draw (c0\i) -- (c1\i);
		}
		\node[smallVertex] (c) at (12,5) {};
		\draw (c) -- (c00);
		\node[red] at (11.7,5) {$1$};
		\node[red] at (13.3,6) {$3$};
		
		% paths
		\foreach \j in {0,1,2}{
			\tikzmath{
				integer \newj;
				\newj = \j + 12;}
			\node[smallVertex] (e\j) at (\newj,4) {};
			\node[smallVertex] (f\j) at (\newj,2) {};
			\ifthenelse{\NOT \j=0}{
				\tikzmath{
					integer \l;
					\l = \j - 1;}
				\draw (e\j) -- (e\l);
				\draw (f\j) -- (f\l);
			}{}
		}
		\node[blue] at (12,4.3) {$1$};
		\node[blue] at (13,4.3) {$2$};
		\node[blue] at (14,4.3) {$3$};
		\node[blue] at (12,2.3) {$1$};
		\node[blue] at (13,2.3) {$2$};
		\node[blue] at (14,2.3) {$3$};
		
		%fifth grid2
		\foreach \i in {0,1,2}{
			\foreach \j in {0,1}{
				\tikzmath{
					integer \newj;
					\newj = \j+12;
					integer \newi;
					\newi = \i-2;}
				\node[smallVertex] (d\j\i) at (\newj,\newi) {};
				\ifthenelse{\NOT \i=0}{
					\tikzmath{
						integer \l;
						\l = \i - 1;}
					\draw (d\j\i) -- (d\j\l);
				}{}
			}
			\draw (d0\i) -- (d1\i);
		}
		\node[smallVertex] (d) at (12,1) {};
		\draw (d) -- (d02);
		\node[red] at (11.7,1) {$1$};
		\node[red] at (13.3,0) {$3$};
		
		%sixth grid1
		\foreach \i in {0,1,2}{
			\foreach \j in {0,1}{
				\tikzmath{
					integer \newj;
					\newj = \j+15;
					integer \newi;
					\newi = \i+6;}
				\node[smallVertex] (c\j\i) at (\newj,\newi) {};
				\ifthenelse{\NOT \i=0}{
					\tikzmath{
						integer \l;
						\l = \i - 1;}
					\draw (c\j\i) -- (c\j\l);
				}{}
			}
			\draw (c0\i) -- (c1\i);
		}
		\node[smallVertex] (c) at (15,5) {};
		\draw (c) -- (c00);
		\node[blue] at (14.7,5) {$1$};
		\node[blue] at (14.7,6) {$2$};
		\node[blue] at (16.3,6) {$3$};
		
		%sixth grid2
		\foreach \i in {0,1,2}{
			\foreach \j in {0,1}{
				\tikzmath{
					integer \newj;
					\newj = \j+15;
					integer \newi;
					\newi = \i-2;}
				\node[smallVertex] (d\j\i) at (\newj,\newi) {};
				\ifthenelse{\NOT \i=0}{
					\tikzmath{
						integer \l;
						\l = \i - 1;}
					\draw (d\j\i) -- (d\j\l);
				}{}
			}
			\draw (d0\i) -- (d1\i);
		}
		\node[smallVertex] (d) at (15,1) {};
		\draw (d) -- (d02);
		\node[blue] at (14.7,1) {$1$};
		\node[blue] at (14.7,0) {$2$};
		\node[blue] at (16.3,0) {$3$};
		
		%seventh grid1
		\foreach \i in {0,1,2}{
			\foreach \j in {0,1}{
				\tikzmath{
					integer \newj;
					\newj = \j+18;
					integer \newi;
					\newi = \i+6;}
				\node[smallVertex] (c\j\i) at (\newj,\newi) {};
				\ifthenelse{\NOT \i=0}{
					\tikzmath{
						integer \l;
						\l = \i - 1;}
					\draw (c\j\i) -- (c\j\l);
				}{}
			}
			\draw (c0\i) -- (c1\i);
		}
		\node[blue] at (17.7,6) {$2$};
		\node[blue] at (19.3,6) {$3$};
		
		% paths
		\foreach \j in {0,1,2}{
			\tikzmath{
				integer \newj;
				\newj = \j + 18;}
			\node[smallVertex] (e\j) at (\newj,4) {};
			\node[smallVertex] (f\j) at (\newj,2) {};
			\ifthenelse{\NOT \j=0}{
				\tikzmath{
					integer \l;
					\l = \j - 1;}
				\draw (e\j) -- (e\l);
				\draw (f\j) -- (f\l);
			}{}
		}
		\node[blue] at (18,4.3) {$1$};
		\node[blue] at (19,4.3) {$2$};
		\node[blue] at (20,4.3) {$3$};
		\node[blue] at (18,2.3) {$1$};
		\node[blue] at (19,2.3) {$2$};
		\node[blue] at (20,2.3) {$3$};
		
		%seventh grid2
		\foreach \i in {0,1,2}{
			\foreach \j in {0,1}{
				\tikzmath{
					integer \newj;
					\newj = \j+18;
					integer \newi;
					\newi = \i-2;}
				\node[smallVertex] (d\j\i) at (\newj,\newi) {};
				\ifthenelse{\NOT \i=0}{
					\tikzmath{
						integer \l;
						\l = \i - 1;}
					\draw (d\j\i) -- (d\j\l);
				}{}
			}
			\draw (d0\i) -- (d1\i);
		}
		\node[blue] at (17.7,0) {$2$};
		\node[blue] at (19.3,0) {$3$};
		
		%eighth grid1
		\foreach \i in {0,1,2}{
			\foreach \j in {0,1}{
				\tikzmath{
					integer \newj;
					\newj = \j+21;
					integer \newi;
					\newi = \i+6;}
				\node[smallVertex] (c\j\i) at (\newj,\newi) {};
				\ifthenelse{\NOT \i=0}{
					\tikzmath{
						integer \l;
						\l = \i - 1;}
					\draw (c\j\i) -- (c\j\l);
				}{}
			}
			\draw (c0\i) -- (c1\i);
		}
		\node[blue] at (20.7,7) {$1$};
		\node[blue] at (20.7,6) {$2$};
		\node[blue] at (22.3,6) {$3$};
		
		%eighth grid2
		\foreach \i in {0,1,2}{
			\foreach \j in {0,1}{
				\tikzmath{
					integer \newj;
					\newj = \j+21;
					integer \newi;
					\newi = \i-2;}
				\node[smallVertex] (d\j\i) at (\newj,\newi) {};
				\ifthenelse{\NOT \i=0}{
					\tikzmath{
						integer \l;
						\l = \i - 1;}
					\draw (d\j\i) -- (d\j\l);
				}{}
			}
			\draw (d0\i) -- (d1\i);
		}
		\node[blue] at (20.7,-1) {$1$};
		\node[blue] at (20.7,0) {$2$};
		\node[blue] at (22.3,0) {$3$};
		
		%ninth grid1
		\foreach \i in {0,1}{
			\foreach \j in {0,1}{
				\tikzmath{
					integer \newj;
					\newj = \j+25;
					integer \newi;
					\newi = \i+6;}
				\node[smallVertex] (c\j\i) at (\newj,\newi) {};
				\ifthenelse{\NOT \i=0}{
					\tikzmath{
						integer \l;
						\l = \i - 1;}
					\draw (c\j\i) -- (c\j\l);
				}{}
			}
			\draw (c0\i) -- (c1\i);
		}
		\node[blue] at (24.7,6) {$1$};
		\node[blue] at (26.3,6) {$2$};
		\node[blue] at (26.3,7) {$3$};
		
		%ninth grid2
		\foreach \i in {0,1}{
			\foreach \j in {0,1}{
				\tikzmath{
					integer \newj;
					\newj = \j+25;
					integer \newi;
					\newi = \i-1;}
				\node[smallVertex] (d\j\i) at (\newj,\newi) {};
				\ifthenelse{\NOT \i=0}{
					\tikzmath{
						integer \l;
						\l = \i - 1;}
					\draw (d\j\i) -- (d\j\l);
				}{}
			}
			\draw (d0\i) -- (d1\i);
		}
		\node[blue] at (24.7,0) {$1$};
		\node[blue] at (26.3,0) {$2$};
		\node[blue] at (26.3,-1) {$3$};
		
		% 4 paths
		\foreach \j in {0,1,2}{
			\tikzmath{
				integer \newj;
				\newj = \j + 28;}
			\node[smallVertex] (c\j) at (\newj,7) {};
			\node[smallVertex] (d\j) at (\newj,-1) {};
			\node[smallVertex] (e\j) at (\newj,4) {};
			\node[smallVertex] (f\j) at (\newj,2) {};
			\ifthenelse{\NOT \j=0}{
				\tikzmath{
					integer \l;
					\l = \j - 1;}
				\draw (c\j) -- (c\l);
				\draw (d\j) -- (d\l);
				\draw (e\j) -- (e\l);
				\draw (f\j) -- (f\l);
			}{}
		}
		\node[blue] at (28,4.3) {$1$};
		\node[blue] at (29,4.3) {$2$};
		\node[blue] at (30,4.3) {$3$};
		\node[blue] at (28,2.3) {$1$};
		\node[blue] at (29,2.3) {$2$};
		\node[blue] at (30,2.3) {$3$};
		\node[red] at (28,7.3) {$1$};
		\node[red] at (30,7.3) {$3$};
		\node[red] at (28,-0.7) {$1$};
		\node[red] at (30,-0.7) {$3$};
		
		% paths
		\foreach \j in {0,1,2}{
			\tikzmath{
				integer \newj;
				\newj = \j + 32;}
			\node[smallVertex] (e\j) at (\newj,7) {};
			\node[smallVertex] (f\j) at (\newj,-1) {};
			\ifthenelse{\NOT \j=0}{
				\tikzmath{
					integer \l;
					\l = \j - 1;}
				\draw (e\j) -- (e\l);
				\draw (f\j) -- (f\l);
			}{}
		}
		\node[blue] at (32,7.3) {$1$};
		\node[blue] at (33,7.3) {$2$};
		\node[blue] at (34,7.3) {$3$};
		\node[blue] at (32,-0.7) {$1$};
		\node[blue] at (33,-0.7) {$2$};
		\node[blue] at (34,-0.7) {$3$};
		
		% construction tree
		\draw[dashed, thick, orange] (1.4,3) -- (2.6,3) (7.4,5) -- (8.6,5) (7.4,1) -- (8.6,1) (13.4,7) -- (14.6,7) (13.4,-1) -- (14.6,-1) (19.4,7) -- (20.6,7) (19.4,-1) -- (20.6,-1) (30.4,7) -- (31.6,7) (30.4,-1) -- (31.6,-1);
		\draw[thick, orange] (5.6,1) -- (4.5,3) -- (5.6,5) (11.6,4) -- (10.5,5) -- (11.6,6) (11.6,0) -- (10.5,1) -- (11.6,2) (17.6,4) -- (16.5,6) -- (17.6,7) (17.6,-1) -- (16.5,0) -- (17.6,2) (27.6,4) -- (26.5,6) -- (27.6,7) (27.6,-1) -- (26.5,0) -- (27.6,2);
		\node[] at (23.6,7) {$\ldots$};
		\node[] at (23.6,-1) {$\ldots$};
	\end{tikzpicture}
	\caption{A guarded $3$-construction tree of elimination depth $7$ for the grid $\grid{2}{7}$ with one labelled vertex. Edges entering elimination nodes are dashed. At every labelled graph, those labels that may be removed are marked blue, those that may not be removed are marked red. The dotted omitted part of the construction tree follows the same pattern.}
	\label{fig:guard-elim-example}
\end{figure}

\begin{lem}
	\label{lem:gchomcap}
	Let $F \in \GEkq$.
	Then for each $m \geq 0$ there is a formula $\varphi_m \in \mathsf{GC}^k_q$ such that for appropriately labelled graphs $G$ it holds that $\hom(F, G) = m$ iff $G \models \varphi_m$.
\end{lem}

\begin{proof}
	The construction proceeds along the same lines as \autoref{lem:homcounts_in_ckq}.
	In fact, we only have to reconsider label deletions.
	Suppose $F$ is obtained from a graph $F'$ by removing a label $\ell$.
	Then per induction hypothesis there exist formulae $\varphi'_m \in \mathsf{GC}_{q-1}^k$ encoding the number of homomorphisms from $F'$ to an arbitrary graph $G$.
	Moreover, per definition there is a label $\ell'$ in $F'$ with $\labfkt(\ell)\labfkt(\ell') \in E(F')$.
	We then define the formula $\varphi_m \in \lGC_{q}^k$ as the disjunction over all decompositions $m = \sum_i c_im_i$, $c = \sum_i c_i$ of formulae
	\begin{equation*}
		\theta \coloneqq \exists^{=c}x_\ell (Ex_{\ell}x_{\ell'} \land \neg \varphi'_0) \land \bigwedge_i \exists^{=c_i}x_\ell(Ex_{\ell}x_{\ell'} \land \varphi'_{m_i}).
	\end{equation*}
	Note that this is only differs from the construction in \autoref{lem:homcounts_in_ckq} by the added guards $Ex_{\ell}x_{\ell'}$.
	To see that this does not limit the strength of the formula, suppose
	\begin{equation*}
		G \models \theta' \coloneqq \exists^{=c}x_\ell \neg \varphi'_0 \land \bigwedge_i \exists^{=c_i}x_\ell\varphi'_{m_i}.
	\end{equation*}
	Then per definition there exist exactly $c_i$ vertices $v_1, \dots, v_{c_i}$ such that for each such vertex it holds that $G(\ell \to v_j) \models \varphi'_{m_i}$.
	From the induction hypothesis we know this is equivalent to $\hom(F', G(\ell \to v_j)) = m_i > 0$.
	But for each $h \in \HOM(F', G(\ell \to v_j))$, we have $h(\labfkt(\ell))h(\labfkt(\ell')) = v_jh(\labfkt(\ell')) \in E(G(\ell \to v_j))$.
	Consequently $G(\ell \to v_j) \models Ex_{\ell}x_{\ell'}$ for all $j$, and thus $G \models \theta$.
	
	Conversely, if $G \models \theta$ but $G \not\models \theta'$, then there must exist a vertex $w\in V(G)$ such that $G(\ell \to w) \models \varphi'_{m_i}$ but $G(\ell \to w) \not\models Ex_{\ell}x_{\ell'}$.
	But then again per induction hypothesis there exists a homomorphism from $F'$ to $G(\ell \to w)$, and thus $w\labfkt(\ell') \in E(G(\ell \to w))$ and $G(\ell \to w) \models Ex_{\ell}x_{\ell'}$.
\end{proof}

The proof of the converse---showing that there exists for each $\psi \in \lGC^k_q$ an $\qg{F} \in \RR\GEkq$ modelling $\psi$---also goes through nearly unchanged.

\begin{lem}
	\label{lem:qg_from_gckq}
	Let $\varphi \in \lGC^k_q$.
	There is an $\qg{F} \in \RR\GEkq$ modelling $\varphi$ for graphs of size $n$.
\end{lem}

\begin{proof}
	Consider the construction used in the proof of \autoref{lem:qg_from_ckq}.
	We show that the same construction yields an $\qg{F} \in \RR\GEkq$ when restricting ourselves to $\lGC$ formulae.
	First note that $\GEkq$ is still closed under products, so we can use the interpolation construction without any restrictions.
	It thus suffices to consider the case that $\varphi = \exists^{{\geq} t} x_\ell (Ex_{\ell}x_{\ell'} \land \theta)$.
	
	Per assumption there then exists an $\qg{F_\theta} \in \RR\GEParam{k}{q-1}$ such that for graphs $G$, it holds that $\hom(\qg{F_\theta}, G) = 1$ if and only if $G \models \theta$.
	Moreover, there is a graph that models the term $Ex_{\ell}x_{\ell'}$ -- namely the two vertex graph $F_E = (\{x, y\}, {xy})$ with $\labfkt_{F_E}(\ell) = x$ and $\labfkt_{F_E}(\ell') = y$.
	
	The product $\qg{F'_\theta} = \qg{F}_\theta \cdot F_E$ then models $Ex_{\ell}x_{\ell'} \land \theta$.
	But note that in all graphs of $\qg{F'_\theta}$, $\labfkt(\ell)$ has a labelled neighbour -- namely $\labfkt(\ell')$.
	This means that the linear combination $\qg{F''}$ obtained by removing the label $\ell$ from all graphs in the linear combination is still in $\RR\GEkq$.
	Completely analogously to \autoref{lem:qg_from_ckq},
	\begin{equation*}
		\qg{F_\varphi} \coloneqq \qgp{F_\theta''}{\{t, \dots, n\}}{\{0, \dots, t-1\}},
	\end{equation*}
	models $\varphi$ for graphs of size $n$.
\end{proof}

The analogues of these two lemmas already sufficed to prove \autoref{thm:ckq_equivalence}.
Here, however, we still need to be mindful of any remaining labels.
Concretely, \autoref{lem:gchomcap} and \autoref{lem:qg_from_gckq} imply the following for $\lGC$ sentences.

\begin{cor}
	\label{cor:gc-thm-with-labels}
	Let $G, v$ and $H, w$ be graphs together with a single labelled vertex.
	Then the following are equivalent.
	\begin{enumerate}
		\item For all $\psi(x) \in \lGC^k_q$, it holds $G, v \models \psi(x) \iff H, w \models \psi(x)$.
		\item $\hom(F, G) = \hom(F, H)$ for all $F \in \GEkq$.
	\end{enumerate}
\end{cor}

While this is already a nice result, ideally we would like to make a statement about general, unlabelled, graphs.
Fortunately, simply removing all labels from $F \in \GEkq$ turns out to induce the equivalence relation described in \autoref{def:gcequiv}.
Let us denote by $\GEkqLL$ the class of graphs in $\GEkq$ with all labels removed.

We prove the two directions of the equivalence in \autoref{thm:guardedEkq_vs_guarded-logic-informal} separately.
First we prove the equivalence with respect to guarded counting logic implies equivalence with regard to \homInd.

\begin{lem}\label{lem:labelsFw}
	For unlabelled graphs $G$ and $H$ such that $G \equiv_{\lGC^k_q} H$,
	it holds that
	\[\hom(\GEkqLL, G) = \hom(\GEkqLL, H).\]
\end{lem}

\begin{proof}
	Let $G, H$ be unlabelled graphs with $G \equiv_{\lGC^k_q} H$.
	Then there exists a bijection $f \colon V(G) \to V(H)$ such that for all $\psi(x) \in \lGC^k_q$ it holds that
	\begin{equation*}
		G, v \models \psi(x) \iff H, f(v) \models \psi(x)
	\end{equation*}
	for all $v \in V(G)$.
	This holds in particular for the formulae $\phi^F_m(x_\ell)$ encoding homomorphism counts from a graph $F \in \GEkq$.
	We thus get
	\begin{align*}
		\hom(F, G(\ell \to v)) = m &\iff G, v \models \phi^F_m(x_\ell)\\
		&\iff H, f(v) \models \phi^F_m(x_\ell)\\
		&\iff \hom(F, H(\ell \to f(v))) = m.
	\end{align*}
	or equivalently
	\begin{equation*}
		\hom(F, G(\ell \to v)) = \hom(F, H(\ell \to f(v))).
	\end{equation*}
	
	Now for some $F \in \GEkq$, we let $F^-$ be the graph obtained from $F$ by removing all labels.
	Then $F^- \in \GEkqLL$, and moreover for each $X' \in \GEkqLL$ there is a graph $X \in \GEkq$ with $X' = X^-$.
	For an arbitrary $F^- \in \GEkqLL$ it then holds
	\begin{align*}
		\hom(F^-, G) &= \sum_{v \in V(G)}\hom(F, G(\ell \to v))\\
		&= \sum_{v \in V(G)} \hom(F, H(\ell \to f(v)))\\
		&= \sum_{w \in V(H)} \hom(F, H(\ell \to w))\\
		&= \hom(F^-, H).
		\qedhere
	\end{align*}
\end{proof}

Before we can prove the backwards direction of the equivalence we need the following folklore lemma.

\begin{lem}[Folklore, cf.\ {\cite[Lemma~4.1.11]{seppelt_homomorphism_2024}}] \label{lem:interpolation}
	Let $I$ and $J$ be finite sets.
	Let $\mathcal{F}$ be a set of pairs of functions $(a, b)$ where $a \colon I \to \mathbb{R}$ and $b \colon J \to \mathbb{R}$.
	Suppose that $\mathcal{F}$ is closed under multiplication, i.e.\@ if $(a, b), (a', b') \in \mathcal{F}$, then $(a \cdot a', b \cdot b') \in \mathcal{F}$ where $a \cdot a'$ denotes the point-wise product of $a$ and $a'$.
	Then the following are equivalent:
	\begin{enumerate}
		\item For all $(a, b) \in \mathcal{F}$, $\sum_{i \in I} a(i) = \sum_{j \in J} b(j)$,
		\item There exists a bijection $\pi \colon I \to J$ such that $a = b \circ \pi$ for all $(a,b) \in \mathcal{F}$.
	\end{enumerate}
\end{lem}

We are now able to prove the backwards direction using \autoref{cor:gc-thm-with-labels}.

\begin{lem}\label{lem:labelsBw}
	Let $G, H$ be unlabelled graphs with $\hom(\GEkqLL, G) = \hom(\GEkqLL, H)$.
	Then $G \equiv_{\lGC^k_q} H$.
\end{lem}
\begin{proof}
	Write $\mathcal{F} $ for the set of pairs of functions $V(G) \to \mathbb{R}$, $v \mapsto \hom(F, G(1 \to v))$ and $V(H) \to \mathbb{R}$, $v \mapsto \hom(F, H(1 \to v))$ for all $F \in \GEkq$ with one label.
	Then $\mathcal{F}$ is as in \autoref{lem:interpolation}: For functions $v \mapsto \hom(F, G(1 \to v))$ and $v \mapsto \hom(F', G(1 \to v))$, their pointwise product is given by $v \mapsto \hom(FF', G(1 \to v))$, and it is $F \odot F' \in \GEkq$.
	Then property $1$ holds because for $F \in \GEkq$ with a single label, we have \[\sum_{v \in V(G)} \hom(F, G(1 \to v)) = \hom(F', G),\] where $F' \in \GEkqLL$ is obtained from $F$ by removing its label.
	Hence, there exists a bijection $\pi \colon V(G) \to V(H)$ from $G$ to $H$ such that, for all $v \in V(G)$ and $F \in \GEkqLL$, it holds that $\hom(F, G(1 \to v)) = \hom(F, H(1 \to \pi(v)))$.
	\autoref{cor:gc-thm-with-labels} then yields the claim.
\end{proof}

\autoref{thm:guardedEkq_vs_guarded-logic-informal} is an immediate corollary of Lemmas~\ref{lem:labelsFw} and~\ref{lem:labelsBw}.
We restate the theorem here more formally.

\begin{thm}
	\label{thm:guardedEkq_vs_guarded-logic}
	Let $k, q \geq 1$.
	Two graphs $G$ and $H$ are $\lGC^k_q$-equivalent
	if and only if 
	they are homomorphism indistinguishable over $\GEkqLL$.
\end{thm}

We remark that in \cite{Abramsky_comonadic_2021} the logic $\lGC$ was studied with comonadic means.
In this work, winning strategies for Duplicator in guarded bisimulation games were characterised as coKleisli morphisms with respect to a suitably defined comonad.
This is in contrast to the comonadic Lov\'asz-type theorem of \cite{dawar_lovasz-type_2021} which applies to logical equivalences which can be characterised as coKleisli isomorphisms.
Thus, \autoref{thm:guardedEkq_vs_guarded-logic} does not seem to be immediate from \cite{Abramsky_comonadic_2021,dawar_lovasz-type_2021}.

\section{Separating \texorpdfstring{$\Ekq$}{Ekq} from \texorpdfstring{$\mathcal{TW}_{k-1} \cap \mathcal{TD}_q$}{TWk-1 intersection TDq} semantically}
\label{sec:separation-sem}

By \autoref{thm:Ekq_tw-td}, the graph class $\Ekq$ is a proper subclass of $\mathcal{TW}_{k-1} \cap \mathcal{TD}_q$, when $q$ is sufficiently larger than $k$.
Despite that, it could well be that the homomorphism indistinguishability relations of the two graph classes (and thus also $\mathsf{C}^k_q$-equivalence) coincide, i.e.\@ $G \equiv_{\Ekq} H$ if and only if $G \equiv_{\mathcal{TW}_{k-1} \cap \mathcal{TD}_q} H$ for all graphs $G$ and $H$.
In this section we show that this is not the case.

\corSemantic*

In general, establishing that the homomorphism indistinguishability relations $\equiv_{\mathcal{F}_1}$ and $\equiv_{\mathcal{F}_2}$ of two graph classes $\mathcal{F}_1 \neq \mathcal{F}_2$ are distinct is a notoriously hard task, cf.\ \cite[Chapter~6]{seppelt_homomorphism_2024}.
Pivotal tools for accomplishing this were introduced by Roberson in \cite{roberson_oddomorphisms_2022}.
He defines the \emph{homomorphism distinguishing closure} $\cl(\mathcal{F})$ of a graph class $\mathcal{F}$ as the graph class
\[
\cl(\mathcal{F}) \coloneqq \{ F \in \mathcal{G} \mid \forall G, H\in \mathcal{G} .\ G \equiv_{\mathcal{F}} H \implies \hom(F, G) = \hom(F, H)\}.
\]
Here, $\mathcal{G}$ denotes the class of all graphs.
A graph class $\mathcal{F}$ is \emph{homomorphism distinguishing closed} if $\mathcal{F} = \cl(\mathcal{F})$.
In this case, for every $F \not\in \mathcal{F}$ there exist two graphs $G$ and $H$ homomorphism indistinguishable over $\mathcal{F}$ and satisfying that $\hom(F, G) \neq \hom(F, H)$.
Therefore, homomorphism distinguishing closed graph classes may be thought of as maximal in terms of homomorphism indistinguishability.

Roberson conjectures that \emph{every graph class which is closed under taking minors and disjoint unions is homomorphism distinguishing closed}.
This conjecture is generally open.
For an overview of the progress made towards a proof, see
\cite[Chapter~6]{seppelt_homomorphism_2024}.
We add to the short list of known homomorphism distinguishing graph classes by proving the following:

\tdclosed*

\ekqclosed*

The proof of \autoref{thm:ekq-closed} follows the proof in \cite{neuen_homomorphism-distinguishing_2023} of the assertion that the class $\mathcal{TW}_{k}$ is homomorphism distinguishing closed for all $k \geq 0$.
Central to it is a construction of highly similar graphs from \cite{roberson_oddomorphisms_2022} which is reminiscent of the CFI-construction \cite{Cai_optimal_1992}.

With these ingredients, it suffices to prove that Duplicator wins the model comparison game characterizing $\mathsf{C}^k_q$-equivalence on these CFI-like graphs constructed over a graph not contained in \Ekq.
To that end, we build a Duplicator strategy from a Robber strategy for the game $\CRkq(G)$, where the idea is to hide the difference between the two graphs in the vertices that correspond to the Robber position and to push this difference along the paths that Robber takes to escape.
The connection between model comparison and node searching games via CFI-constructions is well-known \cite{Dawar_power_2007, Hella_logical_1996}.
Crucial for the aforementioned argument is that Robber wins the \emph{non-monotone} node searching game.
Indeed, it cannot be assumed that Cop plays monotonously since he must shadow Spoiler's moves.

We start by recalling the construction of highly similar graphs from \cite{roberson_oddomorphisms_2022}.
Let $G$ be a graph and $U \subseteq V(G)$.
Write $\delta_{v, U} \coloneqq |\{v\} \cap U|$ for every $v\in V(G)$.
The graph $G_U$ has vertices $(v, S)$ for every $v \in V(G)$ and $S \subseteq E(v)$ with $|S| \equiv \delta_{v, U} \mod 2$ where $E(v)$ denotes the set of edges incident to $v$.
It contains an edge $(v, S)(u, T)$ whenever $uv \in E(G)$ and $uv \not\in S \symdiff T$ where $S\symdiff T$ denotes the symmetric difference of $S$ and $T$.
Write $\rho \colon G_U \to G$ for the homomorphism sending $(v, S)$ to $v$.

Note that the same construction also appears in \cite{Furer_rounds_2001} and may also be referred to as CFI-construction with inner vertices only.
Central is the following result of \cite{roberson_oddomorphisms_2022} regarding homomorphisms to these graphs.

\begin{lemC}[{\cite[Corollary~3.7]{roberson_oddomorphisms_2022}}] \label{lem:roberson3.7}
	For a connected graph $G$ and $U \subseteq V(G)$, the following are equivalent:
	\begin{enumerate}
		\item $|U|$ is even,
		\item $G_\emptyset \cong G_U$,
		\item $\hom(G, G_\emptyset) = \hom(G, G_U)$.
	\end{enumerate}
\end{lemC}

In virtue of \cite[Lemma~3.2]{roberson_oddomorphisms_2022}, we may write $G_0$ for $G_\emptyset$ and $G_1$ for any of the isomorphic $G_U$ with $U \subseteq V(G)$ of odd size.
The main lemma is the following:

\begin{lem}\label{prop:closedness-ekq-core}
	Let $k,q \geq 1$ and $G$ be a connected graph.
	If $G \notin \Ekq$ then $G_0 \equiv_{\Ekq} G_1$.
\end{lem}

To prove this lemma we first need to define the model comparison game characterizing $\mathsf{C}^k_q$-equivalence and show how a winning strategy for Robber in $\CR^k_q(G)$ can be turned into a winning strategy for Duplicator.

Let $G$ and $H$ be graphs and $k \geq 1$.
For a partial function $\gamma \colon [k] \rightharpoonup V(G) \times V(H)$, write $\gamma_G \colon [k] \rightharpoonup V(G)$ and $\gamma_H \colon [k] \rightharpoonup V(H)$ for the maps obtained from $\gamma$ by projecting to the respective components.
Then $\gamma$ is a \emph{partial isomorphism} if for all $i, j \in \dom(\gamma)$, $\gamma_G(i) = \gamma_G(j) \Leftrightarrow \gamma_H(i) = \gamma_H(j)$ and $\gamma_G(i)\gamma_G(j) \in E(G) \Leftrightarrow \gamma_H(i)\gamma_H(j) \in E(H)$.
Note that $\gamma = \emptyset$ is a partial isomorphism for any two graphs $G$ and $H$.

The \emph{bijective $k$-pebble game} on graphs $G$ an $H$ is played by two players Spoiler and Duplicator.
The positions are partial functions $\gamma \colon [k] \rightharpoonup V(G) \times V(H)$.
Spoiler wins in $0$ rounds, if $\gamma$ is not a partial isomorphism.
If $|V(G)| \neq |V(H)|$, then Spoiler wins in $1$ round.
At any round $i \geq 1$ starting in position $\gamma$,
\begin{itemize}
	\item Spoiler picks $p \in [k]$,
	\item Duplicator supplies a bijection $f \colon V(G) \to V(H)$,
	\item Spoiler picks $v \in V(G)$.
\end{itemize}
The position is updated to the partial map $\gamma' \coloneqq \gamma[p \mapsto (v, f(v))]$ with domain $\dom(\gamma) \cup \{p\}$ and
\[
\gamma' \colon q \mapsto \begin{cases}
	(v, f(v)), & \text{if } q = p,\\
	\gamma(q), & \text{otherwise}.
\end{cases}
\]
Spoiler wins after round $i$ if $\gamma'$ is not a partial isomorphism.
Otherwise Duplicator wins the $i$-round bijective $k$-pebble game.

The following \autoref{thm:hella} is implicit in many sources \cite{Cai_optimal_1992, Hella_logical_1996}.
As we were not able to find an exact reference, we choose to give a proof sketch:

\begin{thm} \label{thm:hella}
	Let $k\geq 1$ and $q \geq 0$.
	Let $G$ and $H$ be graphs and $\gamma \colon [k] \rightharpoonup V(G) \times V(H)$.
	Then the following are equivalent:
	\begin{enumerate}
		\item Duplicator wins the $q$-round bijective $k$-pebble game starting in position $\gamma$,
		\item For all formulae $\phi(\boldsymbol{x}) \in \mathsf{C}^k_q$ with $|\gamma|$ many free variables, $G, \gamma_G \models \phi(\boldsymbol{x})$ if and only if $H, \gamma_H \models \phi(\boldsymbol{x})$.
	\end{enumerate}
\end{thm}

\begin{proof}
	By induction on $q$.
	For $q = 0$, Duplicator wins the $0$-round game iff the starting position is a partial isomorphism.
	This is precisely what can be defined using formulae in $\mathsf{C}^{k}_0$ with $|\gamma|$ free variables.
	
	Let $q > 1$.
	Suppose Duplicator wins the $q$-round bijective game from $\gamma$.
	Let $\phi(\boldsymbol{x}) \in \mathsf{C}^k_q$ be a formula with $|\gamma|$ free variables.
	We may suppose without loss of generality that $\phi$ is of the form $\exists^{=n} x_p.\ \psi(\boldsymbol{x})$ for some $n \geq 0$ and $p \in [k]$.
	If Spoiler picks $p$, then Duplicator can supply a bijection $f \colon V(G) \to V(H)$ such that for all $v \in V(G)$, Duplicator wins the $(q-1)$-round bijective game from $\gamma[p \mapsto (v, f(v))]$.
	By the inductive hypothesis, $G, \gamma_G[p \mapsto v] \models \psi(\boldsymbol{x})$ if and only if $H, \gamma_H[p \mapsto f(v)] \models \psi(\boldsymbol{x})$ for all $v \in V(G)$ and all $\psi \in \mathsf{C}^k_{q-1}$.
	Hence, $G, \gamma_G \models \phi(\boldsymbol{x})$ if and only if $H,\gamma_H \models \phi(\boldsymbol{x})$ as desired.
	
	Conversely, suppose that $G, \gamma_G \models \phi(\boldsymbol{x})$ if and only if $H, \gamma_H \models \phi(\boldsymbol{x})$ for all $\phi \in \mathsf{C}^k_q$.
	We claim that Duplicator wins the $q$-round game starting in $\gamma$.
	Suppose Spoiler picks $p \in [k]$.
	Then there exists a bijection $f \colon V(G) \to V(H)$ such that $G, \gamma_G[p \mapsto v] \models \psi(\boldsymbol{x})$ iff $H, \gamma_H[p \mapsto f(v)] \models \psi(\boldsymbol{x})$ for all $\psi \in \mathsf{C}^k_{q-1}$ and all $v \in V(G)$.
	This is because occurrences of $\mathsf{C}^k_{q-1}$-types can be counted in $\mathsf{C}^k_q$.
	Duplicator may play this bijection.
	For any choice $v \in V(G)$ of Spoiler, it follows inductively that Duplicator wins the $(q-1)$-round game starting from the pebble positions $\gamma[p \mapsto (v, f(v))]$.
\end{proof}

\begin{lem}[{\cite[Lemma~4.3]{neuen_homomorphism-distinguishing_2023}, cf.\@ \cite{Dawar_power_2007}}]
	\label{lem:neuen4.3}
	Let $G$ be a connected graph.
	
	Let $u, v\in V(G)$.
	Let $P$ be a path in $G$ from $u$ to $v$.
	Then there exists an isomorphism $\phi \colon G_{\{ u \}} \to G_{\{v\}}$ such that 
	\begin{enumerate}
		\item $\rho(\phi(w, S)) = w$ for all $(w, S) \in V(G_{\{u\}})$ and
		\item $\phi(w, S) = (w, S)$ for all $(w, S) \in V(G_{\{u\}})$ with $w \in V(G) \setminus P$.
	\end{enumerate}
\end{lem}

We can now conduct the proof of \autoref{prop:closedness-ekq-core} similar to \cite{Furer_rounds_2001}.

\begin{proof}[Proof of \autoref{prop:closedness-ekq-core}]
	Let $G\notin\Ekq$.
	By \autoref{thm:Ekq-cops}, we get that Robber has a winning strategy in the non-monotone $\CR^k_q(G)$.
	Given this strategy, we provide a winning strategy for Duplicator in the $q$-round bijective $k$-pebble game on $G_\emptyset$ and $G_U = G_{\{u_0\}}$ for some fixed vertex $u_0 \in V(G)$ with initial position $\emptyset$.
	Clearly, $|V(G_\emptyset)| = |V(G_U)|$, so the game commences.
	
	For a position $\gamma \colon [k] \rightharpoonup V(G_\emptyset) \times V\left(G_{\{u_0\}}\right)$, we interpret the partial map $\rho \circ \gamma_{G_\emptyset}$ as a position of the cop player in $\CR^k_q(G)$.
	More precisely, the cops are placed on the vertices $\{\rho(\gamma_{G_\emptyset}(p)) \mid p \in \dom(\gamma)\}$.

	Throughout the game, Duplicator maintains the following invariant.
	For $1 \leq i \leq q$, before the $i$-th round of the bijective $k$-pebble game, with the current position being $\gamma \colon [k] \rightharpoonup V(G) \times V(H)$, there is a vertex $u \in V(G)$ and an isomorphism $\phi \colon G_{\{u\}} \to G_{\{u_0\}}$ such that
	\begin{enumerate}[label=(I.\arabic*), labelindent=0pt, itemindent=*, leftmargin=*]
		\item $\rho(\phi(w, S)) = w$ for all $(w, S) \in V(G_{\{u\}})$,\label{ax:inv1}
		\item $u$ does not appear in the image of $\rho \circ \gamma_{G_\emptyset} = \rho \circ \gamma_{G_{\{u_0\}}}$,\label{ax:inv2}
		\item $\phi$ sends pebbled vertices to pebbled vertices, i.e.\@ $\phi \circ \gamma_{G_\emptyset} = \gamma_{G_{\{u_0\}}}$,\label{ax:inv3}
		\item Robber wins the non-monotone $\CR^k_{q-i+1}(G)$ with Cop placed on $\rho \circ \gamma_{G_\emptyset}$ and Robber starting in $u \in V(G)$.\label{ax:inv4}
	\end{enumerate}
	
	To clarify the indexing, observe that initially (before the first round) with position $\gamma = \emptyset$ we require Robber to be able to win $\CR^k_q(G)$ where no cop is placed on the graph.
	
	Initially, let $u \in V(G)$ denote the vertex chosen by Robber.
	
	Let $P$ denote a shortest path from $u$ to $u_0$.
	By \autoref{lem:neuen4.3}, there exists an isomorphism $\phi \colon G_{\{u\}} \to G_{\{u_0\}}$ satisfying all stipulated properties.
	
	The invariant is maintained as follows: Let $\gamma$ denote the current position and let $u \in V(G)$ be the current position of Robber.
	When Spoiler picks a pebble $p \in [k]$, Duplicator constructs a bijection $f \colon V(G_\emptyset) \to V\left(G_{\{u_0\}}\right)$ as follows:
	For every $v \in V(G)$ Duplicator creates a copy of the Cops-and-Robber game where Cop updates his position to $(\rho \circ \gamma_{G_\emptyset})[p \mapsto v]$ (by possibly first lifting the cop associated with pebble $p$).
	Write $u_v \in V(G)$ for the vertex where robber would escape to.
	Let $P_v$ denote a shortest path from $u$ to $u_v$ avoiding the vertices which appear in $\img (\rho \circ \gamma_{G_\emptyset})[p \mapsto v] \cap \img \left(\rho \circ \gamma_{G_\emptyset}\right)$.
	Let $\psi_v$ denote the isomorphism $G_{\{u_v\}} \to G_{\{u\}}$ from \autoref{lem:neuen4.3} for $P_v$.
	Duplicator plays $f \colon x \mapsto \phi(\psi_{\rho(x)}(x))$, that is every vertex $x$ is mapped via the bijection that corresponds to the copy of the game where Cop moved a cop to $\rho(x)$.
	
	Now Spoiler picks a vertex $x \in V(G_\emptyset)$.
	Then the map $\phi' \coloneqq \phi \circ \psi_{\rho(x)}$ satisfies the properties of the invariant and $f(x) = \phi'(x)$.
	Indeed, \ref{ax:inv1} is immediate from \autoref{lem:neuen4.3}.
	\ref{ax:inv2} holds because the robber was not yet captured, i.e.\@ $\rho(x) \neq u'$ in the notation from above.
	\ref{ax:inv3} holds since $P_v$ avoids all vertices in $\img (\rho \circ \gamma_{G_\emptyset})[p \mapsto v] \cap \img \rho \circ \gamma_{G_\emptyset}$ and $f(x) = \phi'(x)$.
	Clearly, \ref{ax:inv4} holds.
	Duplicator keeps the copy of the Cops-and-Robber game where Cop played to $(\rho \circ \gamma_{G_\emptyset})[p \mapsto v]$.
	
	It remains to argue that the updated $\gamma$ is a partial isomorphism.
	By the properties of the invariant, $\phi \colon G_{\{u\}} \to G_{\{u_0\}}$ is an isomorphism and $u$ does not appear in the image of $\rho \circ \gamma_{G_\emptyset} = \rho \circ \gamma_{G_{\{u_0\}}}$.
	Hence, $\phi$ restricts to an isomorphism $G_\emptyset - \rho^{-1}(u) \to G_{\{u_0\}} - \rho^{-1}(u)$.
	The partial map $\gamma$ coincides with this isomorphism by \ref{ax:inv3}.
\end{proof}

We are finally able to prove the \homDistClure\ of \Ekq\ and $\TD_q$.

\begin{proof}[Proof of \autoref{thm:ekq-closed}]
	By \autoref{prop:closedness-ekq-core} and \autoref{lem:roberson3.7},
	for every connected graph $G \not\in \Ekq$,
	it is $G_0 \equiv_{\Ekq} G_1$ and
	$\hom(G, G_0) \neq \hom(G, G_1)$.
	It remains to deal with disconnected graphs $G \not\in \Ekq$.
	This technicality is taken care of by \cite[Corollary~7.1.5]{seppelt_homomorphism_2024}
	when observing that $\Ekq$ is closed under disjoint unions and taking summands by \autoref{cor:minor-closed}.
\end{proof}

\autoref{thm:td-closed} follows directly from \autoref{thm:ekq-closed}, as $\TD_q=\EParam{q}{q}$.
We finally derive \autoref{cor:semantic}.

\begin{proof}[Proof of \autoref{cor:semantic}]
	For graph classes $\mathcal{F}_1$ and $\mathcal{F}_2$,
	the homomorphism indistinguishability relations $\equiv_{\mathcal{F}_1}$ and $\equiv_{\mathcal{F}_2}$ coincide
	if and only if $\cl(\mathcal{F}_1) = \cl(\mathcal{F}_2)$.
	By \cite{neuen_homomorphism-distinguishing_2023} and Theorems \ref{thm:ekq-closed} and \ref{thm:td-closed}, it holds that
	$\mathcal{TW}_{k-1}$, $\Ekq$, and $\mathcal{TD}_q$ are homomorphism distinguishing closed.
	As the intersection of homomorphism distinguishing closed sets is homomorphism distinguishing closed 
	\cite[Lemma~6.1]{roberson_oddomorphisms_2022},
	the set $\mathcal{TW}_{k-1} \cap \mathcal{TD}_q$ is also homomorphism distinguishing closed.
	Hence, $\equiv_{\Ekq}$ and  $\equiv_{\mathcal{TW}_{k-1} \cap \mathcal{TD}_q}$
	coincide if and only if $\Ekq = \mathcal{TW}_{k-1} \cap \mathcal{TD}_q$.
	The desired statement now follows from \autoref{thm:Ekq_tw-td}.
\end{proof}

\section{Conclusion}
\label{sec:deep-wide-conclusion}

We study the expressive power of the counting logic fragment $\mathsf{C}^k_q$ with tools from homomorphism indistinguishability.
After giving an elementary and uniform proof of theorems from \cite{dawar_lovasz-type_2021, Dvorak_recognizing_2010, Grohe_counting_2020}, we show that the graph class $\Ekq$, whose homomorphism indistinguishability relation characterises $\mathsf{C}^k_q$-equivalence, is a proper subclass of $\TW_{k-1} \cap \TD_q$.
Finally, we show that \Ekq\ and $\TD_q$ are \homDistCl\ which implies that homomorphism indistinguishability over $\Ekq$ is not the same as homomorphism indistinguishability over $\TW_{k-1} \cap \TD_q$.

One key ingredient to the above results is finding a node searching game that can characterise the graph classes.
Exploring whether intertwining node searching and model comparison games can help to confirm Roberson's conjecture in other cases seems a tempting direction for future research.

With slight reformulations, our results might yield insights into the ability of the Weisfeiler--Leman algorithm to determine subgraph counts after a fixed number of rounds \cite{neuen_homomorphism-distinguishing_2023,rattan_weisfeiler-leman_2023}.

To use this connection of the games it is important to characterise the graph class using a node searching game that allows non-monotone moves.
Towards this characterization for the class \Ekq\ we find a new proof for the monotony of the Cops-and-Robber game for treewidth using a breadth-first \enquote{cleaning up} procedure along the pre-tree-decomposition (which may temporarily lose the property of representing a strategy).
As an interesting observation we obtain that cop moves into the back country, i.\,e.~to positions that are not part of the boundary, can be ignored and the depth of the exact \preTreeDec\ is the number of cops placed into the robber escape space:
We observe that in the proof of \autoref{cl:depth-leafs-new} of \autoref{lem:tw-ptw} where we compute how much larger the depth at some node $t^i_j$ at step $i$ is than at the \considered\ node $s_i$ the depth increases only if the node $t^i_j$ is branching by \autoref{obs:self-loops} as $t^i_j$ has a child where the cone contains only a self-loop and hence this is a move into the robber space.
\begin{cor}
	$\dep(T,r,\beta_{n_T},\gamma_{n_T})\leq \max_{\ell\in L(T)} |\{t\preceq \ell\mid t \text{ is branching} \}|$.
\end{cor}

In the future, it would be interesting to know if it is possible to give a proof that entirely argues with game strategies, that is a procedure where all intermediate steps still represent strategies.
We also leave open whether a dual object similar to brambles can be defined for bounded depth treewidth. 
Finally, given a winning strategy for $k$ cops in $q$ rounds, it would be interesting to know if it is possible to bound the number of cops necessary for winning in only $q-1$ rounds or the number of rounds given only $k-1$ cops in terms of $k$ and $q$, given that the cop player still can win.
This might also give insight into the question how the \homInd\ relations of \Ekq\ and $\EParam{k'}{q'}$ relate to each other.

\section*{Acknowledgment}
\noindent The authors thank the anonymous reviewers for their fruitful feedback.

\bibliography{literature}
\bibliographystyle{alphaurl}

\end{document}